\documentclass[cleveref, autoref, thm-restate,11pt,letterpaper,onecolumn,accepted=2024-03-21]{quantumarticle}
\pdfoutput=1

\usepackage{mathtools}
\usepackage{amsmath}
\usepackage{amssymb}
\usepackage{amsthm}
\usepackage[linesnumbered,ruled,vlined]{algorithm2e}
\usepackage{upgreek}
\usepackage{tikz}
\usepackage{graphicx}
\usepackage{forest}
\usepackage{mathrsfs}
\usetikzlibrary{graphs,quotes}
\usepackage{fancyhdr}
\usepackage{varioref}
\usepackage{verbatim} 
\usepackage{multicol}
\usepackage{enumerate}
\usepackage[normalem]{ulem}
\usepackage{caption}
\usepackage{subcaption}
\usepackage{float}
\usepackage[margin=1in]{geometry}
\usepackage{academicons}
\definecolor{orcidlogocol}{HTML}{A6CE39}

\usepackage{mathrsfs}
\usepackage{mathtools}

\usepackage{url}\urlstyle{same}
\usepackage{xspace}
\usepackage{thm-restate}

\usepackage[TU]{fontenc}



\newtheorem{theorem}{Theorem}
\newtheorem{lemma}[theorem]{Lemma}

\newtheorem{definition}[theorem]{Definition}

\newtheorem*{conjecture*}{Conjecture}

\theoremstyle{definition}

\usepackage{microtype}

\usepackage{tikz}
\usetikzlibrary{calc}

\usepackage{epstopdf}

\usepackage{hyperref}
\hypersetup{
 colorlinks = {true},
 pdftitle={Improved Quantum Query Complexity on Easier Inputs}
}
\usepackage{xcolor}
\definecolor{darkred}  {rgb}{0.5,0,0}
\definecolor{darkblue} {rgb}{0,0,0.5}
\definecolor{darkgreen}{rgb}{0,0.5,0}
\hypersetup{
  urlcolor   = blue,         
  linkcolor  = darkblue,     
  citecolor  = darkgreen,    
  filecolor  = darkred       
}

\usepackage[capitalise]{cleveref}
\crefname{lemma}{Lemma}{Lemmas}
\crefname{proposition}{Proposition}{Propositions}
\crefname{definition}{Definition}{Definitions}
\crefname{theorem}{Theorem}{Theorems}
\crefname{conjecture}{Conjecture}{Conjectures}
\crefname{corollary}{Corollary}{Corollaries}
\crefname{section}{Section}{Sections}
\crefname{appendix}{Appendix}{Appendices}
\crefname{figure}{Fig.}{Figs.}
\crefname{algorithm}{Alg.}{Algs.}
\crefname{equation}{Eq.}{Eqs.}
\crefname{table}{Table}{Tables}
\crefname{claim}{Claim}{Claims}
\crefname{line}{Line}{Lines}
\crefname{item}{Item}{Items}

\newcommand{\sop}[1]{{\mathcal #1}}
\newcommand{\ket}[1]{|#1\rangle}
\newcommand{\bra}[1]{\langle#1|}
\newcommand{\proj}[1]{|#1\rangle\!\langle#1|}
\newcommand{\ketbra}[2]{|{#1}\rangle\!\langle{#2}|}
\newcommand{\braket}[2]{\langle{#1}|{#2}\rangle}
\DeclareMathOperator*{\argmin}{arg\,min}
\newcommand{\Lal}[1]{{\Lambda^{\!{#1}}}}
\newcommand{\U}[3]{{U({#1},{#2},{#3})}}
\newcommand{\w}[2]{{w({#1},{#2})}} 
\renewcommand{\wp}[2]{{w_+({#1},{#2})}}
\newcommand{\wm}[2]{{w_-({#1},{#2})}}
\newcommand{\wpo}[1]{{w_+({#1})}}
\newcommand{\wmo}[1]{{w_-({#1})}}
\newcommand{\Aa}[1]{{\tilde{A}^{#1}}}
\newcommand{\PtHx}{\Pi_{x}}
\newcommand{\PHx}{\Pi_{{H}(x)}}
\newcommand{\Lalp}{{\Uplambda^{\!\alpha,\hat{\varepsilon}}}}
\newcommand{\PHxp}{\Uppi_{x}}
\newcommand{\Up}[4]{{\sop U({#1},{#2},{#3},{#4})}}
\newcommand{\iner}[1]{\langle{#1}|{#1}\rangle}

\newcommand{\PrCont}[1][i]{Pr(cont\textrm{ }#1)} 
\newcommand{\PrErr}[1][i]{Pr(err\textrm{ }#1)} 
\newcommand{\PrFinal}{Pr(final)} 
\def\polylog{\operatorname{polylog}}

\title{Improved Quantum Query Complexity on Easier Inputs}
\author[1]{Noel T. Anderson}
\author[1]{Jay-U Chung}
\thanks{Current affiliation: IBM T. J. Watson Research Center, Yorktown Heights, NY, USA}
\author[1]{Shelby Kimmel}
\orcid{0000-0003-0726-4167}
\email{skimmel@middlebury.edu}
\author[3]{Da-Yeon Koh}
\thanks{Contributed to this work during an internship at Middlebury College}
\author[1,4]{Xiaohan Ye}
\affil[1]{Middlebury College, Middlebury, VT, USA}
\affil[3]{Williams College, Williamstown, MA, USA}
\affil[4]{Brown University, Providence, RI, USA}

\date{}

\begin{document}

\maketitle

\vspace{-.5cm}
\begin{abstract} 
Quantum span program algorithms for function evaluation sometimes have reduced query complexity when promised that the input has a certain structure. We design a modified span program algorithm to show these improvements persist even without a promise ahead of time, and we extend this approach to the more general problem of state conversion.  As an application, we prove exponential and superpolynomial quantum advantages in average query complexity for several search problems, generalizing Montanaro's Quantum Search with Advice [Montanaro, TQC 2010]. 
\end{abstract}


\section{Introduction}\label{sec:intro} 

Quantum algorithms often perform
 better when given a promise on the input. For example, if we know that there
 are $M$ marked items out of $N$, or no marked items at all, then Grover's
 search can be run in time and query complexity $O(\sqrt{N/M})$, rather than
 $O(\sqrt{N})$, the worst case complexity with a single marked item \cite
 {groverQuantumMechanicsHelps1997a,aharonovQuantumComputation1999}.

In the case of Grover's algorithm, a series of results
\cite{boyerTightBoundsQuantum1998,brassardQuantumAmplitudeAmplification2000,brassardQuantumCounting1998}
removed the promise; if there are $M$ marked items, there is a quantum
search algorithm that runs in $O(\sqrt{N/M})$ complexity, even without
knowing the number of marked items ahead of time. Most relevant for our
work, several of these algorithms involve iteratively running Grover's
search with exponentially growing runtimes \cite
{boyerTightBoundsQuantum1998,brassardQuantumAmplitudeAmplification2000}
until a marked item is found.

Grover's algorithm was one of the first quantum query algorithms
discovered \cite{groverQuantumMechanicsHelps1997a}. Since that time, span
programs and the dual of the general adversary bound were developed,
providing frameworks for creating optimal query algorithms for function
decision problems
\cite{reichardtSpanProgramsQuantum2009,reichardtReflectionsQuantumQuery2011}
and nearly optimal algorithms for state conversion problems, in which the goal is to generate a quantum state based on an oracle and an input state \cite{leeQuantumQueryComplexity2011}. 
Moreover, these frameworks are also useful in practice \cite{beigiQuantumSpeedupBased2019,belovsSpanProgramsQuantum2012,belovsSpanProgramsFunctions2012,belovsTightQuantumLower2020a,cadeTimeSpaceEfficient2018,delorenzoApplicationsQuantumAlgorithm2019a}.

For some span program algorithms, analogous to multiple marked items in
Grover's search, there are features which, if promised to exist, allow for
improvement over the worst case query complexity. For example, a span program
algorithm for deciding $st$-connectivity uses $O\left(n^{3/2}\right)$ queries on an
$n$-vertex graph. However, if \textrm{promised} that the shortest path, if it exists,
has length at most $k$, then the problem can be solved with $O(\sqrt
{k}n)$ queries \cite{belovsSpanProgramsQuantum2012}.

Our contribution is to remove the requirement of the promise; we improve
the query complexity of generic span program and state conversion algorithms in the
case that some speed-up inducing property (such as multiple marked items or a 
short path) is present, even without knowing about the structure in advance.
One might expect this is
trivial: surely if an algorithm produces a correct result with fewer queries
when promised a property is present, then it should also produce a correct
result with fewer queries without the promise if the property still holds?
While this is true and these algorithms always output a result, even if run
with fewer queries, the problem is that they don't produce a flag of
completion, and their output cannot always be easily verified. Without a flag of
completion or a promise of structure, it is impossible to be confident that
the result is correct. Span program and state conversion algorithms differ
from Grover's algorithm in their lack of a flag; in Grover's algorithm one can
use a single query to test whether the output is a marked item, thus flagging
that the output of the algorithm is correct, and that the algorithm has run
for a sufficiently long time. We note that when span program algorithms
previously have claimed an improvement with structure, they always included a
promise, or they give the disclaimer that running the algorithm will be
incorrect with high probability if the promise is not known ahead of time to
be satisfied, e.g. Ref. \cite[App. C.3]{cadeTimeSpaceEfficient2018}.

We use an approach that is similar to the
iterative modifications to Grover's algorithm; we run subroutines for
exponentially increasing times, and we have novel ways to flag when the
computation should halt. On the hardest inputs, our algorithms 
match the asymptotic performance of existing bounded error algorithms. On easier inputs, our approach on average matches the asymptotic performance, up to log factors, of existing algorithms when
those existing algorithms additionally have an optimal promise.

Because our algorithms use fewer queries on easier inputs without needing to
know they are easier inputs, they provide the possibility of improved
average query complexity over input oracles when there is a distribution of easier and
harder inputs. In this direction, we generalize a result by Montanaro that
showed a super-exponential quantum advantage in average query complexity for
the problem of searching for a single marked item under a certain
distribution \cite{montanaro2010quantum}. In particular, we provide a
framework for proving similar advantages using quantum algorithms based on
classical decision trees, opening up the potential for a broader range of
applications than the approach used by Montanaro. We apply this technique to
prove an exponential and superpolynomial quantum advantage in average query
complexity for searching for multiple items and searching for the first
occurring marked items, respectively.

Where prior work showed improvements for span program algorithms with a
promise, our results immediately provide an analogous improvement without the
promise:
\begin{itemize}
\item For undirected $st$-connectivity described above, our algorithm determines whether there is a path from $s$ to
$t$ in an $n$-vertex graph with $\tilde{O}(\sqrt{k}n)$ queries if there is a path of length $k$,
 and if there is no path,
the algorithm uses $\tilde{O}(\sqrt{nc})$ queries, where $c$ is the size of
the smallest cut between $s$ and $t$. In either case, $k$ and $c$ need not be known ahead of time.
\item For an $n$-vertex undirected graph, we can determine if it is connected in $\tilde{O}(n\sqrt{R})$ queries, where $R$ is the average
effective resistance, or not connected in $\tilde{O}(\sqrt{n^{3}/\kappa})$ queries, where $\kappa$ is the number of components. These query complexities hold without knowing $R$ or $\kappa$ ahead of time. See Ref.
\cite{jarretQuantumAlgorithmsConnectivity2018} for the promise version of this problem.
\item For cycle detection on an $n$-vertex undirected graph,
whose promise version was analyzed in Ref.
\cite{delorenzoApplicationsQuantumAlgorithm2019a}, if the circuit rank is $C$, then our algorithm will detect a cycle in $\tilde{O}(\sqrt{n^{3}/C})$ queries, while if there is
no cycle and at most $\mu$ edges, the algorithm will decide there is no cycle
in $\tilde{O}(\mu\sqrt{n})$ queries. This holds
without knowing $C$ or $\mu$ ahead of time.
\end{itemize}

To achieve our results for decision problems, we modify the original span program function
evaluation algorithm to create two one-sided error subroutines. In the
original span program algorithm, the final measurement tells you with high
probability whether $f(x)=1$ or $f(x)=0.$ In one of our subroutines, the final
measurement certifies that with high probability $f(x)=1$, providing our flag
of completion, or it signals that more queries are needed to determine
whether $f(x)=1$. The other behaves similarly for $f(x)=0$. By interleaving
these two subroutines with exponentially increasing queries, we achieve our
desired performance.

The problem is more challenging for state conversion, as the standard version
of that algorithm does not involve any measurements, and so there is nothing
to naturally use as a flag of completion. We thus design a novel probing
routine that iteratively tests exponentially increasing query complexities
until a sufficient level is reached, before then running an algorithm similar
to the original state conversion algorithm.

While we analyze query complexity, the algorithms we create have average time
complexity on input $x$ that scales like $O(T_U\mathbb{E}[Q_x])$, where $\mathbb{E}[Q_x]$ is the average query
complexity on input $x$, and $T_U$ is the time complexity of implementing an
input-independent unitary. Since the existing worst-case span program and
state conversion algorithms have time complexities that scale as $O
(\max_x T_U \mathbb{E}[Q_x])$, our algorithms also improve in average time complexity
relative to the original algorithms on easier inputs. For certain problems, like
$st$-connectivity \cite{belovsSpanProgramsQuantum2012} and search 
\cite{cornelissen2020span}, it is known that $T_U=\tilde{O}(1)$, meaning that the
query complexities of our algorithms for these problems match the time
complexity up to log factors.

\subsection{Directions for Future Work}

Ambainis and de Wolf show that while there is no
quantum query advantage for the problem of \textsc{majority} in the worst case, on 
average there is a quadratic quantum advantage \cite{ambainis2001average}. However, 
their quantum algorithm uses a technique that is specific to the problem of 
\textsc{majority}, and it is not clear how it might extend to other problems. On
 the other hand, since our approach is based on span programs, a generic optimal 
 framework, it may
provide opportunities of proving similar results for more varied problems.

In the original state conversion algorithm, to achieve an error of
$\varepsilon$ in the output state (by some metric), the query complexity
scales as $O\left(\varepsilon^{-2}\right)$ \cite{leeQuantumQueryComplexity2011}. In our result, the query
complexity scales as $O\left(\varepsilon^{-5}\right)$. While this does not
matter for applications like discrete function evaluation, as considered in \cref{sec:tree_application}, in cases where
accuracy must scale with the input size, this error term could overwhelm any
advantage from our approach, and so it would be beneficial to improve this
error scaling.

Ito and Jeffery
\cite{itoApproximateSpanPrograms2019} give an algorithm to estimate the positive witness size
(a measure of how easy an instance is) with fewer queries on easier inputs. While there are similarities between
our approaches, neither result seems to directly imply the other. Better
understanding the relationship between these strategies could lead to
improved algorithms for determining properties of input structure for both
span programs and state conversion problems.

Our work can be contrasted with the work of Belovs and Yolcu 
\cite{belovs2023one}, which also has a notion of reduced query complexity on
easier inputs.  Their 
work focuses on the ``Las Vegas query complexity,'' which is
related to the amount of the state that
a controlled version of the oracle acts on over the course of the algorithm,
and which is an input-dependent quantity. They show the ``Monte Carlo query complexity,'' what we call the query complexity, is bounded by the Las Vegas query complexity of the worst-case input. 
We suspect that using techniques similar to those in our work, it would 
be possible to modify their algorithm
to obtain an algorithm with input-dependent average
query complexity 
that scales roughly with the geometric mean of the Las Vegas and Monte Carlo
complexities for that input, without
knowing anything about the input ahead of time.

\section{Preliminaries}\label{sec:Prelims}

\textbf{Basic Notation:} 
For $n>2$, let $[n]$ represent $\{1,2,\dots,n\}$, while for $n=2$, $[n]=\{0,1\}$. We use $\log$ to denote base 2 logarithm. For set builder notation like $\{r_z:z\in Z\}$ we will frequently use $\{r_z\}_{z\in Z}=\{r_z\}$, where we drop the subscript
outside the curly brackets if clear from context. We
denote a linear operator from the space $V$ to the space $U$ as $\sop L
(V,U)$. We use $I$ for the identity operator. (It will be clear from context
which space $I$ acts on.) Given a projection $\Pi$, its complement is
$\overline{\Pi}=I-\Pi.$ For a matrix $M$, by $M_{xy}$ or $(M)_{xy}$, we
denote the element in the $x^\textrm{th}$ row and $y^\textrm{th}$ column of $M$. By 
$\tilde{O}$, we denote big-O notation that ignores log factors. The $l_2$-norm of a
vector $\ket{v}$ is denoted by $\|\ket{v}\|$. For any unitary $U$, let
$P_\Theta(U)$ be the projection onto the eigenvectors of $U$ with phase at
most $\Theta$. That is, $P_\Theta(U)$ is the projection onto 
$\textrm{span}\{\ket{u}:U\ket{u}=e^{i\theta}\ket{u}\textrm{ with }|\theta|\leq\Theta\}$. For a function $f:D\rightarrow [m]$, we define $f^{-1}(b)=\{x\in D:f(x)=b\}.$

\subsection{Quantum Algorithmic Building Blocks}

We consider quantum query algorithms, in which one can access
a unitary $O_x$, called the oracle, which encodes a string $x\in X$ for
$X\subseteq [q]^n$, $q\geq 2$. The oracle acts on the
Hilbert space $\mathbb{C}^n\otimes \mathbb{C}^q$ as $O_x\ket{i}\ket{b}=\ket
{i}\ket{x_i+b\textrm{ mod }q}$, where $x_i\in[q]$ is the $i^\textrm{th}$ element of
$x$.

Given $O_x$ for $x\in X$,
we would like to perform a computation that depends on $x$. The query
complexity is the minimum number of uses of the oracle required such that for
all $x\in X$, the computation is successful with some desired probability of
success. We denote by $\mathbb{E}[Q_x]$ the average number of queries used by the algorithm on input $x$ where the expectation is over the algorithm's internal randomness. 
Given a probability distribution $\{p_x\}_{x\in X}$ over the elements of $X$, then 
$\sum_{x\in X}p_x\mathbb{E}[Q_x]$ is the average quantum query complexity of performing the 
computation with respect to $\{p_x\}.$

Several of our key algorithmic subroutines use a parallelized
version of phase estimation
\cite{magniezSearchQuantumWalk2011}, in which for a unitary $U$, a precision $\Theta>0$,
and an accuracy $\epsilon>0$, a circuit $D(U)$ implements
$O(\log\frac{1}{\epsilon})$ copies of the phase estimation circuit on $U$,
each to precision $O(\Theta)$, that all measure the phase of a single state on
the same input register. If $U$ acts on a Hilbert Space $\sop H$, then $D(U)$
acts on the space $\sop H_A\otimes((\mathbb{C}^{2})^{\otimes b})_B$ for
$b=O\left(\log\frac{1}{\Theta}\log\frac{1}{\epsilon}\right)$, where we have  used
$A$ to label the register that stores the input state, and $B$ to label the
registers that store the results of the parallel phase estimations. 

The circuit $D(U)$ can be used for Phase Checking: applying $D(U)$ 
to $\ket{\psi}_A\ket{0}_B$ and then measuring register $B$ 
in the standard basis; the probability of outcome $\ket{0}_B$ provides
information on whether $\ket{\psi}$ is close to an eigenvector of $U$ that has eigenphase close to
$0$ (in particular, with eigenphase within $\Theta$ of $0$). To characterize this
probability, we define $\Pi_0(U)$ to be the orthogonal projection onto the
subspace of $\sop H_A\otimes((\mathbb{C}^{2})^{\otimes b})_B$ that $D(U)$ maps
to states with $\ket{0}_B$ in the $B$ register.  That is,
$\Pi_0(U)=D(U)^\dagger \left(I_A\otimes\proj{0}_B\right) D(U).$ (Since
$\Pi_0(U)$ depends on the choice of $\Theta$ and $\epsilon$ used in $D(U)$,
those values must be specified, if not clear from context, when discussing
$\Pi_0(U)$.)  We now summarize  prior results for Phase Checking
 in \cref{lem:phase_det}:

\begin{lemma}[Phase Checking \cite{kitaevQuantumMeasurementsAbelian1995,cleveQuantumAlgorithmsRevisited1998,magniezSearchQuantumWalk2011}]
Let $U$ be a unitary on a Hilbert Space $\sop H$, and let $\Theta,\epsilon>0$. We call $\Theta$ the precision and $\epsilon$ the accuracy. Then there is a circuit $D(U)$ that acts on the space $\sop H_A\otimes
((\mathbb{C}^{2})^{\otimes b})_B$ for $b=O\left(\log\frac{1}{\Theta}\log\frac{1}{\epsilon}\right)$, and that
uses $O\left(\frac{1}{\Theta}\log\frac{1}{\epsilon}\right)$ calls to
control-$U$.  Then for any state $\ket{\psi}\in \sop H$
\begin{itemize}
\item $ \|P_0(U)\ket{\psi}\|^2\leq \|\Pi_0(U)\left(\ket{\psi}_A\ket{0}_B\right)\|^2\leq \|P_\Theta(U)\ket{\psi}\|^2+\epsilon,$ and
\item $\|\Pi_0(U)\left(\overline{P}_\Theta(U)\ket{\psi}\right)_A\ket{0}_B\|^2\leq \epsilon.$
\end{itemize}
\label{lem:phase_det}
\end{lemma}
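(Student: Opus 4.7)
The plan is to reduce the bounds to a standard eigenbasis calculation on $U$ and then rely on the well-known guarantees of single-copy phase estimation plus parallel repetition. Decompose $\ket{\psi} = \sum_j \alpha_j \ket{u_j}$ in an eigenbasis of $U$ with $U\ket{u_j}=e^{i\theta_j}\ket{u_j}$. Because each control-$U$ inside $D(U)$ acts trivially on $\ket{u_j}$ up to a phase, the circuit $D(U)$ takes the product form
\begin{equation}
D(U)\left(\ket{u_j}_A\ket{0}_B\right)=\ket{u_j}_A\ket{\phi_j}_B
\end{equation}
for some ancilla state $\ket{\phi_j}_B$ depending only on $\theta_j$. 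Consequently, setting $p_j := |\braket{0}{\phi_j}|^2$, the cross terms $\langle u_k,0|\Pi_0(U)|u_j,0\rangle$ vanish for $j\neq k$, and one gets the clean identity
\begin{equation}
\|\Pi_0(U)\left(\ket{\psi}_A\ket{0}_B\right)\|^2=\sum_j|\alpha_j|^2\,p_j.
\end{equation}
The entire lemma then becomes a matter of bounding the $p_j$ as a function of $|\theta_j|$.

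The three facts I would establish about $p_j$ are: (i) $p_j=1$ whenever $\theta_j=0$, (ii) $p_j\le 1$ always, and (iii) $p_j\le\epsilon$ whenever $|\theta_j|>\Theta$. Items (i) and (ii) are immediate: on a true $0$-eigenvector each parallel copy of phase estimation outputs $0$ with certainty, and $p_j$ is by construction a probability. Item (iii) is the technical core. For a single copy of phase estimation of precision $\Theta$ using $O(1/\Theta)$ calls to control-$U$, the standard Fourier-tail analysis of Kitaev and of Cleve et al.\ \cite{kitaevQuantumMeasurementsAbelian1995,cleveQuantumAlgorithmsRevisited1998} shows that when $|\theta_j|>\Theta$ the probability of outputting $0$ is bounded by a constant strictly less than $1$. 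Running $m=O(\log(1/\epsilon))$ independent copies in parallel on the same input register, as in the Magniez et al.\ construction \cite{magniezSearchQuantumWalk2011}, and declaring success only when all copies report $0$, multiplies these per-copy failure probabilities and drives the joint acceptance probability below $\epsilon$. This accounts for the $O\!\left(\tfrac{1}{\Theta}\log\tfrac{1}{\epsilon}\right)$ query count and the $b=O\!\left(\log\tfrac{1}{\Theta}\log\tfrac{1}{\epsilon}\right)$ ancilla qubits stated in the lemma.

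With (i)--(iii) in hand the two displayed inequalities follow by routine accounting on $\sum_j|\alpha_j|^2p_j$. For the lower bound, restrict the sum to indices with $\theta_j=0$, where $p_j=1$, yielding exactly $\|P_0(U)\ket{\psi}\|^2$. For the upper bound, split the sum into $|\theta_j|\le\Theta$ and $|\theta_j|>\Theta$: the first piece is at most $\|P_\Theta(U)\ket{\psi}\|^2$ using $p_j\le1$, and the second piece is at most $\epsilon\sum_j|\alpha_j|^2\le\epsilon$ using (iii). The second inequality in the lemma, $\|\Pi_0(U)(\overline{P}_\Theta(U)\ket{\psi})_A\ket{0}_B\|^2\le\epsilon$, is precisely the same sum restricted to indices with $|\theta_j|>\Theta$, where (iii) again applies term by term.

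The main obstacle, as in any rigorous treatment of phase estimation, is step (iii): quantifying the off-peak decay of the discrete Fourier transform of a truncated geometric series and verifying that the constants are uniform in $\theta_j$ once $|\theta_j|>\Theta$. This is standard but delicate, and it is what pins down the $1/\Theta$ scaling in the query count; the rest of the proof is essentially bookkeeping on top of the eigen-decomposition identity above.
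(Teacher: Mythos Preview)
The paper does not give its own proof of this lemma; it is stated as a summary of prior results with citations to \cite{kitaevQuantumMeasurementsAbelian1995,cleveQuantumAlgorithmsRevisited1998,magniezSearchQuantumWalk2011} and no proof appears anywhere in the body or appendices. So there is nothing to compare your argument against in the paper itself.

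That said, your sketch is the standard and correct way to derive the statement from those references. The eigenbasis reduction and the identity $\|\Pi_0(U)(\ket{\psi}_A\ket{0}_B)\|^2=\sum_j|\alpha_j|^2 p_j$ are exactly right (the block-diagonal action of $D(U)$ on eigenvectors of $U$ is the key structural observation), and items (i)--(iii) together with the split over $|\theta_j|\le\Theta$ versus $|\theta_j|>\Theta$ yield both displayed bounds cleanly. One small point worth tightening in (iii): the per-copy bound ``probability of outputting $0$ is a constant strictly less than $1$ whenever $|\theta_j|>\Theta$'' requires the single-copy precision to be set to $c\Theta$ for an appropriate absolute constant $c$, since for $|\theta_j|$ only infinitesimally larger than the single-copy resolution the Fej\'er-kernel acceptance probability at $0$ can be arbitrarily close to $1$. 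This is absorbed in the $O(1/\Theta)$ query count, but you should say so explicitly if you write the proof out in full.
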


We also consider implementing $D(U)$ as
described above, applying a $-1$ phase to the $A$ register if the $B$
register is \textit{not} in the state $\ket{0}_B$, and then implementing $D
(U)^\dagger$. We call this circuit Phase Reflection\footnote{In Ref.
\cite{leeQuantumQueryComplexity2011}, this procedure is referred to as ``Phase
Detection,'' but since no measurement is made, and rather only a reflection is
applied, we thought renaming this protocol as ``Phase Reflection''
would be more descriptive and easier to distinguish from ``Phase Checking.'' We apologize for any confusion this may cause when
comparing to prior work.}
and denote it as $R(U).$ Note that $R(U)=\Pi_0(U)-\overline{\Pi}_0(U)$, where $R(U)$ and 
$\Pi_0(U)$ have the same implicit precision $\Theta$ and accuracy $\epsilon$. The following 
lemma summarizes prior results on relevant properties of Phase Reflection.
\begin{lemma} [Phase Reflection \cite{magniezSearchQuantumWalk2011,leeQuantumQueryComplexity2011}]
Let $U$ be a unitary on a Hilbert Space $\sop H$, and let
$\Theta,\epsilon>0$. We call $\Theta$ the precision and $\epsilon$ the accuracy. Then there is a circuit $R(U)$ that acts on the space
$\sop H_A\otimes ((\mathbb{C}^{2})^{\otimes b})_B$ for
$b=O\left(\log\frac{1}{\Theta}\log\frac{1}{\epsilon}\right)$, and that uses
$O\left(\frac{1}{\Theta}\log\frac{1}{\epsilon}\right)$ calls to control-$U$
and control-$U^\dagger$, such that for any state $\ket{\psi}\in \sop H$
\begin{itemize}
\item $R(U)(P_0(U)\ket{\psi})\ket{0}_B=(P_0(U)\ket{\psi})_A\ket{0}_B$, and
\item $\|(R(U)+I)(\overline{P}_\Theta(U)\ket{\psi})_A\ket{0}_B\|<\epsilon $.
\end{itemize}
\label{lem:phase_refl}
\end{lemma}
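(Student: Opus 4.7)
The plan is to define $R(U)$ explicitly as a conjugation and reduce everything to the guarantees already given by Phase Checking in \cref{lem:phase_det}. Write the middle conditional-phase operation as $I_A\otimes(2\proj{0}_B - I_B)$, since it leaves the $\ket{0}_B$ component alone and multiplies its complement by $-1$. Sandwiching by $D(U)$ and $D(U)^\dagger$ gives
\[
R(U) \;=\; D(U)^\dagger\bigl(2\,I_A\otimes\proj{0}_B - I\bigr)D(U) \;=\; 2\,\Pi_0(U) - I,
\]
which is the stated identity $\Pi_0(U)-\overline{\Pi}_0(U)$. The resource count (size of $B$, number of calls to controlled $U$ and $U^\dagger$) then follows immediately from the corresponding count in \cref{lem:phase_det}, up to a factor of two for the forward and inverse copies of $D(U)$.

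For the first bullet, let $\ket{\phi}=P_0(U)\ket{\psi}$. I would apply \cref{lem:phase_det} with $\ket{\phi}$ in place of $\ket{\psi}$. Since $P_0(U)\ket{\phi}=\ket{\phi}$, the left inequality gives $\|\ket{\phi}\|^2\leq\|\Pi_0(U)(\ket{\phi}_A\ket{0}_B)\|^2$, while projections can only decrease norm, so $\|\Pi_0(U)(\ket{\phi}_A\ket{0}_B)\|^2\leq\|\ket{\phi}_A\ket{0}_B\|^2=\|\ket{\phi}\|^2$. The squeeze forces $\Pi_0(U)(\ket{\phi}_A\ket{0}_B)=\ket{\phi}_A\ket{0}_B$, i.e.\ the state is in the $+1$-eigenspace of $\Pi_0(U)$. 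Then $R(U)(\ket{\phi}_A\ket{0}_B)=(2\Pi_0(U)-I)(\ket{\phi}_A\ket{0}_B)=\ket{\phi}_A\ket{0}_B$, which is exactly the claim.

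For the second bullet, note $R(U)+I=2\Pi_0(U)$, so
\[
\bigl\|(R(U)+I)(\overline{P}_\Theta(U)\ket{\psi})_A\ket{0}_B\bigr\|
\;=\;2\bigl\|\Pi_0(U)(\overline{P}_\Theta(U)\ket{\psi})_A\ket{0}_B\bigr\|.
\]
The second bullet of \cref{lem:phase_det}, applied to $\overline{P}_\Theta(U)\ket{\psi}$ in place of $\ket{\psi}$, bounds the right-hand side by $2\sqrt{\epsilon'}$, where $\epsilon'$ is the accuracy parameter fed to Phase Checking. Instantiating Phase Checking with accuracy $\epsilon'=\epsilon^2/4$ yields the claimed bound $<\epsilon$; this only changes $\log(1/\epsilon')$ by a constant factor, so the asymptotic resource counts $b=O\bigl(\log\frac{1}{\Theta}\log\frac{1}{\epsilon}\bigr)$ and $O\bigl(\frac{1}{\Theta}\log\frac{1}{\epsilon}\bigr)$ controlled-$U$ queries are preserved.

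The main subtlety, and arguably the only nonroutine step, is the equality argument in the first bullet: we do not merely want that $\Pi_0(U)$ has a large overlap with $\ket{\phi}_A\ket{0}_B$, but that it fixes this state exactly, so that the reflection $2\Pi_0(U)-I$ acts as the identity (and not just approximately). This is delivered for free by the fact that the lower bound of \cref{lem:phase_det} on zero-phase eigenvectors is tight against the trivial projection upper bound; once that is observed, both bullets are short corollaries of the formula $R(U)=2\Pi_0(U)-I$.
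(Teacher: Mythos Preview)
The paper does not actually prove \cref{lem:phase_refl}; it is stated as a summary of prior results from \cite{magniezSearchQuantumWalk2011,leeQuantumQueryComplexity2011}, with the construction $R(U)=\Pi_0(U)-\overline{\Pi}_0(U)$ noted just before the lemma. Your derivation is correct and is exactly the natural way to obtain the statement from \cref{lem:phase_det} together with that identity: the squeeze argument for the first bullet is valid (equality of $\|\Pi v\|$ and $\|v\|$ for a projection forces $\Pi v=v$), and the second bullet reduces cleanly to the second bullet of \cref{lem:phase_det}. The one point worth flagging is the accuracy rescaling: the paper remarks that $R(U)$ and $\Pi_0(U)$ carry ``the same implicit precision $\Theta$ and accuracy $\epsilon$,'' whereas your route requires running Phase Checking at accuracy $\epsilon'=\epsilon^2/4$ to match the bound as stated; this is harmless for the asymptotics (as you note), but it means the $\epsilon$ appearing in \cref{lem:phase_refl} is not literally the same parameter as in the underlying $D(U)$, just one related by a fixed polynomial.
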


We will use Iterative Quantum Amplitude Estimation \cite{Grinko2021}, which is like standard quantum amplitude estimation \cite{brassardQuantumAmplitudeAmplification2000}, but with exponentially
better success probability:

\begin{lemma} [Iterative Quantum Amplitude Estimation \cite{Grinko2021}] Let
$\delta>0$ and $\mathcal{A}$ be a quantum circuit such that on a state $\ket{\psi}$, 
$\mathcal{A}\ket{\psi}=\alpha_0\ket{0}\ket{\psi_0}+\alpha_1\ket{1}\ket{\psi_1}$. 
Then there is an algorithm that estimates $|\alpha_0|^2$
 to additive error $\delta$ with success probability at least $1-p$ using 
$O\left(\frac{1}{\delta}\log\left(\frac{1}{p}\log\frac{1}{\delta}\right)\right)$calls to $\mathcal{A}$ and $\mathcal{A}^\dagger$.
\label{lem:ampEst}
\end{lemma}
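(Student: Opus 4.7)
The plan is to follow the standard reduction from amplitude estimation to phase estimation. Letting $\theta\in[0,\pi/2]$ satisfy $\sin^2\theta=|\alpha_0|^2$, I would introduce the Grover-like operator $Q=-\mathcal{A}S_0\mathcal{A}^\dagger S_\chi$, where $S_0$ reflects about $\ket{\psi}\ket{0\ldots 0}$ and $S_\chi$ reflects the first register about the $\ket{0}$ subspace. Then applying $Q^k$ to $\mathcal{A}\ket{\psi}$ and measuring the first register yields outcome $0$ with probability $\sin^2\bigl((2k+1)\theta\bigr)$, so estimating $|\alpha_0|^2$ reduces to estimating $\theta$ from Bernoulli samples whose parameters depend on $\theta$ through controllable multipliers.

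Rather than invoking QFT-based phase estimation, which would force a median-of-means boost and the suboptimal factor $\log(1/p)$, I would follow the iterative scheme of \cite{Grinko2021}: maintain a high-confidence interval $[\theta_\text{lo},\theta_\text{hi}]\ni\theta$, and at each round $i$ pick a (roughly doubling) integer $k_i$ so that $(2k_i+1)(\theta_\text{hi}-\theta_\text{lo})$ stays below a constant less than $\pi$. On such an interval $\theta\mapsto\sin^2((2k_i+1)\theta)$ is monotonic on the correct branch, so performing $N_i$ independent measurements after applying $Q^{k_i}\mathcal{A}\ket{\psi}$ yields, by a Chernoff bound, an estimate of $\sin^2((2k_i+1)\theta)$ sharp enough to shrink the interval by a constant factor, with failure probability at most $p/T$ for that round.

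The schedule terminates after $T=O(\log\tfrac{1}{\delta})$ rounds, when the interval is narrow enough that $|\alpha_0|^2=\sin^2\theta$ is pinned to additive error $\delta$. Each round costs $O(k_iN_i)$ calls to $\mathcal{A}$ and $\mathcal{A}^\dagger$. Choosing $N_i=O(\log(T/p))=O\bigl(\log(\tfrac{1}{p}\log\tfrac{1}{\delta})\bigr)$ via a union bound over rounds, and using $k_i$ that grow geometrically up to $k_T=O(1/\delta)$, the geometric sum telescopes: $\sum_{i=1}^T k_iN_i=O\bigl(\tfrac{1}{\delta}\log(\tfrac{1}{p}\log\tfrac{1}{\delta})\bigr)$, matching the stated bound.

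The main obstacle is the adaptive update step: one must show that a constant-factor shrinkage of $[\theta_\text{lo},\theta_\text{hi}]$ is actually attainable at every round despite the nonlinearity of $\sin^2$ and the two-branch ambiguity of $\arcsin$ arising because $Q$ cannot distinguish $\theta$ from $\pi-\theta$. Tuning the increment $k_{i+1}/k_i$ so the protocol never ``wraps around'' (violating the Nyquist-like condition $(2k_i+1)(\theta_\text{hi}-\theta_\text{lo})<\pi$) while still doubling accuracy is the technically delicate bookkeeping from \cite{Grinko2021}, and is where I would expect to spend most of the effort in a self-contained proof.
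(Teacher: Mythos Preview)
The paper does not prove this lemma at all: it is stated in the preliminaries as a black-box result imported from \cite{Grinko2021}, with no accompanying argument. So there is nothing to compare your proposal against.

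That said, your sketch is a faithful outline of the actual algorithm in \cite{Grinko2021}: the Grover operator $Q$, the relation $\Pr[0\mid k]=\sin^2((2k+1)\theta)$, the geometrically increasing schedule $k_i$, the Chernoff-bounded shot counts $N_i=O(\log(T/p))$ with $T=O(\log(1/\delta))$ rounds, and the union bound over rounds are exactly the ingredients that produce the stated complexity. You have also correctly identified the one genuinely delicate point, namely managing the interval so that the map $\theta\mapsto\sin^2((2k_i+1)\theta)$ remains invertible on the current confidence interval (the ``no-overshooting'' condition in \cite{Grinko2021}). If you wanted a self-contained proof, filling in that bookkeeping is indeed where the work lies; for the purposes of this paper, however, citing \cite{Grinko2021} suffices.
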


A key lemma in span program and state conversion algorithms is the effective spectral gap lemma:

\begin{lemma}[Effective spectral gap lemma, \cite{leeQuantumQueryComplexity2011}]
Let $\Pi$ and $\Lambda$ be projections, and let $U=(2\Pi-I)(2\Lambda-I)$ be the
unitary that is the product of their associated reflections. If
$\Lambda\ket{w}=0$, then $\|P_\Theta(U) \Pi\ket{w}\|\leq \frac{\Theta}{2}\|\ket{w}\|.$
\label{spec_gap_lemm}
\end{lemma}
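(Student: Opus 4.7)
The plan is to invoke Jordan's lemma to decompose the Hilbert space into mutually orthogonal subspaces, each of dimension at most $2$, that are jointly invariant under both $\Pi$ and $\Lambda$, and then analyze $U = (2\Pi-I)(2\Lambda-I)$ and the spectral projector $P_\Theta(U)$ block by block. In each $2$-dimensional block, the ranges of $\Pi$ and $\Lambda$ are lines meeting at some angle $\theta_j \in (0,\pi/2]$, and the product of the two reflections acts as a rotation by $\pm 2\theta_j$; in particular its two eigenphases within the block are $\pm 2\theta_j$. In a $1$-dimensional block, $U$ acts as $+1$ or $-1$, i.e., with eigenphase $0$ or $\pi$.

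Write $\ket{w} = \sum_j \ket{w_j}$ along the Jordan blocks, where each $\ket{w_j}$ lies in the corresponding invariant subspace and, by hypothesis, in the kernel of $\Lambda$. In each $2$D block, $\ker \Lambda$ is the line perpendicular to the range of $\Lambda$, so $\ket{w_j}$ lies on that line and a direct computation gives $\|\Pi\ket{w_j}\| = |\sin\theta_j|\,\|\ket{w_j}\|$. Since $U$ acts as a rotation on the block, both of its eigenvectors there carry the same absolute eigenphase $2\theta_j$, so $P_\Theta(U)$ is either the identity on the block (when $2\theta_j \leq \Theta$) or the zero map (when $2\theta_j > \Theta$). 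Using $\sin\theta_j \leq \theta_j \leq \Theta/2$ for the surviving blocks and summing the orthogonal contributions,
\[
\|P_\Theta(U)\Pi\ket{w}\|^2 \;\leq\; \sum_{j:\, 2\theta_j \leq \Theta} \sin^2\theta_j\,\|\ket{w_j}\|^2 \;\leq\; (\Theta/2)^2 \,\|\ket{w}\|^2,
\]
and taking square roots yields the claimed bound. The case $\Theta \geq \pi$ is trivial from $\|P_\Theta(U)\Pi\ket{w}\| \leq \|\ket{w}\|$.

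The main subtlety I anticipate is a careful treatment of the $1$-dimensional Jordan blocks, which are not captured by the rotation picture. A priori there are four cases for how $\Pi$ and $\Lambda$ can act on such a block, but the hypothesis $\Lambda\ket{w}=0$ restricts $\ket{w_j}$ to $\ker\Lambda$, leaving only two. In one, $\Pi$ also annihilates the block, so $\Pi\ket{w_j}=0$ contributes nothing. In the other, $\Pi$ fixes the block, and then $U$ acts there as $(+1)(-1) = -1$, giving eigenphase $\pi$, so the contribution is killed by $P_\Theta(U)$ for any $\Theta < \pi$. Once these degenerate cases are disposed of, only the clean $2$D rotation picture contributes and the stated inequality follows.
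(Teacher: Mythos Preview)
Your argument via Jordan's lemma is correct and is in fact the standard proof of this result. Note, however, that the paper does not itself prove this lemma: it is stated as a quotation from \cite{leeQuantumQueryComplexity2011} and used as a black box, so there is no ``paper's own proof'' to compare against here. The original proof in \cite{leeQuantumQueryComplexity2011} proceeds exactly as you do, decomposing the space into Jordan blocks, identifying the eigenphases of $U$ in each $2$-dimensional block as $\pm 2\theta_j$, and using $\|\Pi\ket{w_j}\|=\sin\theta_j\,\|\ket{w_j}\|\leq(\Theta/2)\|\ket{w_j}\|$ on the surviving blocks; your handling of the degenerate $1$-dimensional blocks and of the regime $\Theta\geq\pi$ is also correct.
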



\subsection{Span Programs}\label{sec:spanIntro}
Span programs are a tool for designing quantum query algorithms for decision problems.

\begin{definition} [Span Program] A span program is a tuple $\sop P=(H,V,\tau,A)$ on $[q]^n$ where
\begin{enumerate}
\item $H$ is a direct sum of finite-dimensional inner product spaces: $H=H_1\oplus H_2\cdots H_n\oplus H_\textrm{true}\oplus H_\textrm{false},$
and for $j\in [n]$ and $a\in[q]$, we have $H_{j,a}\subseteq H_j$, such that $\sum_{a\in [q]}H_{j,a}=H_j$.
\item $V$ is a vector space
\item $\tau\in V$ is a target vector, and
\item $A\in\sop L(H,V)$.
\end{enumerate}
Given a string $x\in[q]^n$, we use $H(x)$ to denote the subspace $H_{1,x_1}\oplus\cdots \oplus H_{n,x_n}\oplus H_\textrm{true}$, and we denote by $\PHx$ the orthogonal projection onto the space $H(x)$.

\label{def:SP}
\end{definition}

We use \cref{def:SP} for span programs because it applies to both binary and
non-binary inputs ($q\geq 2$). The definitions in Refs. \cite{belovsSpanProgramsQuantum2012,cadeTimeSpaceEfficient2018} only apply to
binary inputs ($q=2$).

\begin{definition} [Positive and Negative Witness] Given a span program $\sop P=(H,V,\tau,A)$ on
 $[q]^n$ and $x\in[q]^n$, then $\ket{w}\in H(x)$ is a positive witness for
 $x$ in $\sop P$ if $A\ket{w}=\tau$. If a positive witness exists for $x$, we
 define the positive witness size of $x$ in $\sop P$ as 
\begin{equation}
\wp{\sop P}{x}=\wpo{x}\coloneqq\min\left\{\|\ket{w}\|^2:\ket{w}\in H(x) \textrm{ and } A\ket{w}=\tau \right\}.
\end{equation}
Then $\ket{w}\in H(x)$ is an optimal positive witness for $x$ if $\|\ket{w}\|^2=\wp{\mathcal{P}}{x}$ and $A\ket{w}=\tau$.

We say $\omega\in \sop L(V,\mathbb{R})$ is a negative witness for $x$ in $\mathcal{P}$ if $\omega\tau=1$ and $\omega A \PHx=0$. If a negative witness exists for $x$, we define the negative witness size of $x$ in $\mathcal{P}$ as
\begin{equation}
 \wm{\mathcal{P}}{x}=\wmo{x}\coloneqq\min\left\{\|\omega A\|^2:
\omega\in L(V,\mathbb{R}), \omega A \PHx=0, \textrm{ and } \omega\tau=1 \right\}.
\end{equation}
Then $\omega$ is an optimal negative witness for $x$ if $\|\omega A\|^2=\wm{\mathcal{P}}{x}$, $\omega A \PHx=0,$ and $\omega\tau=1$.
\label{def:negWit}
\end{definition}
\noindent Each $x\in[q]^n$ has a positive or negative witness (but not both).

We say that a span program $\sop P$ decides the function
$f:X\subseteq[q]^n\rightarrow \{0,1\}$ if each $x\in f^{-1}(1)$ has a positive
witness in $\sop P$, and each $x\in f^{-1}(0)$ has a negative witness in $\sop P$. Then we denote the maximum positive and negative witness of $\sop P$ on $f$ as
\begin{align}
W_+(\sop P,f)=W_+\coloneqq \max_{x\in f^{-1}(1)}\wp{\sop P}{x},\qquad
W_-(\sop P,f)=W_-\coloneqq \max_{x\in f^{-1}(0)}\wm{\sop P}{x}.
\end{align}

Given a span program that decides a function, one
can use it to design an algorithm that evaluates that function with query complexity that depends on $W_+(\sop P,f)$ and $W_-(\sop P,f)$:
\begin{theorem} [\cite{reichardtSpanProgramsQuantum2009,itoApproximateSpanPrograms2019}]
For $X\subseteq[q]^n$ and $f:X\rightarrow \{0,1\}$, let $\sop P$ be a span program that decides $f$. Then there is a quantum algorithm that for any $x\in X$, evaluates $f(x)$ with bounded error, and uses $O\left(\sqrt{W_+(\sop P,f)W_-(\sop P,f)}\right)$ queries to the oracle $O_x$.
\label{thm:spEval}
\end{theorem}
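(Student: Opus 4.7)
My plan is to follow the standard Reichardt/Ito-Jeffery framework for building a quantum algorithm from a span program. First, augment the Hilbert space: introduce a fresh basis vector $\ket{\hat 0}$ and form $\tilde H = H \oplus \mathrm{span}\{\ket{\hat 0}\}$. Extend $A$ to $\tilde A\in\sop L(\tilde H,V)$ by setting $\tilde A|_H=A$ and $\tilde A\ket{\hat 0}=-\alpha\tau$ for a scaling parameter $\alpha>0$ to be chosen later. Define the projections $\Lambda=\Pi_{\ker \tilde A}$ (which is input-independent) and $\tilde\Pi(x)=\PHx+\ketbra{\hat 0}{\hat 0}$ (which depends on $x$ only through $\PHx$ and can therefore be reflected through using $O(1)$ queries to $O_x$). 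Form the unitary $U(x)=(2\tilde\Pi(x)-I)(2\Lambda-I)$; a single application of $U(x)$ costs $O(1)$ queries.

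Next I would analyze the two cases using the structure of witnesses. If $f(x)=1$, take an optimal positive witness $\ket{w}\in H(x)$ with $A\ket{w}=\tau$ and $\|\ket{w}\|^2\le W_+$; then $\ket{\hat 0}+\alpha\ket{w}$ lies in $\ker\tilde A$ and in the range of $\tilde\Pi(x)$, so it is a $+1$-eigenvector of both reflections, hence a $0$-phase eigenvector of $U(x)$. This implies
\begin{equation}
\|P_0(U(x))\ket{\hat 0}\|^2 \;\ge\; \frac{1}{1+\alpha^2 W_+}.
\end{equation}
If instead $f(x)=0$, take a negative witness $\omega$ and form $\ket{w'}=\tilde A^\dagger \omega^T\in(\ker\tilde A)^\perp$, so $\Lambda\ket{w'}=0$. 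A direct computation shows $\tilde\Pi(x)\ket{w'}=-\alpha\ket{\hat 0}$ (the $H$-part vanishes because $\omega A\PHx=0$, and the $\ket{\hat 0}$-component equals $-\alpha\omega\tau=-\alpha$), while $\|\ket{w'}\|^2=\alpha^2+\|\omega A\|^2\le\alpha^2+W_-$. Applying the effective spectral gap lemma (\cref{spec_gap_lemm}) to $\ket{w'}$ gives
\begin{equation}
\|P_\Theta(U(x))\ket{\hat 0}\|^2 \;\le\; \frac{\Theta^2}{4}\!\left(1+\frac{W_-}{\alpha^2}\right).
\end{equation}

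With these two bounds in hand, I would balance $\alpha^2=\sqrt{W_-/W_+}$ so that both expressions scale with $\sqrt{W_+W_-}$, and then choose a precision $\Theta=\Theta(1/\sqrt{W_+W_-})$ and accuracy $\epsilon$ a small constant. By \cref{lem:phase_det}, applying $D(U(x))$ to $\ket{\hat 0}_A\ket{0}_B$ and measuring register $B$ yields outcome $\ket{0}_B$ with a probability that is larger in the $f(x)=1$ case than in the $f(x)=0$ case by a constant multiplicative factor. Estimating this probability with \cref{lem:ampEst} (iterative amplitude estimation) to sufficient additive precision to separate the two cases, and observing that each call to $D(U(x))$ uses $O(1/\Theta)=O(\sqrt{W_+W_-})$ queries, yields a bounded-error decision algorithm of the claimed complexity. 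The main obstacle will be the bookkeeping in this last step: one has to verify that the amplitude-estimation precision needed to distinguish the two regimes does not inflate the query count beyond $O(\sqrt{W_+W_-})$, which requires using that both probabilities and their separation scale with the same power of $\Theta$ after the balancing of $\alpha$.
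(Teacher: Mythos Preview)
The paper does not give its own proof of this theorem; it is cited from prior work. The relevant analysis does appear, however, in the proof of \cref{lem:phase_est_early} in \cref{app:func_proof}, and your setup (the augmented space $\tilde H$, the two reflections, the positive-witness $0$-phase eigenvector, and the application of \cref{spec_gap_lemm} to the negative witness) matches it almost exactly.

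The gap is in your choice of $\alpha$. With $\alpha^2=\sqrt{W_-/W_+}$ the $f(x)=1$ overlap bound becomes $\frac{1}{1+\alpha^2 W_+}=\frac{1}{1+\sqrt{W_+W_-}}$, which is \emph{vanishingly small}, not a constant. Both probabilities are then $\Theta(1/\sqrt{W_+W_-})$ (and you are forced to shrink the Phase Checking accuracy $\epsilon$ to that same scale, or the additive $+\epsilon$ in \cref{lem:phase_det} swamps the positive-case signal entirely). Distinguishing two probabilities of size $\Theta(1/\sqrt{W_+W_-})$ that differ by a constant factor requires additive precision $\Theta(1/\sqrt{W_+W_-})$ in \cref{lem:ampEst}, hence $\Theta(\sqrt{W_+W_-})$ calls to $D(U(x))$, each already costing $\Theta(1/\Theta)=\Theta(\sqrt{W_+W_-})$ queries. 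The total is $\Theta(W_+W_-)$, a quadratic blowup: the obstacle you flagged in your last paragraph is real, not bookkeeping.

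The fix is to choose $\alpha$ so the positive-case overlap is $\Omega(1)$: in your parametrization take $\alpha^2=\Theta(1/W_+)$ (equivalently, in the paper's convention $\tilde A^\alpha\ket{\hat 0}=\frac{1}{\alpha}\tau$, take $\alpha^2=\Theta(W_+)$). Then the $f(x)=1$ probability is at least a constant such as $1/2$, while the $f(x)=0$ bound becomes $\frac{\Theta^2}{4}(1+W_+W_-)$, which is a small constant once $\Theta=\Theta(1/\sqrt{W_+W_-})$. The two cases are now separated by a constant gap at constant scale, so a single Phase Checking run (plus a constant number of repetitions for boosting) already decides $f(x)$ with bounded error using $O(1/\Theta)=O(\sqrt{W_+W_-})$ queries, with no amplitude estimation needed. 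This is exactly the route taken in \cref{lem:phase_est_early} and in the worst-case bound \cref{eq:worst_case}.
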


Not only can any span program that decides a function $f$ be used to create a
quantum query algorithm that decides $f$, but there is always a span program
that creates an algorithm with asymptotically optimal query complexity
\cite{reichardtSpanProgramsQuantum2009,reichardtReflectionsQuantumQuery2011}.
Thus when designing quantum query algorithms for function decision problems,
it is sufficient to consider only span programs.

Given a function $f:X\rightarrow\{0,1\}$, we denote the negation of the $f$ as $f^\neg$, where $\forall x\in X, f^{\neg}(x)=\neg f(x)$.
We use a transformation that takes a span program
$\sop P$ that decides a function $f:X\rightarrow\{0,1\}$ and creates a span program $\sop P^\dagger$ that decides
$f^{\neg}$,
while preserving witness sizes for each input $x$. While such a transformation is known for Boolean span programs \cite{reichardtSpanProgramsQuantum2009}, in \cref{lem:SP_duals} we show it exists for the span programs of \cref{def:SP}. The proof is in \cref{app:sec2}.

\begin{restatable}{lemma}{SPduals}\label{lem:SP_duals}
Given a span program $\sop P=(H,V,\tau,A)$ on $[q]^n$ that decides a function
$f:X\rightarrow \{0,1\}$ for $X\subseteq [q]^n$, there is a span program
$\sop P^\dagger=(H',V',\tau',A')$ that decides $f^{\neg }$ such that  $\forall x\in f^{-1}(1), \wp{\sop P}{x}=
\wm{\sop P^\dagger}{x}$ and $\forall x\in f^{-1}(0),\wm{\sop P}{x}=
\wp{\sop P^\dagger}{x}$.
\end{restatable}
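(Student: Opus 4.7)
The plan is to construct $\sop P^\dagger$ explicitly and to exhibit a norm-preserving bijection between positive witnesses of $\sop P$ for $x$ and negative witnesses of $\sop P^\dagger$ for $x$ (and symmetrically). This simultaneously proves that $\sop P^\dagger$ decides $f^\neg$ and that the witness sizes are preserved, giving the two equalities in the statement.

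I would adapt the standard Boolean span program duality of Reichardt to the alphabet-$q$ setting. The construction swaps the roles of $H_\textrm{true}$ and $H_\textrm{false}$ and dualizes the linear map $A$ via an adjoint-like operation, so that the primal condition ``$\ket{w} \in H(x)$ and $A\ket{w} = \tau$'' for positive witnesses becomes, after relabeling, the dual condition ``$\omega' A' \Pi_{H'(x)} = 0$ and $\omega' \tau' = 1$'' for negative witnesses of $\sop P^\dagger$, with $\|\omega' A'\| = \|\ket{w}\|$ on corresponding pairs.

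Verification then proceeds in steps: first, confirm that $\sop P^\dagger$ meets \cref{def:SP}, in particular that its alphabet-indexed subspaces sum to the corresponding index space; second, take an optimal positive witness $\ket{w}$ for $x \in f^{-1}(1)$ and build from it a negative witness of $\sop P^\dagger$ of the same squared norm, establishing $\wm{\sop P^\dagger}{x} \leq \wp{\sop P}{x}$; third, establish the reverse inequality by starting from an optimal negative witness of $\sop P^\dagger$ for $x$ and recovering a positive witness of $\sop P$ of the same norm; fourth, obtain the analogous statement for $x \in f^{-1}(0)$ by the symmetry of the construction, since applying the dualization twice should essentially recover the original span program (up to a natural isomorphism that preserves all witness sizes).

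The main obstacle, and the reason this does not follow directly by citation of the Boolean case, is that \cref{def:SP} permits the subspaces $H_{j,a}$ to overlap within $H_j$: one only requires $\sum_{a \in [q]} H_{j,a} = H_j$ rather than a direct sum. For $q = 2$ the standard dual construction proceeds cleanly by taking orthogonal complements, but for $q \geq 3$ the overlaps persist and interact nontrivially with the projections $\Pi_{H(x)}$ that appear in the negative witness conditions. My plan is to define each dual alphabet subspace $H'_{j,a}$ as an orthogonal complement within $H'_j$ of an appropriate combination of the other $H_{j,a'}$, so that it captures precisely those components of a vector that are ruled out by the constraint $\ket{w} \in H(x)$. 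Verifying both the spanning condition $\sum_{a \in [q]} H'_{j,a} = H'_j$ and the norm-preserving witness correspondence in this generalized setting is where most of the technical effort will lie.
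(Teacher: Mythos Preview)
Your overall plan matches the paper's approach: construct $\sop P^\dagger$ explicitly, swap $H_\textrm{true}$ and $H_\textrm{false}$, and exhibit a norm-preserving correspondence between positive witnesses of $\sop P$ and negative witnesses of $\sop P^\dagger$ (and conversely). However, two points deserve attention.

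First, the obstacle you highlight is largely a red herring. The paper simply sets $H'_{j,a}=H_{j,a}^\perp\cap H_j$ (the orthogonal complement of $H_{j,a}$ inside $H_j$), then defines $H'_j=\sum_a H'_{j,a}$; no special handling of overlaps for $q\geq 3$ is needed. The only property used is $H'(x)\perp H(x)$, which follows immediately from $H'_{j,x_j}\perp H_{j,x_j}$. Your proposed definition of $H'_{j,a}$ as a complement of ``a combination of the other $H_{j,a'}$'' would not give this orthogonality and would break the witness correspondence, so you should abandon it in favor of the simpler choice.

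Second, the real technical content of the construction is the map $A'$, which you leave unspecified beyond ``adjoint-like.'' The paper takes $V'=H\oplus\mathrm{span}\{\ket{\tilde 0}\}$, $\tau'=\ket{\tilde 0}$, and
\[
A'=\ketbra{\tilde 0}{w_0}+\Lambda_A,
\]
where $\Lambda_A$ is the projector onto $\ker A$ and $\ket{w_0}$ is the minimum-norm preimage of $\tau$ under $A$. The decomposition $\ket{w}=\ket{w_0}+\ket{w^\perp}$ with $\ket{w^\perp}\in\ker A$ is exactly what makes $\omega'=\bra{\tilde 0}+\bra{w}$ satisfy $\omega' A'=\bra{w}$, yielding the norm equality. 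Without this specific $A'$, the ``adjoint-like'' step is the missing idea.

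Finally, the paper proves the $f(x)=0$ case directly by the same kind of argument, rather than via a double-dual isomorphism $(\sop P^\dagger)^\dagger\cong\sop P$. Your symmetry route would require establishing that isomorphism with witness sizes preserved, which is additional work and not obviously true for this particular construction; the direct argument is shorter.
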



\subsection{State Conversion}\label{sec:dualAdvIntro}

In the state conversion problem, for $X\subseteq [q]^n$, we are given
descriptions of sets of pure states $\{\ket{\rho_x}\}_{x\in X}$ and 
$\{\ket{\sigma_x}\}_{x\in X}$. Then given access to an oracle for
$x$, and the quantum state $\ket{\rho_x}$, the goal is to create a state
$\ket{\sigma_x'}$ such that 
$\|\ket{\sigma_x'}-\ket{\sigma_x}\ket{0}\|\leq \varepsilon$. We call $\varepsilon$ the error of the state
conversion procedure.

Let $\rho$ and $\sigma$ be the Gram matrices of the sets 
$\{\ket{\rho_x}\}$ and $\{\ket{\sigma_x}\}$, respectively, so $\rho$
and $\sigma$ are matrices whose rows and columns are indexed by the elements
of $X$ such that $\rho_{xy}=\braket{\rho_x}{\rho_y},$ and 
$\sigma_{xy}=\braket{\sigma_x}{\sigma_y}.$ 

We now define the analogue of a span program for the problem of state conversion, which
 we call a \textit{converting vector set}:
\begin{definition} [Converting vector set]
Let $\mathscr{P}=\left(\{\ket{v_{xj}}\},\{\ket{u_{xj}}\}\right)_{x\in X, j\in[n]}$, where $\forall x\in X, j\in [n],\ket{v_{xj}},\ket{u_{xj}}\in\mathbb{C}^d$ for some $d\in \mathbb{N}$. Then we say $\mathscr P$
converts $\rho$ to $\sigma$
if it satisfies
\begin{align}
\forall x,y\in X,\quad (\rho-\sigma)_{xy}=\sum_{j\in[n]:x_j\neq y_j}\braket{u_{xj}}{v_{yj}}.\label{eq:filteredNorm}
\end{align}
We call such a $\mathscr P$ a \em{converting vector set} from $\rho$ to $\sigma$.
\label{def:converting_vector_set}
\end{definition}

Then the query complexity of state conversion is characterized as follows:

\begin{theorem} [\cite{leeQuantumQueryComplexity2011}]
Given $X\in[q]^n$ and a converting vector set $\mathscr{P}=\left(\{\ket{v_{xj}}\},\{\ket{u_{xj}}\}\right)_{x\in X, j\in[n]}$ from $\rho$ to $\sigma$, then there is quantum algorithm that on every input $x\in X$ converts 
$\ket{\rho_x}$ to $\ket{\sigma_x}$ with error $\varepsilon$ and has query complexity
\begin{align}
O\left(\max\left\{\max_{x\in X}\sum_j\|\ket{v_{xj}}\|^2,\max_{y\in X}\sum_j\|\ket{u_{yj}}\|^2\right\}\frac{\log(1/\varepsilon)}{\varepsilon^2}\right).\label{eq:filteredNormMa}
\end{align}
\label{thm:StateConv}
\end{theorem}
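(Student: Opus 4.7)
The plan is to generalize the span-program evaluation paradigm of \cref{thm:spEval} to the state-conversion setting, using the converting vector set to build two projections whose joint kernel contains a single ``witness'' state encoding both $\ket{\rho_x}$ and $\ket{\sigma_x}$, and then to apply Phase Reflection to rotate one into the other.

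First, I would work in an extended Hilbert space $\sop H = \mathbb{C}^2_F \otimes \mathbb{C}^X \otimes \sop H_{\mathrm{aux}}$ with a $2$-dimensional flag register $F$ distinguishing input ($\ket{0}_F$) from output ($\ket{1}_F$). For each $x\in X$ I would define a witness state $\ket{w_x}\in\sop H$ of the schematic form
\[
\ket{w_x} = \ket{0}_F\ket{x}\ket{\rho_x}\ket{0} - \ket{1}_F\ket{x}\ket{\sigma_x}\ket{0} + \sum_j\bigl(\text{terms in }\ket{v_{xj}}\text{ and }\ket{u_{xj}}\bigr),
\]
where the extra terms live in auxiliary registers and have total squared norm at most $W \coloneqq \max\{\max_x \sum_j \|\ket{v_{xj}}\|^2, \max_y \sum_j \|\ket{u_{yj}}\|^2\}$. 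The converting identity $(\rho-\sigma)_{xy} = \sum_{j:x_j\neq y_j}\braket{u_{xj}}{v_{yj}}$ is precisely the orthogonality relation needed for $\ket{w_x}$ to be annihilated by two projections: an oracle-independent $\Lambda$ built from $\{\ket{v_{yj}}\}$, and an oracle-dependent $\Pi_x$ built from $\{\ket{u_{yj}}\}$ together with one coherent call to $O_x$ that conditions the $j$-th auxiliary block on the value $x_j$. Hence $\ket{w_x}$ lies in the $+1$-eigenspace of $U_x \coloneqq (2\overline{\Pi}_x - I)(2\overline{\Lambda} - I)$, and each application of $U_x$ costs $O(1)$ queries to $O_x$.

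Next, I would apply Phase Reflection $R(U_x)$ from \cref{lem:phase_refl} to the initial state $\ket{0}_F\ket{x}\ket{\rho_x}\ket{0}$, iterating in a Grover-style amplitude-amplification schedule around $\ket{w_x}$. The normalized initial state has overlap of order $1/\sqrt{1+W}$ with $\ket{w_x}$, so a schedule of $O(\sqrt{W})$ reflections rotates the component along $\ket{w_x}$ to $\Omega(1)$; projecting the flag register onto $\ket{1}_F$ then yields a state close to $\ket{\sigma_x}\ket{0}$. The rotation is analyzed by decomposing $\ket{0}_F\ket{x}\ket{\rho_x}\ket{0}$ into the $\ket{w_x}$ component, which is exactly fixed by $R(U_x)$ in the ideal limit, and an orthogonal complement whose phase spread is controlled by $\Theta$ via \cref{spec_gap_lemm}.

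The main obstacle is the error analysis, not the construction. The witness $\ket{w_x}$ is not isolated from other eigenvectors, and \cref{lem:phase_refl} only approximates the ideal reflection across $P_0(U_x)$. To guarantee the output lies within $\varepsilon$ of $\ket{\sigma_x}$, one must balance three error sources: the precision $\Theta$, which must be chosen roughly as $\varepsilon/\sqrt{W}$ so that the effective spectral gap lemma suppresses the contribution of the non-witness component at each Grover step; the accuracy $\epsilon'\sim\varepsilon$, which through \cref{lem:phase_refl} contributes the $\log(1/\varepsilon)$ factor; and the $O(\sqrt{W})$-norm auxiliary terms in $\ket{w_x}$, whose leakage into the flag register must be further damped by the amplification. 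Combining the $O(\Theta^{-1}\log(1/\epsilon'))$ queries per invocation of $R(U_x)$ with the $O(\sqrt{W}/\varepsilon)$ reflections required to bring the output error below $\varepsilon$ yields the claimed $O(W\log(1/\varepsilon)/\varepsilon^2)$ query complexity. The delicate step is keeping errors in the flag and data registers additive rather than multiplicative, which requires repeated invocation of \cref{spec_gap_lemm} to bound how imperfect projection onto $\ket{w_x}$ propagates through each amplification step.
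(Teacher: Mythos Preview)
The paper cites \cref{thm:StateConv} from Lee et al.\ rather than proving it, but the construction is spelled out in \cref{sec:StateConv} (see \cref{defeq:t_pm} and the proof of \cref{claim:stopping}), and your proposal misses its central mechanism.

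State conversion here is not accomplished by amplitude amplification. The key observation is the decomposition $\ket{t_{x\pm}}=\tfrac{1}{\sqrt{2}}(\ket{0}\ket{\rho_x}\pm\ket{1}\ket{\sigma_x})$, which gives
\[
\ket{0}\ket{\rho_x}=\tfrac{1}{\sqrt{2}}\bigl(\ket{t_{x+}}+\ket{t_{x-}}\bigr),\qquad
\ket{1}\ket{\sigma_x}=\tfrac{1}{\sqrt{2}}\bigl(\ket{t_{x+}}-\ket{t_{x-}}\bigr),
\]
so the entire task is to leave $\ket{t_{x+}}$ fixed and flip the sign of $\ket{t_{x-}}$. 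This is a \emph{single} application of Phase Reflection $R(U)$, not a Grover loop. The constraint \cref{eq:filteredNorm} is used twice, asymmetrically: the $\ket{v_{xj}}$ vectors extend $\ket{t_{x+}}$ to a genuine $1$-eigenvector $\ket{\varphi}$ of $U$ (so $R(U)$ approximately fixes it), while the $\ket{u_{xj}}$ vectors build a state $\ket{\psi_{x,\alpha,\hat{\varepsilon}}}$ in $\ker\Lalp$ whose $\PHxp$-image is $\ket{t_{x-}}$, so \cref{spec_gap_lemm} pushes $\ket{t_{x-}}$ away from the low-phase eigenspace (and $R(U)$ approximately negates it). All of the $O(W\log(1/\varepsilon)/\varepsilon^2)$ queries sit inside that one call to $R(U)$, through the precision $\Theta$ required in \cref{lem:phase_refl}.

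Your proposal instead packs both $\ket{v_{xj}}$ and $\ket{u_{xj}}$ into a single witness, rotates toward it with $O(\sqrt{W})$ reflections, and then projects onto the $\ket{1}_F$ flag. This does not produce $\ket{1}\ket{\sigma_x}$: after rotating to the normalized witness you have a state whose $\ket{1}_F$ component has amplitude only $O(1/\sqrt{1+W})$, so the flag projection succeeds with probability $O(1/W)$ rather than $1-O(\varepsilon)$, and in any case a measurement cannot realize what must be a coherent map. Your initial state also carries an explicit $\ket{x}$ register, which the algorithm cannot prepare since $x$ is accessible only through $O_x$. The missing idea is the $\ket{t_{x\pm}}$ split; once you have it there is no amplification step, and the roles of the $\ket{v_{xj}}$ and $\ket{u_{xj}}$ vectors separate into the two halves of the argument.
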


Analogous to witness sizes in span
programs, we define a notion of witness sizes for converting vector sets:
\begin{definition} [Converting vector set witness sizes]
Given a converting vector set
$\mathscr{P}=\left(\{\ket{u_{xj}}\},\{\ket{v_{xj}}\}\right)$, we define the
witness sizes of $\mathscr P$  as
\begin{align}
&\wp{\mathscr P}{x}\coloneqq\sum_j\|\ket{v_{xj}}\|^2 & &\textrm{positive witness size for $x$ in $\mathscr P$}\nonumber\\
&\wm{\mathscr P}{x}\coloneqq\sum_j\|\ket{u_{xj}}\|^2 & &\textrm{negative witness size for $x$ in $\mathscr P$}\nonumber\\
&W_+(\mathscr P)=W_+\coloneqq\max_{x\in X}\wp{\mathscr P}{x}& &\textrm{maximum positive witness size of $\mathscr P$}\nonumber\\
&W_-(\mathscr P)=W_-\coloneqq\max_{x\in X}\wm{\mathscr P}{x}& &\textrm{maximum negative witness size of $\mathscr P$}
\end{align}
\end{definition}

By scaling the converting vector sets, we obtain the following two results: a rephrasing of \cref{thm:StateConv} in terms of witness sizes, and a transformation that exchanges positive and negative witness sizes. Both proofs can be found in \cref{app:sec2}.
\begin{restatable}{corollary}{scaleSC}\label{cor:stateConv}
Let $\mathscr{P}$ be a converting
vector set from $\rho$ to $\sigma$ with maximum positive and negative witness sizes $W_+$ and $W_-$. 
Then there is quantum algorithm that on every input $x\in X$ converts 
$\ket{\rho_x}$ to $\ket{\sigma_x}$ with error $\varepsilon$ and uses
$O\left(\sqrt{W_+W_-}\frac{\log(1/\varepsilon)}{\varepsilon^2}\right)$ queries to $O_x$.
\end{restatable}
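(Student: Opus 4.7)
The plan is to exploit a simple rescaling freedom in \cref{def:converting_vector_set}: the defining equation only sees the inner products $\braket{u_{xj}}{v_{yj}}$, not the individual norms, so we can trade positive for negative witness size without changing the validity of the converting vector set.

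Concretely, for any $\alpha > 0$, define a rescaled converting vector set $\mathscr{P}_\alpha$ by $\ket{v'_{xj}} = \alpha \ket{v_{xj}}$ and $\ket{u'_{xj}} = \alpha^{-1} \ket{u_{xj}}$. Since $\braket{u'_{xj}}{v'_{yj}} = \braket{u_{xj}}{v_{yj}}$, the equation
\begin{equation*}
(\rho - \sigma)_{xy} = \sum_{j : x_j \neq y_j} \braket{u'_{xj}}{v'_{yj}}
\end{equation*}
still holds, so $\mathscr{P}_\alpha$ is a converting vector set from $\rho$ to $\sigma$. The maximum witness sizes become $W_+(\mathscr{P}_\alpha) = \alpha^2 W_+$ and $W_-(\mathscr{P}_\alpha) = \alpha^{-2} W_-$, so the quantity $\max\{W_+(\mathscr{P}_\alpha), W_-(\mathscr{P}_\alpha)\}$ appearing in \cref{thm:StateConv} is minimized when $\alpha^2 W_+ = \alpha^{-2} W_-$, i.e. when $\alpha^2 = \sqrt{W_-/W_+}$, and at this choice both witness sizes equal $\sqrt{W_+ W_-}$.

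I would then invoke \cref{thm:StateConv} on $\mathscr{P}_\alpha$ with this optimal $\alpha$, which produces a state conversion algorithm with error $\varepsilon$ and query complexity $O\!\left(\sqrt{W_+ W_-}\, \log(1/\varepsilon)/\varepsilon^2\right)$, as claimed. A minor subtlety to address is the degenerate cases where $W_+ = 0$ or $W_- = 0$: if one of them is zero then by the Cauchy--Schwarz inequality every inner product $\braket{u_{xj}}{v_{yj}}$ is zero, forcing $\rho = \sigma$, so the trivial identity algorithm (zero queries) suffices and the bound holds vacuously. Otherwise $\alpha$ is well defined and the argument goes through.

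There is no real obstacle here; the content is just the AM--GM/rescaling observation plus a citation to \cref{thm:StateConv}. The only thing to be a bit careful about is checking that the rescaling preserves \cref{eq:filteredNorm} exactly (it does, since the $\alpha$ and $\alpha^{-1}$ cancel in every inner product), and noting that $\alpha$ can be chosen as a real positive number so that no phase issues arise in the norms.
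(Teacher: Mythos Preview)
Your proposal is correct and takes essentially the same approach as the paper: rescale the vectors by a factor $\alpha$ (the paper writes the specific value directly rather than optimizing over $\alpha$) so that both maximum witness sizes equal $\sqrt{W_+W_-}$, then invoke \cref{thm:StateConv}. Your treatment of the degenerate case $W_+=0$ or $W_-=0$ is a nice addition that the paper omits.
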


\begin{restatable}{lemma}{complment}\label{lem:complement}
If $\mathscr{P}$ converts $\{\ket{\rho_x}\}_{x\in X}$ to 
$\{\ket{\sigma_x}\}_{x\in X}$ , then there is a complementary
converting vector set $\mathscr P^\dagger$ that also converts $\rho$ to $\sigma$, 
such that for all $x\in X$ and for all $j\in [n]$, we have 
$w_+(\mathscr P,x)=w_-(\mathscr P^\dagger,x)$, and 
$w_-(\mathscr P,x)=w_+(\mathscr P^\dagger,x)$; the 
complement exchanges the values of the positive and negative witness sizes.
\end{restatable}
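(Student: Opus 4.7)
\textbf{Proof Proposal for \cref{lem:complement}.} The plan is to construct $\mathscr{P}^\dagger$ by simply swapping the roles of the two vector families, and then to verify that the converting condition of \cref{def:converting_vector_set} is preserved using the Hermiticity of the Gram matrices $\rho$ and $\sigma$. Concretely, given $\mathscr{P}=(\{\ket{v_{xj}}\},\{\ket{u_{xj}}\})$, I would define
\begin{equation*}
\mathscr{P}^\dagger = \bigl(\{\ket{v'_{xj}}\},\{\ket{u'_{xj}}\}\bigr) \quad \text{with} \quad \ket{v'_{xj}} \coloneqq \ket{u_{xj}}, \quad \ket{u'_{xj}} \coloneqq \ket{v_{xj}}.
\end{equation*}
The claim about witness sizes is then immediate from the definition: $w_+(\mathscr{P}^\dagger,x)=\sum_j\|\ket{v'_{xj}}\|^2=\sum_j\|\ket{u_{xj}}\|^2=w_-(\mathscr{P},x)$, and similarly $w_-(\mathscr{P}^\dagger,x)=w_+(\mathscr{P},x)$.

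The only substantive step is verifying that $\mathscr{P}^\dagger$ converts $\rho$ to $\sigma$. Starting from the defining identity for $\mathscr{P}$ applied to the pair $(y,x)$,
\begin{equation*}
(\rho-\sigma)_{yx} \;=\; \sum_{j:\,y_j\neq x_j} \braket{u_{yj}}{v_{xj}},
\end{equation*}
I would take complex conjugates of both sides. Since $\rho$ and $\sigma$ are Gram matrices of sets of pure states, they are Hermitian, so $\overline{(\rho-\sigma)_{yx}}=(\rho-\sigma)_{xy}$; and $\overline{\braket{u_{yj}}{v_{xj}}}=\braket{v_{xj}}{u_{yj}}$. Since the condition $y_j\neq x_j$ is symmetric in $x,y$, this yields
\begin{equation*}
(\rho-\sigma)_{xy} \;=\; \sum_{j:\,x_j\neq y_j}\braket{v_{xj}}{u_{yj}} \;=\; \sum_{j:\,x_j\neq y_j}\braket{u'_{xj}}{v'_{yj}},
\end{equation*}
which is exactly the converting condition for $\mathscr{P}^\dagger$.

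There is essentially no obstacle here; the content of the lemma is entirely bookkeeping once one recognizes that the asymmetry between the $u$'s and $v$'s in \cref{def:converting_vector_set} is cosmetic and that the Hermiticity of $\rho-\sigma$ absorbs the transposition of indices. The main thing to be careful about is tracking which family plays the ``positive'' role and which plays the ``negative'' role, so that the final witness-size identities are stated consistently with the conventions set in the witness-size definition just above the lemma.
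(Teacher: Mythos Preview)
Your proposal is correct and matches the paper's proof essentially verbatim: the paper also defines $\mathscr P^\dagger$ by swapping $\ket{u_{xj}}\leftrightarrow\ket{v_{xj}}$ and verifies the converting condition via $\braket{u^C_{xj}}{v^C_{yj}}=(\braket{u_{yj}}{v_{xj}})^*$ together with the Hermiticity of $\rho-\sigma$.
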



\section{Function Decision}\label{sec:func}

Our main result for function decision (deciding if $f(x)=0$ or $f(x)=1$) is the following:

\begin{restatable}{theorem}{funcEvalTheorem}\label{thm:spEvalNew}
For $X\subseteq[q]^n$, let $\sop P$ be a span program that decides $f:X \rightarrow
\left\{0,1\right\}$.  Then there is a quantum algorithm such that for any $x\in X$ and $\delta>0$
\begin{enumerate}
    \item The algorithm returns $f(x)$ with probability $1-\delta$.
    \item On input $x$, if $f(x)=1$ the algorithm uses $O\left(\sqrt{\wpo{x}W_-}\log\left(\frac{W_+}{\wpo{x}\delta}\right)\right)$ queries on average, and if $f(x)=0$
    it uses $O\left(\sqrt{\wmo{x}W_+}\log\left(\frac{W_-}{\wmo{x}\delta}\right)\right)$  queries on average.
    \item The worst-case (not average) query complexity is $O\left(\sqrt{W_+W_-}\log(1/\delta)\right)$.
\end{enumerate}
\end{restatable}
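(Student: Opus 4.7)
The plan is to split the standard span program algorithm of \cref{thm:spEval} into two one-sided error subroutines---one that can certify $f(x)=1$ (without ever falsely flagging on $f(x)=0$ inputs, up to a tunable accuracy), and a symmetric one obtained by applying the first to the dual program $\sop P^\dagger$ from \cref{lem:SP_duals}---and then to interleave them with an exponentially growing ``budget'' parameter so as to adapt to the unknown witness sizes $\wpo{x}$ and $\wmo{x}$ on the fly. Specifically, I would define a subroutine $\mathcal{A}_+(\hat{w},\epsilon)$ that uses the standard span-program unitary $U_x$ and initial state from \cref{thm:spEval} and runs Phase Checking (\cref{lem:phase_det}) with precision $\Theta$ of order $1/\sqrt{\hat{w}\,W_-}$ and accuracy $\epsilon$, outputting ``$f(x)=1$'' on a $\ket{0}_B$ outcome and ``inconclusive'' otherwise. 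The subroutine $\mathcal{A}_-(\hat{w},\epsilon)$ is defined identically but on $\sop P^\dagger$, which by \cref{lem:SP_duals} swaps $W_+\leftrightarrow W_-$ and $\wpo{x}\leftrightarrow\wmo{x}$.

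The first technical step is to verify the one-sided behavior of $\mathcal{A}_+$. For completeness (if $f(x)=1$), an optimal positive witness yields a state in $P_0(U_x)$ with nontrivial overlap with the initial state, so by the first bullet of \cref{lem:phase_det}, $\Pr[\ket{0}_B]=\Omega(1)$ whenever $\hat{w}\geq\wpo{x}$; this adapts the standard completeness analysis of \cite{reichardtSpanProgramsQuantum2009} to our one-sided setting. For soundness (if $f(x)=0$), an optimal negative witness combined with the effective spectral gap lemma (\cref{spec_gap_lemm}) bounds the initial state's overlap with $P_\Theta(U_x)$ by $O(\Theta\sqrt{W_-})$, so the second bullet of \cref{lem:phase_det} yields $\Pr[\ket{0}_B]\leq O(\Theta^2 W_-)+\epsilon$, controllable via $\Theta$ and $\epsilon$. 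Each invocation of $\mathcal{A}_+(\hat{w},\epsilon)$ then costs $O(\sqrt{\hat{w}\,W_-}\log(1/\epsilon))$ queries by \cref{lem:phase_det}, and running the same analysis on $\sop P^\dagger$ yields the symmetric guarantees for $\mathcal{A}_-$.

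The outer algorithm iterates $i=0,1,2,\dots$ with doubling budgets $\hat{w}_i=2^i$. In round $i$ it runs $\mathcal{A}_+(\hat{w}_i,\epsilon_i)$ and $\mathcal{A}_-(\hat{w}_i,\epsilon_i)$ a constant number of times each, returning $1$ or $0$ as soon as either flags; the loop terminates at $i_{\max}=\lceil\log\max(W_+,W_-)\rceil$, at which point a direct bounded-error invocation of \cref{thm:spEval} forces part (3). To obtain the sharp log factor I would use a geometrically decaying accuracy schedule $\epsilon_i\propto\delta\cdot 2^{-(i_{\max}-i)}$, so the total false-flag probability sums to $O(\delta)$ by a union bound while $\log(1/\epsilon_i)=O(\log(1/\delta)+i_{\max}-i)$. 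Assuming $f(x)=1$, with probability $\geq 1-\delta$ the algorithm halts by round $i^*=\lceil\log\wpo{x}\rceil$, and since per-round costs grow geometrically in $i$, the expected total query count is dominated by round $i^*$, giving
\begin{equation*}
O\bigl(\sqrt{\wpo{x}\,W_-}\,(\log(1/\delta)+\log(W_+/\wpo{x}))\bigr) = O\bigl(\sqrt{\wpo{x}\,W_-}\,\log(W_+/(\wpo{x}\delta))\bigr),
\end{equation*}
matching part (2); the $f(x)=0$ case is symmetric via $\sop P^\dagger$.

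The main obstacle I anticipate is bookkeeping the error budgets and amplification counts carefully enough that the final log factor comes out as stated rather than as a looser $\log W_+\cdot\log(1/\delta)$ or $\log\log$ factor---a uniform accuracy $\epsilon_i=\delta/i_{\max}$ would suffice for correctness but would degrade the bound. The geometric schedule above is what makes the telescoping clean. A secondary subtlety is ensuring the soundness of $\mathcal{A}_+$ holds uniformly over all $i\leq i_{\max}$ so the union bound over rounds closes; fortunately this is automatic because the soundness bound depends only on $\Theta$ and $\epsilon_i$, not on whether $\hat{w}$ is correctly chosen.
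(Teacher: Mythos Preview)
Your overall strategy is exactly the paper's: two one-sided Phase Checking tests on $\sop P$ and $\sop P^\dagger$ (via \cref{lem:SP_duals}), interleaved across exponentially growing scales, with tighter accuracy spent at the earlier, cheaper rounds so the log factor telescopes. The gap is in the error-control mechanism. You propose to drive the false-flag probability at round $i$ down to $\epsilon_i$ solely by shrinking the Phase Checking accuracy parameter, keeping the precision fixed at $\Theta\sim 1/\sqrt{\hat w\,W_-}$ and using only a constant number of repetitions. But on an $f(x)=0$ input the false-flag probability is not governed by $\epsilon$ alone. For your completeness claim $\Pr[\ket 0_B]=\Omega(1)$ whenever $\hat w\ge \wpo{x}$ to hold, the scaling parameter in the span-program unitary must satisfy $\alpha^2\sim\hat w$; the effective spectral gap lemma applied to the negative-witness vector $\ket v$, whose norm is $\|\ket v\|^2=1+\alpha^2\wmo{x}$, then gives
\[
\|P_\Theta\ket{\hat 0}\|^2\;\le\;\frac{\Theta^2}{4}\bigl(1+\hat w\,\wmo{x}\bigr)\;=\;O\!\Bigl(\frac{1}{\hat w\,W_-}\Bigr)+O\!\Bigl(\frac{\wmo{x}}{W_-}\Bigr).
\]
The second term is of order $1$ in the worst case ($\wmo{x}=W_-$), independently of $\epsilon_i$ and of the round, so a single Phase Checking can false-flag with constant probability no matter how small you take $\epsilon_i$, and your union bound over rounds cannot close. (Your stated overlap bound $O(\Theta\sqrt{W_-})$ drops the $\alpha$-dependence of $\|\ket v\|$; it would hold only for fixed $\alpha=O(1)$, which in turn would force $\wpo{x}=O(1)$ in the completeness step.)

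The paper's remedy is to couple the precision to a \emph{constant} $\epsilon$, setting $\Theta=\sqrt{\epsilon/(\alpha^2 W_-)}$, which pins the soundness probability at a fixed constant below $1/2$ at every round (this is \cref{lem:phase_est_early}). The per-round false-flag probability is then driven down not by shrinking $\epsilon$ but by repeating Phase Checking $N_i$ times with $N_i$ proportional to $T-i+\log(1/\delta)$ and taking a majority vote (Hoeffding). This produces exactly the decreasing-in-$i$ overhead you were aiming for---$N_{i^*}=O\bigl(\log(W_+/\wpo{x})+\log(1/\delta)\bigr)$ at the critical round---via amplification of a constant gap rather than via the Phase Checking accuracy. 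With that one change your outline goes through essentially as written.
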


Comparing \cref{thm:spEvalNew} to \cref{thm:spEval} (which assumes constant error $\delta$), we see that in the worst case, with an input $x$ where 
$\wpo{x}=W_+$ or $\wmo{x}=W_-$, the average and worst-case performance of our algorithm is the same 
as the standard span program algorithm. However, when we have an instance $x$ with a smaller witness size,
then our algorithm has improved average query complexity, without having to know about the witness size ahead of time. 

We can also compare the query complexity of our algorithm, which does not require
a promise,  to the original span program algorithm when that algorithm is
additionally given a promise. If the original span program algorithm is
promised that, if $f(x)=1$, then $\wpo{x}=O(\sf w)$, then the
bounded error query complexity of the original algorithm on this input would
be $O(\sqrt{{\sf w} W_-})$ by \cref{thm:spEval}. On the other hand, without
needing to know ahead of time that $\wpo{x}=O(\sf w)$, our algorithm
would use $\tilde{O}(\sqrt{\sf w W_-})$ queries on this input on average, and in
fact would do better than this if $\wpo{x}=o(\sf w)$.

A key routine in our algorithm is to apply Phase Checking to a unitary 
$\U{\sop P}{x}{\alpha}$, which we describe now. We follow notation similar to that
in \cite{belovsSpanProgramsQuantum2012}. In particular,
for a span program $\sop P=(H,V,\tau,A)$ on $[q]^n$, let $\tilde{H}=H\oplus \textrm{span}\{\ket{\hat{0}}\},$ and $\tilde{H}(x)=H(x)\oplus \textrm{span}\{\ket{\hat{0}}\},$
where $\ket{\hat{0}}$ is orthogonal to $H$ and $V$. Then
we define $\Aa{\alpha} \in \sop L(\tilde{H},V)$ as
\begin{equation}\label{eq:alphaIntro}
\Aa{\alpha}=\frac{1}{\alpha}\ketbra{\tau}{\hat{0}}+A.
\end{equation}
Let $\Lal{\alpha} \in \sop L(\tilde{H},\tilde{H})$ be the orthogonal
projection onto the kernel of $\Aa{\alpha}$, and let $\PtHx\in \sop
L(\tilde{H},\tilde{H})$ be the projection onto $\tilde{H}(x).$
Finally, let
\begin{equation}
\U{\sop P}{x}{\alpha}=(2\PtHx-I)(2\Lal{\alpha}-I).
\end{equation} 
Note that
$2\PtHx-I$ can be implemented with two applications of $O_x$ 
\cite[Lemma 3.1]{itoApproximateSpanPrograms2019}, and $2\Lal{\alpha}-I$ can be implemented
without any applications of $O_x$. Queries are only made in our algorithm when we apply $\U{\sop P}{x}{\alpha}$. To analyze the query complexity, we will track
the number of applications of $\U{\sop P}{x}{\alpha}.$

The time complexity will also scale with the number of applications of $\U{\sop P}{x}{\alpha}.$ We denote the time required to implement $\U{\sop P}{x}{\alpha}$ by $T_U$, which is an input independent quantity. Since our query complexity analysis
counts the number of applications of $\U{\sop P}{x}{\alpha}$, and 
the runtime scales with the number of applications of $\U{\sop P}{x}{\alpha}$,
to bound the average time complexity of our algorithms, simply determine $T_U$
and multiply this by the query complexity.

The following lemma gives us guarantees about the results of Phase Checking
of $\U{\sop P}{x}{\alpha}$ applied to the state $\ket{\hat{0}}$:
\begin{restatable}{lemma}{phaseEstEarly}\label{lem:phase_est_early}
Let the span program $\sop P$ decide the function $f$, and let $C\geq 2$. Then for Phase Checking with unitary $\U{\sop P}{x}{\alpha}$ on the
state $\ket{\hat{0}}_A\ket{0}_B$ with error $\epsilon$ and precision $\Theta=\sqrt{\frac{\epsilon}{\alpha^2 W_-}}$,
\begin{enumerate}
\item If $f(x)=1$, and $\alpha^2\geq C\wp{\sop P}{x}$, then the probability of measuring the $B$ register to be in the state $\ket{0}_B$ is at least $1-1/C$.
\label{part:phase_est_1}
\item If $f(x)=0$ and $\alpha^2 \geq  1/W_-(\sop P,f)$, then the probability of measuring the $B$ register in the state $\ket{0}_B$ is at most $\frac{3}{2}\epsilon.$ \label{part:phase_est_2}
\end{enumerate}
\end{restatable}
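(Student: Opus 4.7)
The plan is to exploit the fact that $\U{\sop P}{x}{\alpha}=(2\PtHx-I)(2\Lal{\alpha}-I)$ is already in the form required by both \cref{lem:phase_det} (Phase Checking) and \cref{spec_gap_lemm} (effective spectral gap). For each part I would construct a concrete vector in $\tilde H$ from a witness of $x$ and then compute its inner product with $\ket{\hat 0}$. For Part 1 I want a $+1$-eigenvector of $\U{\sop P}{x}{\alpha}$ that has large overlap with $\ket{\hat 0}$; for Part 2 I want a vector in $(\ker\Aa{\alpha})^\perp$ whose projection onto $\tilde H(x)$ equals $\ket{\hat 0}$, so that \cref{spec_gap_lemm} upper-bounds $\|P_\Theta(\U{\sop P}{x}{\alpha})\ket{\hat 0}\|$.

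For Part \ref{part:phase_est_1}, let $\ket{w}\in H(x)$ be an optimal positive witness, so $A\ket{w}=\tau$ and $\|\ket{w}\|^2=\wpo{x}$. Set $\ket{\psi}=\ket{\hat 0}-\frac{1}{\alpha}\ket{w}\in \tilde H(x)$. A direct check gives $\Aa{\alpha}\ket{\psi}=\frac{1}{\alpha}\tau-\frac{1}{\alpha}A\ket{w}=0$, so $\Lal{\alpha}\ket{\psi}=\ket{\psi}$; combined with $\PtHx\ket{\psi}=\ket{\psi}$ this makes $\ket{\psi}$ a $+1$-eigenvector of $\U{\sop P}{x}{\alpha}$. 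Since $\ket{\hat 0}\perp H$, we have $\braket{\hat 0}{\psi}=1$ and $\|\ket{\psi}\|^2=1+\wpo{x}/\alpha^2$. Projecting $\ket{\hat 0}$ onto the span of $\ket{\psi}$ (a subspace of the range of $P_0(\U{\sop P}{x}{\alpha})$) therefore gives $\|P_0(\U{\sop P}{x}{\alpha})\ket{\hat 0}\|^2\geq \frac{1}{1+\wpo{x}/\alpha^2}\geq \frac{C}{C+1}\geq 1-\tfrac{1}{C}$, using $\alpha^2\geq C\wpo{x}$. The lower bound in \cref{lem:phase_det} then transfers this to the probability of the $B$-register reading $\ket{0}_B$.

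For Part \ref{part:phase_est_2}, let $\omega\in \sop L(V,\mathbb R)$ be an optimal negative witness, so $\omega\tau=1$, $\omega A\PHx=0$, and $\|\omega A\|^2=\wmo{x}$. Identify $\omega$ with a vector $\ket{\omega}\in V$ and define
\begin{equation}
\ket{w'}=\alpha\,\Aa{\alpha}^\dagger\ket{\omega}=\ket{\hat 0}+\alpha A^\dagger\ket{\omega}.
\end{equation}
By construction $\ket{w'}\in\mathrm{range}(\Aa{\alpha}^\dagger)=(\ker\Aa{\alpha})^\perp$, so $\Lal{\alpha}\ket{w'}=0$. The condition $\omega A\PHx=0$ says $A^\dagger\ket{\omega}\in H$ is orthogonal to $H(x)$, and it is automatically orthogonal to $\ket{\hat 0}$, so $\PtHx\ket{w'}=\ket{\hat 0}$. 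Applying \cref{spec_gap_lemm} with $\Pi=\PtHx$, $\Lambda=\Lal{\alpha}$ yields $\|P_\Theta(\U{\sop P}{x}{\alpha})\ket{\hat 0}\|\leq \tfrac{\Theta}{2}\|\ket{w'}\|$, where $\|\ket{w'}\|^2=1+\alpha^2\wmo{x}\leq 2\alpha^2 W_-$ using $\alpha^2\geq 1/W_-$. With the chosen $\Theta^2=\epsilon/(\alpha^2 W_-)$ the $\alpha^2 W_-$ factors cancel to give $\|P_\Theta(\U{\sop P}{x}{\alpha})\ket{\hat 0}\|^2\leq \epsilon/2$, and the upper bound in \cref{lem:phase_det} then bounds the probability of outcome $\ket{0}_B$ by $\epsilon/2+\epsilon=\tfrac{3}{2}\epsilon$.

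I do not anticipate a major obstacle: the proof is a careful unpacking of the definitions. The one subtle step is choosing the right object to feed into the effective spectral gap lemma in Part 2; the vector must live in $(\ker\Aa{\alpha})^\perp$, and using $\alpha\Aa{\alpha}^\dagger\ket{\omega}$ (rather than $A^\dagger\ket{\omega}$ alone) is what automatically produces the $\ket{\hat 0}$ component on $\tilde H(x)$ and simultaneously makes the bookkeeping of the $\alpha^2 W_-$ factors line up with the precision $\Theta$.
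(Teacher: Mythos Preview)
Your proposal is correct and essentially identical to the paper's proof: the paper uses the vector $\alpha\ket{\hat 0}-\ket{w}$ (a scalar multiple of your $\ket{\psi}$) in Part~1 and the vector $\alpha(\omega\Aa{\alpha})^\dagger=\ket{\hat 0}+\alpha A^\dagger\ket{\omega}$ (your $\ket{w'}$) in Part~2, with the same applications of \cref{lem:phase_det} and \cref{spec_gap_lemm} and the same bookkeeping of the $\alpha^2 W_-$ factors.
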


\noindent Note that if $f(x)=1$ and $\alpha^2 < C\w{\sop P}{x}$, \cref{lem:phase_est_early} makes no claims about
the output. 

To prove \cref{lem:phase_est_early}, we use techniques from the
Boolean function decision algorithm of Belovs and Reichardt \cite[Section
5.2]{belovsSpanProgramsQuantum2012} and Cade et al. \cite[Section
C.2]{cadeTimeSpaceEfficient2018} and the dual adversary algorithm of Reichardt
\cite[Algorithm 1]{reichardtReflectionsQuantumQuery2011}. Our approach differs
from these previous algorithms in the addition of a parameter that controls
the precision of our phase estimation.
This approach has
not (to the best of our knowledge)\footnote{Jeffery and Ito
\cite{itoApproximateSpanPrograms2019}  also design a function decision
algorithm for non-Boolean span programs, but it has a few differences from our
approach and from that of Refs.
\cite{belovsSpanProgramsQuantum2012,cadeTimeSpaceEfficient2018}; for example,
the initial state of Jeffery and Ito's algorithm might require significant
time to prepare,  while our initial state can be prepared in $O(1)$ time.}
been applied to the non-Boolean span program formulation of \cref{def:SP}, so while
not surprising that it works in this setting, our analysis in \cref{app:func_proof} may be
of independent interest for other applications.

We use \cref{alg:bool} to prove \cref{thm:spEvalNew}.

\vspace{.3cm}
\textbf{High Level Idea of \cref{alg:bool}} The algorithm makes use of a test that, when successful, tells us that $f(x)=1$. 
However, the test is one-sided, in that failing the test does not
mean that $f(x)=0$, but instead is inconclusive. We repeatedly run this
test for both functions $f$ and $f^{\neg }$ while increasing the queries used
at each round. If we see an inconclusive result for both $f$ and $f^{\neg }$ at an
intermediate round, we can conclude neither $f(x)=1$ nor $f(x)=0$, so we
repeat the subroutine with larger queries. Once we reach a critical round that depends on $\wpo{x}$ (if $f(x)=1$) or on $\wmo{x}$ (if $f(x)=0$), the probability of an inconclusive
result becomes unlikely from that critical round onward. 

We stop iterating when the test returns a conclusive result, or when  we have passed the critical round for all $x$. While it is
unlikely that we get an inconclusive result at the final round, we return 1 if
this happens.


\begin{algorithm}
    \DontPrintSemicolon
    \SetKwInOut{Input}{Input}
    \SetKwInOut{Output}{Output}
    \Input{Span program $\mathcal{P}$ that decides a function $f$, oracle $O_x$, and failure probability $\delta$, }
    \Output{$f(x)$ with probability $1-O(\delta)$}
    $\epsilon\gets 2/9; \quad T\gets\left\lceil\log\sqrt{3W_+(\mathcal{P},f)W_-(\mathcal{P},f)}\right\rceil$\;
    \For{$i\geq 0$ \KwTo $T$ }{
    $N_i\gets18(\lceil T+\log(3/\delta)\rceil-i+1)$\;
    \tcp{Testing if $f(x)=1$}
     $\alpha_+\gets 2^{i}/\sqrt{W_-(\mathcal{P},f)}$\;
     Repeat $N_i$ times: Phase Checking of $\U{\mathcal{P}}{x}{\alpha_+}$ on $\ket{\hat{0}}_A\ket{0}_B$ with error $\epsilon$, precision $\sqrt{\frac{\epsilon}{\alpha_+^2 W_-(\mathcal{P},f)}}$ \;\label{line:check1}
     \lIf{\em{Measure $\ket{0}$ in register $B$ at least $N_i/2$ times}}{
      \Return 1\label{line:return1}
     }
     \tcp{Testing if $f(x)=0$}
     $\alpha_-\gets 2^{i}/\sqrt{W_+(\mathcal{P},f)}$\;

     Repeat $N_i$ times: Phase Checking of $\U{\mathcal{P}^\dagger}{x}{\alpha_-}$ on $\ket{\hat{0}}_A\ket{0}_B$ with error $\epsilon$, precision $\sqrt{\frac{\epsilon}{\alpha_-^2 W_+(\mathcal{P},f)}}$ \;\label{line:check0}
     \lIf{\em{Measure $\ket{0}$ in register $B$ at least $N_i/2$ times}}{
      \Return 0 \label{line:return0}
     }
    }

    \Return 1 \tcp{if exit the \textbf{for} loop without passing a test, the algorithm makes a guess}
    \caption{}
    \label{alg:bool}
\end{algorithm}





More specifically, we use Phase Checking to perform our one-sided test; we iteratively run Phase Checking on $\U{\mathcal{P}}{x}{\alpha_+}$ in \cref{line:check1} 
to check if $f(x)=1$,
and on $\U{\mathcal{P}^\dagger}{x}{\alpha_-}$ in \cref{line:check0} to check if $f(x)=0$, increasing the parameters $\alpha_{+}$ and $\alpha_-$ by a factor of 2 at each round. At some round, which we label $i^*$, $\alpha_+^2$ becomes at least $3\wp{\mathcal{P}}{x}$ or 
$\alpha_-^2$ becomes at least $3\wm{\mathcal{P}}{x}=3\wp{P^\dagger}{x}$ (by \cref{lem:complement}), depending on whether $f(x)=1$ or $f(x)=0$, respectively. 
Using \cref{lem:phase_est_early} \cref{part:phase_est_1}, from round $i^*$ onward we have a high
probability of measuring the $B$ register to be in the state 
$\ket{0}_B$ at \cref{line:check1} if $f(x)=1$ or at \cref{line:check0} if $f(x)=0$, causing the algorithm to terminate and output the correct result. \cref{part:phase_est_2} of \cref{lem:phase_est_early}  ensures that at all rounds we have a low probability of outputting the incorrect result. We don't need to
know $i^*$ ahead of time; the behavior of the algorithm will change on its
own, giving us a smaller average query complexity for instances with smaller witness size, the easier instances. 

The number of queries used by Phase Checking increases by a factor of 2 at each
round of the \textbf{for} loop. If $N_i$, the number of repetitions of 
Phase Checking at round $i$, were a constant $N$, then using a geometric series, we 
would find that the query complexity would be asymptotically equal to the queries 
used by Phase Checking in the round at which the algorithm terminates, times $N$. 
At round $i^{*}$, the round at which termination is most likely, the query complexity of Phase Checking is 
$O\left(\sqrt{\wp{\mathcal{P}}{x}W_{-}(\mathcal{P},f)}\right)$ or 
$O\left(\sqrt{\wm{\mathcal{P}}{x}W_+(\mathcal{P},f)}\right)$
(depending on if $f(x)=1$ or $f(x)=0$) by \cref{lem:phase_det}. We show the probability of continuing
to additional rounds after $i^*$ is exponentially decreasing with each extra round, so 
we find an average query complexity of $O\left(N\sqrt{\wp{\mathcal{P}}{x}W_{-}(\mathcal{P},f)}\right)$ or 
$O\left(N\sqrt{\wm{\mathcal{P}}{x}W_+(\mathcal{P},f)}\right)$ on input $x$.
Since there can be
$T=\left\lceil\log\left(\sqrt{W_+W_-}\right)\right\rceil$ rounds in \cref
{alg:bool} the worst case, this suggests that each round should have a probability of error bounded by
$O\left(T^{-1}\right)$, which we can accomplish through repetition and majority voting, but which requires $N=\Omega(\log T)$, adding an extra log factor to our query complexity. 

To mitigate this effect, we modify the number of repetitions (given by $N_i$ 
in \cref{alg:bool}) over the course 
of the algorithm so that we have a lower
probability of error (more repetitions) at earlier rounds, and a higher probability (fewer repetitions) at later
rounds. This requires additional queries
at the earlier rounds, but since these rounds are cheaper to begin with, we
can spend some extra queries to reduce our error. As a result, instead of a log factor
that depends only on $T$, we end up with a log factor that also decreases with increasing witness size, so when $\wpo{x} =
W_+$ or $\wmo{x} = W_-$, our average query complexity is at most $O\left(\sqrt
{W_+W_-}\log(1/\delta)\right)$ without any additional log factors.

\begin{proof}[Proof of \cref{thm:spEvalNew}]

We analyze \cref{alg:bool}. 

We first prove that the total success probability is at least $1 - \delta$.
Consider the case that $f(x)=0$. Let 
$i^*=\left\lceil\log\sqrt{3\wm{\sop P}{x}W_+(\sop P,f)}\right\rceil$, which is the round at which we will show our probability of exiting the \textbf{for} loop becomes large. The total number of possible iterations is $T=\left\lceil\log\sqrt{3W_+W_-}\right\rceil$, which is at least $i^*$. Let $N = \lceil T + \log(3/\delta)\rceil + 1$, so $N_i=18(N-i)$. Let $\PrCont$ denote the
probability of continuing to the next round of the \textbf{for} loop at round
$i$, conditioned on reaching round $i$,
let $\PrErr$ be the probability of returning the wrong answer at round $i$, conditioned on reaching round $i$,
and let $\PrFinal$ be the probability of reaching the end of the 
\textbf{for} loop without terminating. (Since we return $1$ if we reach the  end of
the \textbf{for} loop without terminating, this event produces an error when
$f(x)=0$.) The total probability of error is then
\begin{align}
    \label{eq:probError1} Pr(error) &= \sum_{i=0}^{T} \left(\prod_{j=0}^{i-1}\PrCont[j]\right) \cdot \PrErr + \PrFinal .
\end{align} 
We will use the probability tree diagram in \cref{fig:f0} to
 help us analyze events and probabilities. 

Since $f(x)=0$, $\PrErr$ is the probability of returning 1, which depends on
the probability of measuring $\ket{0}_B$ in Phase Checking of 
$\U{\sop P}{x}{\alpha_+}$ in \cref{line:check1} of \cref{alg:bool}. Since $\alpha_+^2W_-\geq 1$, we can use \cref{part:phase_est_2} of 
\cref{lem:phase_est_early}, to find that there is
at most a $\frac{3}{2}\epsilon=1/3$ probability of measuring $\ket{0}_B$ at
each repetition of Phase Checking. Using Hoeffding's inequality 
\cite{Hoeff63}, the probability of measuring outcome $\ket{0}$ at least 
$N_{i}/2$ times and returning $1$ in \cref{line:return1} of \cref{alg:bool} is at most
$a_i \coloneqq e^{-N_{i}/18}$. Therefore,
\begin{equation} \label{eq:ai}
    \PrErr \leq a_i
\end{equation}
which holds for all $i$ but in particular, gives us a bound on the first left branching of \cref{fig:f0}, corresponding to outputting a $1$ when $i\geq i^*$.

When $i < i^*$, we trivially bound the probability of continuing to the next round:
\begin{equation} \label{eq: earlyCont}
    \PrCont[i] \leq 1.
\end{equation}

When $i \geq i^*$, we continue to the next round when we do not return 1 in
\cref{line:return1} of \cref{alg:bool} and then do not return 0 in \cref{line:return0}, corresponding
to the two right branchings of the diagram in \cref{fig:f0}. We upper bound
the probability of the first event (first right branch in \cref{fig:f0}) by
1. To bound the probability of the second event, consider Phase Checking of
$\U{\sop P^\dagger}{x}{\alpha_-}$ in \cref{line:check0}. Since $i \geq i^*$, 
we have $\alpha_-^2\geq 3\wm{\sop P}{x}= 3\wp{\sop P^\dagger}{x}$ by \cref{lem:SP_duals}. 
Also since $W_+(\sop P,f)=W_-(\sop P^\dagger, f)$ by \cref{lem:SP_duals}, 
we have $\alpha^2\geq 1/W_-(\sop P^\dagger,f)$. Thus, as we are performing Phase Checking 
with precision $\sqrt{\epsilon/(\alpha^2W_+(\sop P,f))}=\sqrt{\epsilon/(\alpha^2W_-(\sop P^\dagger ,f))}$,
we can use \cref{part:phase_est_1} of \cref{lem:phase_est_early} with $C=3$
to conclude that the probability of measuring $\ket{0}_B$ at a single
repetition of \cref{line:check0} is at least $2/3$. Using Hoeffding's inequality 
\cite{Hoeff63}, the probability of measuring $\ket{0}_B$ more than $N_i/2$ times,
and therefore returning 0, is at least $1-e^{-N_i/18}$. Thus the probability
of not returning 0 in \cref{line:return0} is at most
\begin{align}
 1-(1-e^{-N_i/18}) = a_i.
 \label{eq:partCont}
 \end{align} 
Therefore when $i \geq i^*$, using the product rule, the probability of following both right branchings of \cref{fig:f0} and continuing to the next iteration of the \textbf{for} loop is
\begin{align}
    \PrCont[i]
    &\leq 1 \cdot a_i=a_i.\label{eq:lateCont} 
\end{align}

\begin{figure}
\centering
\begin{subfigure}[t]{.45\textwidth}
\begin{tikzpicture}[scale=1]
`    \node {Round $i\geq i^*$, $f(x)=0$} 
        [sibling distance=2.5cm, level distance=2cm]
        child {node {return 1}
        edge from parent node [left,blue,text width=2cm,align=center] {$\leq a_i$ (\cref{eq:ai})}}
        child {
            child {node {return 0}
            edge from parent node [left,blue] {}}
            child {node {continue}
            edge from parent node [right,blue,text width=2cm,align=center] {$\leq a_i$ (\cref{eq:partCont})}}
        edge from parent node [right,blue] {$\leq 1$}};
    \end{tikzpicture}
    \caption{Probability tree diagram for $f(x)=0$, with bounds on probabilities of relevant event branches, with reference to the corresponding equations from the text where that bound is derived.}
    \label{fig:f0}
\end{subfigure}
\hfill
\begin{subfigure}[t]{.45\textwidth}
\begin{tikzpicture}[scale=1]
    \node {Round $i\geq i^*$, $f(x)=1$} 
        [sibling distance=2.5cm, level distance=2cm]
        child {node {return 1}
        edge from parent node [left,blue] {}}
        child {
            child {node {return 0}
            edge from parent node [left,blue,text width=2.3cm,align=center] {$\leq a_i$ \\($\sim$ \cref{eq:ai})}}
            child {node {continue}
            edge from parent node [right,blue] {$\leq 1$}}
        edge from parent node [right,blue,text width=2.3cm,align=center] {$\leq a_i$  \\($\sim$ \cref{eq:partCont})}};
    \end{tikzpicture}
    \caption{Probability tree diagram for $f(x)=1$, with bounds on probabilities of relevant event branches, derived using similar analyses as the equations referenced on the branches.}
    \label{fig:f1}
\end{subfigure}
\caption{Probability tree diagrams for a round of the \textbf{for} loop
 in \cref{alg:bool} when $i\geq i^*$, and $f(x)=0$ (\cref{fig:f0}), and 
 $f(x)=1$ (\cref{fig:f1}). By our choice of parameters, $a_i$ is small (it is
 always less than $1/4$), and decreases exponentially with increasing $i$.}
\end{figure}
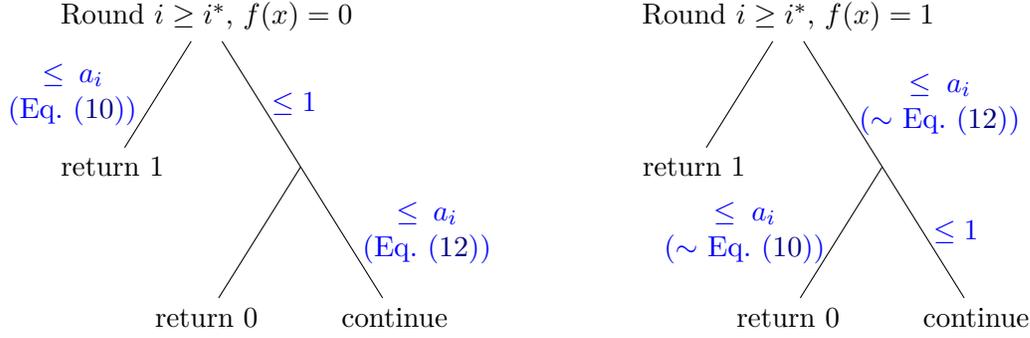

Finally, if we ever reach the end of the \textbf{for} loop without terminating, our algorithm returns 1, 
which is the wrong answer. This happens with probability 
\begin{align}
    Pr(final) &= \prod_{i=0}^{T} \PrCont \leq \prod_{i=i^*}^T a_i, \label{eq:prfinal} 
\end{align}
using \cref{eq: earlyCont}  for $i<i^*$ and \cref{eq:lateCont} for $i\geq i^*$.

Now we calculate the total probability of error. Plugging in \cref
{eq:ai}, \cref{eq: earlyCont}, \cref{eq:lateCont}, and 
\cref{eq:prfinal} into \cref{eq:probError1}, and splitting the first term of 
\cref{eq:probError1} into two parts to account for the different behavior of the
algorithm before and after round $i^*$, we get:
\begin{align}
    \label{eq:error} Pr(error) 
    & \leq \sum_{i=0}^{i^*}  a_i + \sum_{i=i^*+1}^{T}  
    \prod_{j = i^*}^i a_j + \prod_{i = i^*}^T a_i .
\end{align}

Since $N_i = 18(N - i)$, we have $a_i = e^{-(N-i)}$, which means the first sum in \cref{eq:error} is a geometric series, and is bounded by:
\begin{align}
    \sum_{i=0}^{i^*}  a_i &\leq  \frac{1}{e-1}\cdot e^{-N+(i^*+1)}<\delta/2,
\end{align}
where the final inequality arises from our choice of $N = \lceil T + \log(3/\delta)\rceil + 1$.
Combining the second and third terms of \cref{eq:error}, and upper bounding their $a_j$'s and $a_i$'s by $a_T$, we get another geometric series that sums to less than $\delta/2$:
\begin{align}
    \sum_{i=1}^{T-i^*+1} (a_T)^i < \dfrac{e^{-(N-T)}}{1-e^{-(N-T)}} < \delta/2.
\end{align}
Thus, $Pr(error) < \delta $, and our success probability is at least $1-\delta$.

Now we analyze the probability of error for $f(x)=1$ and set 
$i^*=\left\lceil\log\sqrt{3\wp{\sop P}{x}W_-(\sop P,f)}\right\rceil$. Then 
nearly identical analyses as in the $f(x)=0$ case (and using \cref{lem:SP_duals} to relate witness sizes of $\sop P$ and $\sop P^\dagger$) provide
the bounds on probabilities of relevant events, corresponding to branchings
in \cref{fig:f1}. By following the first right branching and then the next
left branching in \cref{fig:f1}, we see the probability of error at round $i$ for $i\geq i^*$
is
\begin{equation} \label{eq:ai_f1}
    \PrErr \leq a_i^2\leq a_i,
\end{equation}
since by our choice of parameters, $a_i$ is always less than $1$. Following the two right 
branchings in \cref{fig:f1}, the probability of continuing when $i\geq i^*$ is
\begin{align}
    \PrCont[j] 
    &\leq a_i\cdot 1=a_i.\label{eq:lateCont_f1} 
\end{align}
Thus the rest of the analysis is the same, and so we find that for 
$f(x)=0$ or $f(x)=1$, the probability of success is at least $1-\delta$.

\medskip

Now we calculate the average query complexity on input $x$, $\mathbb{E}[Q_x]$, given by 
\begin{equation}
    \mathbb{E}[Q_x]=\sum_{i=0}^{T} \left( \prod_{j=0}^{i-1} \PrCont[j] \right) \cdot (1-\PrCont[i]) \cdot Q(i).
    \label{eq:SP_avg_runtime}
\end{equation}
Here, $Q(i)$ is the number of queries used by the algorithm up to and including round $i$,
and $\left( \prod_{j=0}^{i-1} \PrCont[j] \right) \cdot (1-\PrCont[i])$ is the probability that we
terminate at round $i$.

The only time we make queries is in the Phase Checking subroutine. By \cref
{lem:phase_det}, the number of queries required to run a single repetition of
Phase Checking in the $i^\textrm{th}$ round is 
$O\left(\frac{1}{\theta} \log \left(\frac{1}{\epsilon}\right)\right) = O(\alpha_\pm \sqrt{W_\mp}) = O(2^i)$, since $\epsilon=\Theta(1)$. 
Taking into account the $N_i$ repetitions of Phase Checking
in the $i^\textrm{th}$ round, we find
\begin{equation} \label{eq:i1}
    Q(i) = \sum_{j=0}^{i} O\left(2^jN_j\right).
\end{equation}

Now setting $i^*$ to be $\left\lceil\log\sqrt{3\wp{\sop P}{x}W_-(\sop P,f)}\right\rceil$ or 
$\left\lceil\log\sqrt{3\wm{\sop P}{x}W_+(\sop P, f)}\right\rceil$ depending on whether $f(x)=1$ or $0$, respectively, we can use our bounds on event probabilities from our error analysis to bound the relevant probabilities for average query complexity. When $i < i^*$, we use the trivial bound $\PrCont[i]\leq 1$. When $i\geq i^*$,
we use \cref{eq:lateCont,eq:lateCont_f1} and our choice of $a_i$ to conclude
that $\PrCont[i]\leq a_i\leq 1/4$. For all $i$, we use that 
$(1-\PrCont[i])\leq 1.$ Splitting up the sum in \cref{eq:SP_avg_runtime} into $2$ terms,
for $i\leq i^*$ and $i>i^*$, and using these bounds on $\PrCont[i]$ along
with \cref{eq:i1}, we have
\begin{align}\label{eq:Q_intermediate}
     \mathbb{E}[Q_x]
     &\leq \sum_{i=0}^{i^*}  \sum_{j=0}^{i} O\left(2^jN_j \right)
    + \sum_{i=i^*+1}^T \left( \frac{1}{4} \right)^{i-i^*}  \sum_{j=0}^{i} O\left(2^jN_j \right).
\end{align}
We use the following inequalities to simplify \cref{eq:Q_intermediate},
\begin{align} \label{eq:magic_sum}
  &\sum_{j=0}^{i} 2^j (N-j)
                           \leq2^{i+1}(N-i+1),\quad \textrm{and}\\
 &\!\!\!\sum_{i=i^*+1}^{T}2^{-i}(T-i)\leq 2^{-i^*}(T-i^*)\label{eq:magic_sum2}
\end{align}
and finally find that
\begin{align}
    \mathbb{E}[Q_x] = O\left( 2^{i^*}(N-i^*+1) \right).
\end{align}

By our choice of $i^*$, $T$, and $N$, on input $x$ when $f(x)=1$,
the total query complexity is 
\begin{align}
\mathbb{E}[Q_x]=O\left(\sqrt{\wp{\sop P}{x}W_-(\sop P,f)}\log\left(\frac{W_+(\sop P,f)}{\wp{\sop P}{x} \delta}\right)\right),
\end{align}
and when $f(x)=0$,
the total average query complexity is 
\begin{align}
\mathbb{E}[Q_x]=O\left(\sqrt{\wm{\sop P}{x}W_+(\sop P,f)}\log\left(\frac{W_-(\sop P,f)}{\wm{\sop P}{x} \delta}\right)\right),
\end{align}
and the worst case query complexity is 
\begin{align} \label{eq:worst_case}
    \sum_{j=0}^{T} O\left(2^jN_j\right)=O\left(\sqrt{W_+W_-}\log(1/\delta)\right),
\end{align}
where we have again used \cref{eq:magic_sum}.
\end{proof}

\subsection{Application to st-connectivity}

As an example application of our algorithm, we analyze the query complexity of
$st$-connectivity on an $n$-vertex graph. There is a span program
$\mathcal{P}$ such that for inputs $x$ where there is a path from $s$
to $t$, $\wp{\mathcal{P}}{x}=R_{s,t}(x)$ where $R_{s,t}(x)$ is the effective
resistance from $s$ to $t$ on the subgraph induced by $x$,
 and for inputs $x$ where there is not a
path from $s$ to $t$, $\wm{\mathcal{P}}{x}=C_{s,t}(x)$, where $C_{s,t}(x)$ is the
effective capacitance between $s$ and $t$
\cite{belovsSpanProgramsQuantum2012,jarretQuantumAlgorithmsConnectivity2018}. In an $n$-vertex graph, the effective resistance is less than $n$, and the effective capacitance is less than $n^{2}$, so by \cref{thm:spEvalNew}, we can determine with bounded error that there is a path on input $x$ with $\tilde{O}(\sqrt{R_{s,t}(x)n^2})$ average queries or that there is not a path with
$\tilde{O}(\sqrt{C_{s,t}(x)n})$ average queries. In the worst case, when $R_{s,t}(x)=n$ or
$C_{s,t}(x)=n^2$, we recover the worst-case query complexity of $O(n^{3/2})$ of the original
span program algorithm.

The 
effective resistance is at most the shortest path between two vertices,
and the effective capacitance is at most the smallest cut between two
vertices. Thus
our algorithm determines whether or not there is a path from $s$ to
$t$ with $\tilde{O}(\sqrt{k}n)$ queries on average if there is a path of length $k$, and if there is no path,
the algorithm uses $\tilde{O}(\sqrt{cn})$ queries on average, where $c$ is the size of
the smallest cut between $s$ and $t$. Importantly, one does not need to know
bounds on $k$ or $c$ ahead of time to achieve this query complexity.

The analysis of the other examples listed in \cref{sec:intro} is similar.


\section{State Conversion Algorithm}\label{sec:StateConv}

Our main result for state conversion is the following: 

\begin{restatable}{theorem}{stateConvNew}\label{thm:stateConvNew}
Let $\mathscr{P}$ be a converting
vector set from $\{\ket{\rho_x}\}_{x\in X}$ to $\{\ket{\sigma_x}\}_{x\in X}$. Then there is a quantum algorithm such that for any $x\in X$, any failure probability $\delta\leq 1/3$, and any error $\varepsilon>0$,
\begin{enumerate}
    \item With probability $1-\delta$, on input $x$ the algorithm algorithm converts $\ket{\rho_x}$ to $\ket{\sigma_x}$ with error $\varepsilon$.
    \item On input $x$, if $w_+(\mathscr P,x)W_-(\mathscr P)\leq w_-(\mathscr P,x)W_+(\mathscr P)$, the average query complexity is
\begin{align}
O\left(\frac{\sqrt{\wp{\mathscr P}{x}
W_-(\mathscr P)}}{\varepsilon^{5}}\log\left(\frac{1}{\varepsilon}\right)\log\left(\frac{W_+(\mathscr P)}{\wp{\mathscr P}{x}\delta}\log\left(\frac{1}{\varepsilon}\right)\right)\right).
\end{align}

     If $w_+(\mathscr P,x)W_-(\mathscr P)\geq w_-(\mathscr P,x)W_+(\mathscr P)$, the average query complexity is
     \begin{equation}
O\left(\frac{\sqrt{\wm{\mathscr P}{x}
W_+(\mathscr P)}}{\varepsilon^{5}}\log\left(\frac{1}{\varepsilon}\right)\log\left(\frac{W_-(\mathscr P)}{\wm{\mathscr P}{x}\delta}\log\left(\frac{1}{\varepsilon}\right)\right)\right).
\end{equation}
\end{enumerate}
\end{restatable}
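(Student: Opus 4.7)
The plan is to build an algorithm analogous to \cref{alg:bool} but adapted to state conversion. The crucial new ingredient is a probing subroutine that, in lieu of the built-in flag of completion available for function decision, separately detects when a scale parameter $\alpha$ is large enough to support an accurate conversion step. I would (i) define a unitary analogous to $\U{\sop P}{x}{\alpha}$, constructed from a converting vector set $\mathscr P$ in the same spirit as the reflections underlying \cref{thm:StateConv}, (ii) use that unitary inside a probing subroutine to iteratively discover the right $\alpha$, and (iii) run a state-conversion step like that of \cref{cor:stateConv} with the discovered $\alpha$. As in the function case, I would run the positive-witness and negative-witness variants in parallel, using the complementary converting vector set $\mathscr P^\dagger$ from \cref{lem:complement}, so that whichever of $\wpo{x}$ and $\wmo{x}$ is smaller drives the query complexity.

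For the probing subroutine, at round $i$ I would set $\alpha_+ = 2^i/\sqrt{W_-(\mathscr P)}$ (and symmetrically $\alpha_-$ on the complementary side) and prove a state-conversion analog of \cref{lem:phase_est_early}: Phase Checking of the analogous unitary on an appropriate initial state derived from $\ket{\rho_x}$ accepts with probability bounded below $\varepsilon^{O(1)}$ when $\alpha^2 \ll \wpo{x}$ and bounded above $1-\varepsilon^{O(1)}$ once $\alpha^2 \gtrsim \wpo{x}$. I would then use Iterative Quantum Amplitude Estimation (\cref{lem:ampEst}) at each round to estimate this acceptance probability to additive precision polynomial in $\varepsilon$, with per-round failure probability at most $\delta/\polylog(W_+/\wpo{x})$. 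When the estimate crosses a threshold, the probe halts and feeds the current $\alpha$ into the conversion step, which by \cref{cor:stateConv} produces $\ket{\sigma_x}$ with error $\varepsilon$ using $O(\sqrt{\wpo{x} W_-}\log(1/\varepsilon)/\varepsilon^2)$ queries.

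Once the per-round costs and probabilities are established, the total average query complexity follows from a geometric-series argument identical in structure to the proof of \cref{thm:spEvalNew}: the cost is dominated by the final round at which the probe halts, per-round failure probabilities sum to $O(\delta)$, and letting the number of repetitions at round $i$ decrease with $i$ restores the improved log factor that matches the shape of the bound in \cref{thm:spEvalNew}. The main obstacle is the $\varepsilon$-budget: the amplitude estimation calls inside the probe each wrap Phase Checking with precision polynomial in $\varepsilon$ (so that the additive error in the state-conversion analog of \cref{lem:phase_est_early}, which ultimately comes from \cref{spec_gap_lemm}, is consistent with $\varepsilon$), and by \cref{lem:ampEst} one needs roughly $1/\varepsilon$ such calls to resolve the acceptance probability to the required precision. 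Combined with the $\log(1/\varepsilon)/\varepsilon^2$ factor already present in \cref{thm:StateConv}, careful tracking of the exponents yields the $\varepsilon^{-5}$ factor in the final bound; showing that the probing step does not cause any further blow-up, and that the output state after the conversion step still has error $\varepsilon$ once all the intermediate approximations are composed, is the delicate part of the analysis.
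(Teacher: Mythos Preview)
Your high-level architecture (iteratively probe for a sufficient scale $\alpha$ via amplitude estimation of Phase Checking, then run a conversion step; do this for both $\mathscr P$ and $\mathscr P^\dagger$) matches the paper's. The gap is in the specifics of what the probe detects and how correctness follows when it halts.

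You propose a \emph{two-sided} analog of \cref{lem:phase_est_early}: acceptance probability is $\le \varepsilon^{O(1)}$ when $\alpha$ is too small and $\ge 1-\varepsilon^{O(1)}$ once $\alpha$ is large enough, so that crossing a threshold near $1$ certifies $\alpha\gtrsim \wpo{x}$ and you may then safely invoke \cref{cor:stateConv} with that $\alpha$. Neither half of this is what actually holds. The only initial state you can prepare is $\ket{0}\ket{\rho_x}=\tfrac{1}{\sqrt2}(\ket{t_{x+}}+\ket{t_{x-}})$; by \cref{claim:stopping}, the $\ket{t_{x-}}$ half has acceptance $O(\hat\varepsilon)$ for \emph{all} relevant $\alpha$, while the $\ket{t_{x+}}$ half has acceptance $\ge 1-\hat\varepsilon$ once $\alpha\ge \wpo{x}$. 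So the acceptance probability climbs to about $\tfrac12$, not to $1$, and there is \emph{no} upper bound on it when $\alpha<\wpo{x}$---the probe can legitimately cross the threshold early. This is exactly why the paper does \emph{not} use \cref{cor:stateConv} as a black box after probing: instead the conversion step is a single Phase Reflection with the \emph{same} unitary and parameters as the probe, and \cref{lem:finalStateAnalysis} \cref{item:break_out} shows that whenever the estimate exceeds $\tfrac12-3\hat\varepsilon$ (for any $\alpha$, even $\alpha<\wpo{x}$), that Phase Reflection already converts $\ket{\rho_x}$ to $\ket{\sigma_x}$ with error $6\sqrt{\hat\varepsilon}$. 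The correctness burden thus shifts from ``probe never halts early'' (which you would need but cannot prove) to ``if the probe halts, conversion succeeds'' (which is \cref{lem:finalStateAnalysis}).

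This also reshapes the $\varepsilon$-bookkeeping. The conversion error is $O(\sqrt{\hat\varepsilon})$, forcing $\hat\varepsilon=\Theta(\varepsilon^2)$; the probing cost per round is then (Phase Checking at precision $\hat\varepsilon^{3/2}/\sqrt{\alpha W_-}$)~$\times$~(amplitude estimation to additive error $\hat\varepsilon/4$), i.e.\ $\tilde O(\sqrt{\alpha W_-}\,\hat\varepsilon^{-5/2})=\tilde O(\sqrt{\alpha W_-}\,\varepsilon^{-5})$. The $\varepsilon^{-5}$ is \emph{not} the product of $\varepsilon^{-1}$ from \cref{lem:ampEst} with the $\varepsilon^{-2}$ of \cref{thm:StateConv}; the State Conversion Stage is a single Phase Reflection and is dominated by the Probing Stage.
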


Comparing \cref{thm:stateConvNew} with \cref{cor:stateConv}, and considering the case of  $\varepsilon,\delta=\Omega(1)$, we see that in the worst case, when we have an input $x$ where $\wp{\mathscr P}{x}=W_+(\mathscr P)$ or $\wm{\mathscr P}{x}=W_-(\mathscr P)$
 the average query complexity of our algorithm is asymptotically the same
as the standard state conversion algorithm. However, when we have an instance $x$ with a smaller value of $w_\pm(\mathscr P,x)$,
then our algorithm has improved query complexity, without knowing anything about the input witness size ahead of time.

Our algorithm has worse scaling in $\varepsilon$ than \cref{cor:stateConv}, so
our algorithm will be most useful when $\varepsilon$
can be constant. One could also do a hybrid approach: initially run
our algorithm and then switch to that of \cref{cor:stateConv}. 

The problem of state conversion is a more general problem than function decision,
 and it can be used to solve the function decision problem.
However, because of the worse scaling with $\varepsilon$ in \cref
{thm:stateConvNew}, we considered function decision separately (see \cref
{sec:func}).

We use \cref{alg:StateConv} to prove \cref{thm:stateConvNew}. We now describe a key unitary, $\Up{\sop P}{x}{\alpha}{\hat{\varepsilon}}$, that appears in the algorithm.
In the following, we use most of the notation conventions of Ref.
\cite{leeQuantumQueryComplexity2011}. 
Let $\{\ket{\mu_i}\}_{i\in [q]}$ and $\{\ket{\nu_i}\}_{i\in [q]}$ be unit vectors in $\mathbb{C}^q$
as defined in \cite[Fact 2.4]{leeQuantumQueryComplexity2011},  such that 
\begin{equation}\label{eq:mu_nu}
\braket{\mu_i}{\nu_j}=\frac{k}{2(k-1)}(1-\delta_{i,j}).
\end{equation}For $X\subseteq [q]^n,$ let
$\mathscr{P}=\left(\{\ket{u_{xj}}\},\{\ket{v_{xj}}\}\right)$ where 
$\forall x\in X, j\in [n],\ket{v_{xj}},\ket{u_{xj}}\in\mathbb{C}^m$ be a converting
vector set from $\rho$ to $\sigma$. For all $x\in X$, the states $\ket{\rho_x}$ and $\ket{\sigma_x}$ are in the Hilbert space $\sop H$. Then for all $x\in X$, define 
$\ket{t_{x\pm}},
\ket{\psi_{x,\alpha,\hat{\varepsilon}}}\in(\mathbb{C}^2\otimes \sop H)\oplus(\mathbb{C}^n\otimes \mathbb{C}^q\otimes \mathbb{C}^m)$ as
\begin{align}\label{defeq:t_pm}
\ket{t_{x\pm}}=\frac{1}{\sqrt{2}}\left(\ket{0}\ket{\rho_x}\pm\ket{1}\ket{\sigma_x}\right), \quad \textrm{ and }\quad
\ket{\psi_{x,\alpha,\hat{\varepsilon}}}=\sqrt{\frac{\hat{\varepsilon}}{\alpha}}\ket{t_{x-}}-\sum_{j\in[n]}\ket{j}\ket{\mu_{x_j}}\ket{u_{xj}},
\end{align}
where $\alpha$ is
 analogous to the parameter $\alpha$ in \cref{eq:alphaIntro}. We will choose 
$\hat{\varepsilon}$ to achieve a desired
accuracy of $\varepsilon$ in our state conversion procedure. 
Set $\Lalp$ to equal the projection onto the orthogonal complement of the span
of the vectors $\{\ket{\psi_{x,\alpha,\hat{\varepsilon}}}\}_{x\in X}$, and set
$\PHxp=I-\sum_{j\in[n]}\proj{j}\otimes \proj{\mu_{x_j}}\otimes
I_{\mathbb{C}^n}$. Finally, we set $\Up{\sop
P}{x}{\alpha}{\hat{\varepsilon}}=(2\PHxp-I)(2\Lalp-I)$. The reflection $2\PHxp-I$ can be
implemented with two applications of $O_x$
\cite{leeQuantumQueryComplexity2011}, and the reflection $(2\Lalp-I)$ is independent of $x$ and so requires no queries. 

As with function decision, the 
time and query complexity of the algorithm is dominated by the number of applications of
$\Up{\sop P}{x}{\alpha}{\hat{\varepsilon}}$. If $T_U$ is the 
time required to implement $\Up{\sop P}{x}{\alpha}{\hat{\varepsilon}}$, then
the time complexity of our algorithm is simply the query complexity times $T_U$.


\begin{algorithm}[h!]
    \DontPrintSemicolon
    \SetKwInOut{Input}{Input}
    \SetKwInOut{Output}{Output}
    \Input{Converting vector set $\mathscr P$ from $\rho$ to $\sigma$, failure probability $\delta<1/3$, error $\varepsilon$, oracle $O_x$, initial state $\ket{\rho_x}$}
    \Output{$\ket{\tilde{\sigma}_x}$ such that $\|\ket{\tilde{\sigma}_x}-\ket{1}\ket{\sigma_x}\ket{0}\|\leq \varepsilon$}
    \tcc{Probing Stage}
    $\hat{\varepsilon}\gets \varepsilon^2/36$\; \label{algline:one}
    $T\gets \left\lceil\log \left(W_+(\mathscr P)W_-(\mathscr P)\right)\right\rceil$\;
    \For{i=0 \KwTo $T$}
    {
      $\delta_i\gets 2^{\log \delta-T+i-1}$
      
      \For{$\mathscr P'\in\{\mathscr P, \mathscr P^\dagger\}$}{
      $\alpha\gets 2^i/W_-(\mathscr P')$\;
      $\mathcal{A}\gets D(\Up{\mathscr P'}{x}{\alpha}{\hat{\varepsilon}})$ (\cref{lem:phase_det}) to precision $\hat{\varepsilon}^{3/2}/\sqrt{\alpha W_-(\mathscr P')}$ and accuracy $\hat{\varepsilon}^2$ \;              
        $\hat{a}\gets$ Amplitude Estimation (\cref{lem:ampEst}) of probability of outcome $\ket{0}_B$ in register $B$ when  $\mathcal{A}$ acts on $(\ket{0}\ket{\rho_x})_A\ket{0}_B$ to additive error $\hat{\varepsilon}/4$ with probability of failure $\delta_i$ \label{line:amp_Est_SC}\; 
        \lIf{$\hat{a}-1/2>-\frac{11}{4}\hat{\varepsilon}$}
        {
          Continue to State Conversion Stage
        }

      }

    }
    \tcc{State Conversion Stage}

      Apply $R(\Up{\mathscr P'}{x}{\alpha}{\hat{\varepsilon}})$ (\cref{lem:phase_refl}) with precision $\hat{\varepsilon}^{3/2}/\sqrt{\alpha W_-(\mathscr P')}$ and accuracy $\hat{\varepsilon}^2$ to $(\ket{0}\ket{\rho_x})_A\ket{0}_B$  and return the result \label{line:SC_conversion_stage}\;
    
    \caption{}
    \label{alg:StateConv}
\end{algorithm}{}


\noindent \textbf{High level idea of \cref{alg:StateConv}}: when we apply Phase Reflection of $\Up{\mathscr
 P'}{x}{\alpha}{\hat{\varepsilon}}$ (for $\mathscr P'\in\{\mathscr P, \mathscr P^\dagger\})$ in \cref{line:SC_conversion_stage} to
 $(\ket{0}\ket{\rho_x})_A\ket{0}_B=\frac{1}{\sqrt{2}}(\ket{t_{x+}}_A\ket{0}_B+\ket{t_{x-}}_A\ket{0}_B)$,
 we want $\frac{1}{\sqrt{2}}\ket{t_{x+}}_A\ket{0}_B$ to pick up a $+1$ phase and 
 $\frac{1}{\sqrt{2}}\ket{t_{x-}}_A\ket{0}_B$ to pick up a $-1$ phase. (Note that in this case, half of 
 the amplitude of the state is picking up a $+1$ phase, and half is picking up a $-1$ phase.) If this were to happen perfectly,
 we would have the desired state $(\ket{1}\ket{\sigma_x})_A\ket{0}_B=\frac{1}{\sqrt{2}}(\ket{t_{x+}}_A\ket{0}_B
 -\ket{t_{x-}}_A\ket{0}_B)$. We show that if $\alpha$ is larger than a critical value that depends on the witness size of the input $x$, then in \cref{line:SC_conversion_stage}, we will mostly pick up the desired phase.
However, we don't know ahead of time how large $\alpha$ should be. To
determine this, we implement the Probing Stage (Lines 1-9), which uses Amplitude
Estimation of a Phase Checking subroutine to test exponentially increasing values of $\alpha$. 

We use the following two Lemmas (\cref{claim:stopping} and 
\cref{lem:finalStateAnalysis}) to analyze 
\cref{alg:StateConv} and prove \cref{thm:stateConvNew}:

\begin{restatable}{lemma}{half}\label{claim:stopping}
For a converting vector set $\mathscr P$ that converts $\rho$ to $\sigma$, and Phase Checking of $\Up{\mathscr
 P}{x}{\alpha}{\hat{\varepsilon}}$ done with accuracy $\hat{\varepsilon}^2$ and 
precision $\Theta = \hat{\varepsilon}^{3/2}/\sqrt{\alpha W_-(\mathscr P)}$, then 
\begin{enumerate}
\item If $\alpha\geq \wp{\mathscr P}{x}$, then  $\|\Pi_0(\Up{\mathscr
 P}{x}{\alpha}{\hat{\varepsilon}})\ket{t_{x+}}\ket{0}\|^2> 1-\hat{\varepsilon}.$ \label{claim:stopping_1}
\item If $\alpha\geq 1/W_-(\mathscr P)$, then $\| P_\Theta(\Up{\mathscr
 P}{x}{\alpha}{\hat{\varepsilon}})\ket{t_{x-}} \|^2 \leq \frac{\hat{\varepsilon}^2}{2}$ and $\left\|\Pi_0(\Up{\mathscr
 P}{x}{\alpha}{\hat{\varepsilon}})\ket{t_{x-}}\ket{0}\right\|\leq \hat{\varepsilon}(1+1/\sqrt{2})$.\label{claim:stopping_2}
\end{enumerate}
\end{restatable}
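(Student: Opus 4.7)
The plan is to mirror the standard state-conversion analysis (in the style of Lee--Mittal--Reichardt--\v{S}palek--Szegedy), adapted to the parameterized unitary $\Up{\mathscr P}{x}{\alpha}{\hat{\varepsilon}}$. For part 1, I will explicitly construct a $+1$ eigenvector of $\Up{\mathscr P}{x}{\alpha}{\hat{\varepsilon}}$ close to $\ket{t_{x+}}$; for part 2, I will apply the effective spectral gap lemma (\cref{spec_gap_lemm}) directly to $\ket{\psi_{x,\alpha,\hat{\varepsilon}}}$.

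For part 1, define the candidate eigenvector
\[
\ket{\phi_x} \;=\; \ket{t_{x+}} \;+\; c\sum_{j\in[n]}\ket{j}\ket{\nu_{x_j}}\ket{v_{xj}}, \qquad c = \sqrt{\hat{\varepsilon}/\alpha}\,\tfrac{k-1}{k}.
\]
I will first check that $\PHxp\ket{\phi_x}=\ket{\phi_x}$: the $\ket{t_{x+}}$ piece lies in $\mathbb{C}^2\otimes\sop H$ on which $\PHxp$ is the identity, while $\braket{\mu_{x_j}}{\nu_{x_j}}=0$ by \cref{eq:mu_nu} kills the correction piece too. Next, I will verify that $\ket{\phi_x}$ is orthogonal to every $\ket{\psi_{y,\alpha,\hat{\varepsilon}}}$, hence $\Lalp\ket{\phi_x}=\ket{\phi_x}$ and $\Up{\mathscr P}{x}{\alpha}{\hat{\varepsilon}}\ket{\phi_x}=\ket{\phi_x}$. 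The cross-subspace pieces vanish for free, leaving
\[
\braket{\phi_x}{\psi_{y,\alpha,\hat{\varepsilon}}}=\sqrt{\hat{\varepsilon}/\alpha}\,\tfrac{1}{2}(\rho-\sigma)_{xy}-c\cdot\tfrac{k}{2(k-1)}\sum_{j:x_j\ne y_j}\braket{v_{xj}}{u_{yj}}.
\]
Using Hermiticity of $\rho-\sigma$, the converting-vector-set condition (\cref{eq:filteredNorm}) rewrites the second sum as $(\rho-\sigma)_{xy}$, and the chosen $c$ makes the two terms cancel. Computing $\braket{\phi_x}{t_{x+}}=1$ and $\|\phi_x\|^2=1+c^2\wp{\mathscr P}{x}\le 1+\hat{\varepsilon}\wp{\mathscr P}{x}/\alpha\le 1+\hat{\varepsilon}$ (using $\alpha\ge\wp{\mathscr P}{x}$), the variational bound on the $+1$ eigenspace projection gives
\[
\|P_0(\Up{\mathscr P}{x}{\alpha}{\hat{\varepsilon}})\ket{t_{x+}}\|^2\;\ge\;\tfrac{1}{\|\phi_x\|^2}\;\ge\;\tfrac{1}{1+\hat{\varepsilon}}\;>\;1-\hat{\varepsilon}.
\]
The lower bound in \cref{lem:phase_det} then yields $\|\Pi_0\ket{t_{x+}}\ket{0}\|^2>1-\hat{\varepsilon}$.

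For part 2, I will apply the effective spectral gap lemma with $\Pi=\PHxp$, $\Lambda=\Lalp$, and $\ket{w}=\ket{\psi_{x,\alpha,\hat{\varepsilon}}}$. By definition $\Lalp\ket{\psi_{x,\alpha,\hat{\varepsilon}}}=0$, and $\PHxp$ annihilates exactly the $\ket{j}\ket{\mu_{x_j}}\ket{u_{xj}}$ components, so $\PHxp\ket{\psi_{x,\alpha,\hat{\varepsilon}}}=\sqrt{\hat{\varepsilon}/\alpha}\ket{t_{x-}}$. Then
\[
\sqrt{\hat{\varepsilon}/\alpha}\,\bigl\|P_\Theta\ket{t_{x-}}\bigr\|\;\le\;\tfrac{\Theta}{2}\sqrt{\hat{\varepsilon}/\alpha+\wm{\mathscr P}{x}}.
\]
Substituting $\Theta=\hat{\varepsilon}^{3/2}/\sqrt{\alpha W_-(\mathscr P)}$, using $\wm{\mathscr P}{x}\le W_-(\mathscr P)$, and $\alpha\ge 1/W_-(\mathscr P)$ (so $\hat{\varepsilon}/\alpha\le\hat{\varepsilon} W_-(\mathscr P)\le W_-(\mathscr P)$), the algebra simplifies to $\|P_\Theta\ket{t_{x-}}\|^2\le\hat{\varepsilon}^2/2$. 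The second bound follows by combining the upper bound in \cref{lem:phase_det} (with accuracy $\hat{\varepsilon}^2$) with the inequality $\sqrt{a+b}\le\sqrt{a}+\sqrt{b}$:
\[
\|\Pi_0\ket{t_{x-}}\ket{0}\|\;\le\;\sqrt{\|P_\Theta\ket{t_{x-}}\|^2+\hat{\varepsilon}^2}\;\le\;\hat{\varepsilon}/\sqrt{2}+\hat{\varepsilon}\;=\;\hat{\varepsilon}(1+1/\sqrt{2}).
\]

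The main obstacle is the orthogonality computation in part 1: correctly tracking which terms survive the cross-subspace annihilation, identifying that both surviving terms are proportional to the single scalar $(\rho-\sigma)_{xy}$ (which requires invoking Hermiticity of $\rho-\sigma$ to convert $\sum\braket{v_{xj}}{u_{yj}}$ into the canonical form in \cref{eq:filteredNorm}), and pinning down the precise coefficient $c=\sqrt{\hat{\varepsilon}/\alpha}\,(k-1)/k$ that secures cancellation. Everything else is bookkeeping with previously established lemmas.
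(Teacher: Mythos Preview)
Your proposal is correct and follows essentially the same argument as the paper: the same eigenvector $\ket{\phi_x}$ (with the same coefficient $c$) for part~1, and the effective spectral gap lemma applied to $\ket{\psi_{x,\alpha,\hat\varepsilon}}$ for part~2. The only cosmetic differences are that the paper computes $\braket{\psi_{y,\alpha,\hat\varepsilon}}{\phi_x}$ directly (avoiding the appeal to Hermiticity of $\rho-\sigma$), rescales $\ket{\psi_{x,\alpha,\hat\varepsilon}}$ by $\sqrt{\alpha/\hat\varepsilon}$ before applying \cref{spec_gap_lemm}, and derives the final bound on $\|\Pi_0\ket{t_{x-}}\ket{0}\|$ by splitting $I=P_\Theta+\overline{P}_\Theta$ and using the second bullet of \cref{lem:phase_det} rather than the first bullet plus $\sqrt{a+b}\le\sqrt a+\sqrt b$; both routes yield the same bound. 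One small point: your step $\hat\varepsilon W_-\le W_-$ tacitly assumes $\hat\varepsilon\le 1$, which holds in the algorithm's context but is worth stating.
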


\cref{claim:stopping} \cref{claim:stopping_2} ensures that the $\ket{t_{x-}}$ part of the state mostly
picks up a $-1$ phase when we apply Phase Reflection regardless of the value of $\alpha$, and \cref{claim:stopping} \cref{claim:stopping_1} ensures that when $\alpha$ is large enough, the $\ket{t_{x+}}$ part of the state mostly
picks up a $+1$ phase.
\cref{claim:stopping} plays a similar role in state conversion to \cref{lem:phase_est_early} in function decision. It shows us that
the behavior of the algorithm changes at some point when $\alpha$ is large enough,
without our having to know $\alpha$ ahead of time (\cref{claim:stopping_1}) but it 
also is used to show that we don't terminate early when we shouldn't, leading to an
incorrect outcome (\cref{claim:stopping_2}). 

\begin{proof}[Proof of \cref{claim:stopping}]
Throughout this proof, $P_\Theta$ and $\Pi_0$ are shorthand 
for $P_\Theta(\Up{\mathscr{P}}{x}{\alpha}{\hat{\varepsilon}})$ and 
$\Pi_0(\Up{\mathscr{P}}{x}{\alpha}{\hat{\varepsilon}})$, with Phase Checking done
to precision $\Theta=\hat{\varepsilon}^{3/2}/\sqrt{\alpha W_-(\mathscr P)}$ and accuracy 
$\hat{\varepsilon}^2$.

\vspace{.5cm}
\textit{Part 1:}
We first prove that
$\|P_0\ket{t_{x+}}\|^2\geq 1-\hat{\varepsilon}$, which by \cref{lem:phase_det} gives us a bound on $\|\Pi_0\ket{t_{x+}}\|^2$. Following \cite[Claim 4.4]{leeQuantumQueryComplexity2011}, we consider the state
\begin{equation}
\ket{\varphi}=\ket{t_{x+}}+\frac{\sqrt{\hat{\varepsilon}}}{2\sqrt{\alpha}}\frac{2(k-1)}{k}\sum_{j\in[n]}\ket{j}\ket{\nu_{x_j}}\ket{v_{xj}},
\end{equation}
where $\ket{\nu_{x_j}}$ is from \cref{eq:mu_nu}.
Note that for $\ket{\psi_{y,\alpha,\hat{\varepsilon}}}$ for all $y\in X$, because
$\braket{t_{y-}}{t_{x+}}=\frac{1}{2}\left(\braket{\rho_y}{\rho_x}-\braket{\sigma_y}{\sigma_x}\right)$,
and also $\sum_{j:x_j\neq
y_j}\braket{u_{yj}}{v_{xj}}=\braket{\rho_y}{\rho_x}-\braket{\sigma_y}{\sigma_x}$
from the constraints of \cref{eq:filteredNorm}, we have
\begin{equation}
\braket{\psi_{y,\alpha,\hat{\varepsilon}}}{\varphi}=\frac{\sqrt{\hat{\varepsilon}}}{\sqrt{\alpha}}\braket{t_{y-}}{t_{x+}}-\frac{\sqrt{\hat{\varepsilon}}}{2\sqrt{\alpha}}\sum_{j:x_j\neq y_j}\braket{u_{yj}}{v_{xj}}=0.
\end{equation}

Because $\ket{\varphi}$ is orthogonal to all of the $\ket{\psi_{y,\alpha,\hat{\varepsilon}}}$, we have
$\Lalp\ket{\varphi}=\ket{\varphi}$. Also, $\PHxp\ket{\varphi}=\ket{\varphi}$ since
$\PHxp\ket{t_{x+}}=\ket{t_{x+}}$ and $\braket{\mu_{x_j}}{\nu_{x_j}}=0$ for every $j$.
Thus $P_0\ket{\varphi}=\ket{\varphi}$. 

Note
\begin{equation}
\braket{\varphi}{\varphi}=1+\frac{\hat{\varepsilon}}{4\alpha}\frac{4(k-1)^2}{k^2}w_+(\mathscr P,x)< 1+\hat{\varepsilon},
\end{equation}
because of our assumption that $\alpha\geq w_+(\mathscr P,x)$. Also, $\braket{t_{x+}}{\varphi}=1$,
so 
\begin{equation}
\|P_0\ket{t_{x+}}\|^2\geq \|\proj{\varphi}\ket{t_{x+}}\|^2/\|\ket{\varphi}\|^4\geq \frac{1}{1+\hat{\varepsilon}}> 1-\hat{\varepsilon}.
\end{equation}
Then by \cref{lem:phase_det}, we have $\|P_0\ket{t_{x+}}\|^2\leq \|\Pi_0\ket{t_{x+}}\ket{0}\|^2$, so 
\begin{equation}\label{eq:lined}
\|\Pi_0\ket{t_{x+}}\ket{0}\|^2> 1-\hat{\varepsilon}.
\end{equation}

\vspace{.5cm}
\textit{Part 2:}
Let $\ket{w} = \sqrt{\frac{\alpha}{\hat{\varepsilon}}} \ket{\psi_{x,\alpha,\hat{\varepsilon}}}$, 
 so $\Lalp \ket{w} = 0$ and $\PHxp \ket{w} = \ket{t_{x-}}$.
Applying \cref{spec_gap_lemm}, we have
 \begin{equation}\label{eq:specGap1}
\| P_\Theta \ket{t_{x-}} \|^2 = \| P_\Theta \PHxp\ket{w} \|^2 \leq \frac{\Theta^2 }{4}\| \ket{w}\|^2.
 \end{equation}
Now
\begin{equation}\label{eq:specGap2}
\left\| \ket{w}\right\|^2 = \frac{\alpha}{\hat{\varepsilon}}\left( \frac{\hat{\varepsilon}}{\alpha}\iner{t_{x-}} + \Sigma_j \iner{u_{xj}}\right) \leq 1 + \frac{\alpha W_-}{\hat{\varepsilon}}.
\end{equation}
Combining \cref{eq:specGap1,eq:specGap2}, and setting $\Theta = \hat{\varepsilon}^{3/2}/\sqrt{\alpha W_-}$, we have that
\begin{equation} \label{eq:Ptheta_tx-}
\| P_\Theta \ket{t_{x-}} \|^2\leq \frac{\hat{\varepsilon}^3}{4\alpha W_-}\left( 1 + \frac{\alpha W_-}{\hat{\varepsilon}} \right) \leq \frac{\hat{\varepsilon}^2}{2},
\end{equation}
where we have used our assumption from the statement of the lemma that $\alpha \geq 1/W_-$.

We next bound $\|\Pi_0\ket{t_{x-}})\ket{0}\|.$
Inserting the identity operator $I=P_{\Theta}+\overline{P}_{\Theta}$ for $\Theta=\hat{\varepsilon}^{3/2}/\sqrt{\alpha W_-}$, we have
\begin{align}
 \|\Pi_{0}\ket{t_{x-}}\ket{0}\|
& =\left\|\Pi_{0}\left(\left(P_{\Theta}+\overline{P}_{\Theta}\right)\ket{t_{x-}}\right)\ket{0}\right\|\\
&\leq \|P_{\Theta}\ket{t_{x-}}\|+\|\Pi_{0}\left(\overline{P}_{\Theta}\ket{t_{x-}}\right)\ket{0}\|\\
&\leq \frac{\hat{\varepsilon}}{\sqrt{2}}+\hat{\varepsilon}.
\end{align}
where the second line comes from the triangle inequality and the fact that a projector acting on a vector cannot increase its norm.
The first term in the final line comes from \cref{eq:Ptheta_tx-}, and the second term comes \cref{lem:phase_det}.
\end{proof}

The following lemma, \cref{lem:finalStateAnalysis}, tells us that when we break out of the Probing Stage due to
a successful Amplitude Estimation in \cref{line:amp_Est_SC}, we will convert $\ket{\rho_x}$
to $\ket{\sigma_x}$ with appropriate error in the State Conversion Stage in \cref{line:SC_conversion_stage}, regardless of the value of $\alpha$ (\cref{item:break_out} in \cref{lem:finalStateAnalysis}). However, \cref{lem:finalStateAnalysis} also tells us that once $\alpha\geq \wp{\mathscr P}{x}$,
then if Amplitude Estimation does not fail, we will exit the Probing Stage (\cref{item:large_alpha} in \cref{lem:finalStateAnalysis}). Together \cref{item:break_out} and \cref{item:large_alpha} ensure that once $\alpha$ is large enough, the algorithm will be very likely to terminate and correctly produce the output state, but before $\alpha$ is large enough, if there is some additional structure in the converting
vector set that causes our Probing Stage to end early (when $\alpha< \wp{\mathscr P}{x}$), we will still have a successful result.

\begin{restatable}{lemma}{finalState}\label{lem:finalStateAnalysis}
For a converting vector set $\mathscr P$ that converts $\rho$ to $\sigma$, and Phase Checking and Phase Reflection of $\Up{\mathscr
 P}{x}{\alpha}{\hat{\varepsilon}}$ done with accuracy $\hat{\varepsilon}^2$ and 
precision $\Theta = \hat{\varepsilon}^{3/2}/\sqrt{\alpha W_-(\mathscr P)}$ 
for $\hat{\varepsilon}\leq1/9$, and $\alpha\geq 1/W_-(\mathscr{P}),$
\begin{enumerate}
\item If 
\begin{equation}\label{eq:condition_breaking}
\|\Pi_0(\Up{\mathscr P}{x}{\alpha}{\hat{\varepsilon}})\ket{0}\ket{\rho_x}_A\ket{0}_B\|^2> \frac{1}{2}-3\hat{\varepsilon},
\end{equation}
then 
 \begin{equation}\label{eq:SC_success}
 \|R(\Up{\mathscr{P}}{x}{\alpha}{\hat{\varepsilon}})(\ket{0}\ket{\rho_x})_A\ket{0}_B-(\ket{1}\ket{\sigma_x})_A\ket{0}_B\|\leq 6\sqrt{\hat{\varepsilon}}.
 \end{equation} \label{item:break_out}
 \item If $\alpha\geq \wp{\mathscr P}{x}$
then $\|\Pi_0(\Up{\mathscr P}{x}{\alpha}{\hat{\varepsilon}})(\ket{0}\ket{\rho_x})_A\ket{0}_B\|^2> \frac{1}{2}-\frac{5\hat{\varepsilon}}{2}$.
\label{item:large_alpha}
\end{enumerate}
\end{restatable}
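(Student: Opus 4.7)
Both parts rest on the same decomposition. From \cref{defeq:t_pm}, we can write
\begin{align}
(\ket{0}\ket{\rho_x})_A\ket{0}_B=\tfrac{1}{\sqrt{2}}\bigl((\ket{t_{x+}})_A\ket{0}_B+(\ket{t_{x-}})_A\ket{0}_B\bigr),\quad
(\ket{1}\ket{\sigma_x})_A\ket{0}_B=\tfrac{1}{\sqrt{2}}\bigl((\ket{t_{x+}})_A\ket{0}_B-(\ket{t_{x-}})_A\ket{0}_B\bigr),
\end{align}
so the entire analysis reduces to controlling how $\Pi_0$ (equivalently, $R=2\Pi_0-I$) acts on $(\ket{t_{x\pm}})_A\ket{0}_B$. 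Abbreviate $a=\|\Pi_0(\ket{t_{x+}})_A\ket{0}_B\|$ and $b=\|\Pi_0(\ket{t_{x-}})_A\ket{0}_B\|$. Since $\alpha\ge 1/W_-(\mathscr P)$ is assumed throughout, \cref{claim:stopping}\cref{claim:stopping_2} immediately gives $b\le(1+1/\sqrt{2})\hat\varepsilon<2\hat\varepsilon$, a bound I will use in both parts.

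For \cref{item:large_alpha}, I will use the hypothesis $\alpha\ge w_+(\mathscr P,x)$ together with \cref{claim:stopping}\cref{claim:stopping_1} to conclude $a^{2}>1-\hat\varepsilon$, hence $a>\sqrt{1-\hat\varepsilon}$. Combining with $b<2\hat\varepsilon$ via the reverse triangle inequality,
\begin{equation}
\bigl\|\Pi_0(\ket{0}\ket{\rho_x})_A\ket{0}_B\bigr\|^{2}=\tfrac{1}{2}\bigl\|\Pi_0(\ket{t_{x+}})_A\ket{0}_B+\Pi_0(\ket{t_{x-}})_A\ket{0}_B\bigr\|^{2}\ge \tfrac{1}{2}(a-b)^{2}\ge \tfrac{1}{2}(a^{2}-2ab),
\end{equation}
and using $a\le 1$ together with $a^{2}>1-\hat\varepsilon$ and $b<2\hat\varepsilon$ yields the claimed bound $\tfrac12-\tfrac{5\hat\varepsilon}{2}$.

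For \cref{item:break_out}, I begin by extracting, from the hypothesis, that $a$ is close to $1$. The hypothesis says $\|\Pi_0(\ket{0}\ket{\rho_x})_A\ket{0}_B\|^{2}>\tfrac12-3\hat\varepsilon$, so $a+b\ge\|\Pi_0(\ket{t_{x+}})_A\ket{0}_B+\Pi_0(\ket{t_{x-}})_A\ket{0}_B\|>\sqrt{1-6\hat\varepsilon}$. Combined with $b<2\hat\varepsilon$, a short calculation using $\sqrt{1-6\hat\varepsilon}\le 1$ gives $1-a^{2}\le 10\hat\varepsilon$ (valid for $\hat\varepsilon\le 1/9$). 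Next, I rewrite
\begin{equation}
R(\sop U)(\ket{0}\ket{\rho_x})_A\ket{0}_B-(\ket{1}\ket{\sigma_x})_A\ket{0}_B=\tfrac{1}{\sqrt2}\bigl[(R(\sop U)-I)(\ket{t_{x+}})_A\ket{0}_B+(R(\sop U)+I)(\ket{t_{x-}})_A\ket{0}_B\bigr]
\end{equation}
and bound each piece. For the $\ket{t_{x+}}$ piece, the identity $R(\sop U)-I=-2\overline{\Pi}_0$ gives $\|(R(\sop U)-I)(\ket{t_{x+}})_A\ket{0}_B\|=2\sqrt{1-a^{2}}\le 2\sqrt{10\hat\varepsilon}$. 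For the $\ket{t_{x-}}$ piece, I split $\ket{t_{x-}}=P_\Theta\ket{t_{x-}}+\overline{P}_\Theta\ket{t_{x-}}$: on the first summand I use $\|R(\sop U)+I\|\le 2$ together with the bound $\|P_\Theta\ket{t_{x-}}\|\le\hat\varepsilon/\sqrt2$ from \cref{claim:stopping}\cref{claim:stopping_2}; on the second summand I apply the second bullet of \cref{lem:phase_refl} with accuracy $\hat\varepsilon^{2}$. Adding the two contributions and the triangle inequality then produces a bound of roughly $\sqrt{20}\,\sqrt{\hat\varepsilon}$ plus lower-order terms, which for $\hat\varepsilon\le 1/9$ is bounded by $6\sqrt{\hat\varepsilon}$.

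The only non-routine step is the passage from the probing hypothesis to the estimate $1-a^{2}\le 10\hat\varepsilon$, since this is where the factors $3$ vs.\ $5/2$ in the two parts are reconciled and where the constraint $\hat\varepsilon\le1/9$ is genuinely needed to keep $\sqrt{1-6\hat\varepsilon}$ bounded away from $0$; the rest is bookkeeping with the triangle inequality and the input bounds supplied by \cref{claim:stopping} and \cref{lem:phase_refl}.
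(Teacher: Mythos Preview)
Your proposal is correct and follows essentially the same route as the paper: the same $\ket{t_{x\pm}}$ decomposition, the same use of $R-I=-2\overline{\Pi}_0$ for the $\ket{t_{x+}}$ piece, the same extraction of $1-a^2\le 10\hat\varepsilon$ from the probing hypothesis via the triangle inequality and \cref{claim:stopping}\cref{claim:stopping_2}, and the same reverse-triangle argument for \cref{item:large_alpha}. The only difference is that for the $\ket{t_{x-}}$ piece in \cref{item:break_out} the paper uses the simpler identity $R+I=2\Pi_0$ and bounds $\tfrac{2}{\sqrt2}\|\Pi_0\ket{t_{x-}}\ket{0}\|$ directly from \cref{claim:stopping}\cref{claim:stopping_2}, rather than your $P_\Theta/\overline{P}_\Theta$ split via \cref{lem:phase_refl}; both give bounds of the same order.
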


\begin{proof}
~\\
For the rest of the proof, we will use $P_\Theta$, $\Pi_0$, 
and $R$ as shorthand for $P_\Theta(\Up{\mathscr{P}}{x}{\alpha}{\hat{\varepsilon}})$, 
$\Pi_0(\Up{\mathscr{P}}{x}{\alpha}{\hat{\varepsilon}})$, and
 $R(\Up{\mathscr{P}}{x}{\alpha}{\hat{\varepsilon}})$ respectively.

\noindent\textit{Part 1:} We have 
\begin{align}
\|R\ket{0}\ket{\rho_x}\ket{0}-\ket{1}\ket{\sigma_x}\ket{0}\|=
\frac{1}{\sqrt{2}}\|R(\ket{t_{x+}}+\ket{t_{x-}})\ket{0}-(\ket{t_{x+}}-\ket{t_{x-}})\ket{0}\|\nonumber\\
\leq \frac{1}{\sqrt{2}}\|(R-I)\ket{t_{x+}}\ket{0}\|+\frac{1}{\sqrt{2}}\|(R+I)\ket{t_{x-}}\ket{0}\|.\label{eq:errorBoundSC1}
\end{align}

In the first term of \cref{eq:errorBoundSC1}, we can replace $R$ with $\Pi_0-\overline{\Pi}_0$ (as described above \cref{lem:phase_refl}), and we can insert $I=\Pi_0+\overline{\Pi}_0$ to get
\begin{equation}\label{eq:equation_interrupted}
\frac{1}{\sqrt{2}}\|(R-I)\ket{t_{x+}}\ket{0}\|=\frac{1}{\sqrt{2}}\|((\Pi_0-\overline{\Pi}_0)-I)(\Pi_0+\overline{\Pi}_0)\ket{t_{x+}}\ket{0}\|=\frac{2}{\sqrt{2}}\|\overline{\Pi}_0\ket{t_{x+}}\ket{0}\|,
\end{equation}
where we have used the fact that $\Pi_0$ and $\overline{\Pi}_0$ are orthogonal.

To bound $\frac{2}{\sqrt{2}}\|\overline{\Pi}_0\ket{t_{x+}}\ket{0}\|$, we start from our assumption that $\|\Pi_0(\ket{0}\ket{\rho_x})_A\ket{0}_B\|^2\geq 1/2-3\hat{\varepsilon}$. Writing $\ket{0}\ket{\rho_x}$ in terms of $\ket{t_{x+}}$ and $\ket{t_{x+}}$, and using the triangle inequality, we have
\begin{equation}
\frac{1}{2}-3\hat{\varepsilon}\leq \frac{1}{2}\|\Pi_0(\ket{t_{x+}}+\ket{t_{x-}})\ket{0}\|^2
\leq 
\frac{1}{2}\left(\|\Pi_0\ket{t_{x+}}\ket{0}\|+\|\Pi_0\ket{t_{x-}}\ket{0}\|\right)^2\label{line1}.
\end{equation}

From \cref{claim:stopping} \cref{claim:stopping_2} , we have $\|\Pi_0\ket{t_{x-}}\ket{0}\|<2\hat{\varepsilon}$, so plugging into \cref{line1}, we have
\begin{equation}
1-6\hat{\varepsilon}\leq\left(\|\Pi_0\ket{t_{x+}}\ket{0}\|+2\hat{\varepsilon}\right)^2.
\end{equation}
Rearranging, we find:
\begin{equation}
(\sqrt{1-6\hat{\varepsilon}}-2\hat{\varepsilon})^2\leq\|\Pi_0\ket{t_{x+}}\ket{0}\|^2.
\end{equation}
Since $\|\Pi_0\ket{t_{x+}}\ket{0}\|^2+\|\overline{\Pi}_0\ket{t_{x+}}\ket{0}\|^2=1$, we have
\begin{equation}
 \|\overline{\Pi}_0\ket{t_{x+}}\ket{0}\|^2\leq 1-(\sqrt{1-6\hat{\varepsilon}}-2\hat{\varepsilon})^2<10\hat{\varepsilon}.
\end{equation} 
Plugging back into \cref{eq:equation_interrupted}, we find
\begin{equation}
\frac{1}{\sqrt{2}}\|(R-I)\ket{t_{x+}}\ket{0}\|=\frac{2}{\sqrt{2}}\|\overline{\Pi}_0\ket{t_{x+}}\ket{0}\|\leq 2\sqrt{5\hat{\varepsilon}}.
\label{eq:RItplus}
\end{equation}

In the second term of \cref{eq:errorBoundSC1}, we again replace $R$ with 
$\Pi_0-\overline{\Pi}_0$ and replace $I$ with $\Pi_0+\overline{\Pi}_0$ to get
\begin{align}
\frac{1}{\sqrt{2}}\|(R+I)\ket{t_{x-}}\ket{0}\|=
\frac{1}{\sqrt{2}}\|(\Pi_0-\overline{\Pi}_0+\Pi_0+\overline{\Pi}_0)\ket{t_{x-}}\ket{0}\|=\frac{2}{\sqrt{2}}\|\Pi_0\ket{t_{x-}}\ket{0}\|\leq \hat{\varepsilon}(1+1/\sqrt{2})
\label{eq:RItminus_new}
\end{align}
by \cref{claim:stopping} \cref{claim:stopping_2}, where we have used our assumption that
 $\alpha\geq 1/W_-(\mathscr{P})$.

 Combining \cref{eq:RItplus,eq:RItminus_new}, and plugging into into 
\cref{eq:errorBoundSC1} and using that $\hat{\varepsilon}\leq 1/9$, we have
\begin{equation}
\|R\ket{0}\ket{\rho_x}\ket{0}-\ket{1}\ket{\sigma_x}\ket{0}\|
\leq 2\sqrt{5\hat{\varepsilon}}+\hat{\varepsilon}(1+1/\sqrt{2})<6\sqrt{\hat{\varepsilon}}.
\end{equation}

\vspace{.5cm}
\noindent\textit{Part 2:} We analyze $\|\Pi_0{\ket{0}\ket{\rho_x}}\ket{0}\|.$
Using the triangle inequality, we have
\begin{equation}\label{eq:linef}
\|\Pi_0{\ket{0}\ket{\rho_x}}\ket{0}\|\geq\frac{1}{\sqrt{2}}\left(
\|\Pi_0\ket{t_{x+}}\ket{0}\|-\left\|\Pi_0\ket{t_{x-}}\ket{0}\right\|\right).
\end{equation}
The first term we bound using \cref{claim:stopping} \cref{claim:stopping_1} and the second with \cref{claim:stopping} \cref{claim:stopping_2} to give us
\begin{equation}
\|\Pi_0{\ket{0}\ket{\rho_x}}\ket{0}\|\geq \frac{1}{\sqrt{2}}\left(\sqrt{1-\hat{\varepsilon}}-\hat{\varepsilon}\left(1+\frac{1}{\sqrt{2}}\right)\right)
\end{equation}
Squaring both sides and using a series expansion, we find
\begin{equation}
\|\Pi_0{\ket{0}\ket{\rho_x}}\ket{0}\|^2> \frac{1}{2}-\left(\frac{3}{2}+\frac{1}{\sqrt{2}}\right)\hat{\varepsilon}>\frac{1}{2}-\frac{5}{2}\hat{\varepsilon}.
\label{eq:breakout_cond}
\end{equation}
\end{proof}

With \cref{claim:stopping} and \cref{lem:finalStateAnalysis}, we can now proceed
to the proof of \cref{thm:stateConvNew}:

\begin{proof}[Proof of \cref{thm:stateConvNew}]
We analyze \cref{alg:StateConv}. Note that the $l_2$ norm between two quantum states is at most $2$, so we may assume $\varepsilon\leq 2$ and hence $\hat{\varepsilon}\leq 9.$

First we show that the probability of returning a state $\ket{\tilde{\sigma}_x}$ 
such that $\| \ket{\tilde{\sigma}_x} - \ket{1}\ket{\sigma_x}\ket{0} \| \geq \varepsilon$ is at most $\delta$. We first analyze the case that $w_+(\mathscr P,x)W_-(\mathscr P)\leq w_-(\mathscr P,x)W_+(\mathscr P)=w_+(\mathscr P^\dagger,x)W_-(\mathscr P^\dagger)$.
Let $i^* =\lceil \log(w_+(\mathscr P,x)W_-(\mathscr P))\rceil$ 
and $T = \lceil\log W_+(\mathscr P)W_-(\mathscr P)\rceil$.

Notice that once $i\geq i^*$, we have $\alpha\geq \wp{\mathscr P}{x}$, so by 
\cref{lem:finalStateAnalysis} \cref{item:large_alpha} we have
$\|\Pi_0({\ket{0}\ket{\rho_x}})_A\ket{0}_B\|^2\geq \frac{1}{2}-\frac{5\hat{\varepsilon}}{2}$. Thus when we do Amplitude Estimation in \cref{line:amp_Est_SC} of 
\cref{alg:StateConv} to additive error $\hat{\varepsilon}/4$, with probability $1-\delta_{i}$ we will find the probability of outcome $\ket{0}_B$ will be at least
$1/2-11\hat{\varepsilon}/4$, causing us to continue to the State Conversion Stage.
Furthermore, by combining \cref{lem:finalStateAnalysis} \cref{item:break_out} and \cref{item:large_alpha}, 
our algorithm is guaranteed to output the target state within error $6\sqrt{\hat{\varepsilon}}=\varepsilon$
regardless of an error in Amplitude Estimation. Therefore, the algorithm can only return a wrong state before
round $i^*$, and only if Amplitude Estimation fails. 
Thus, we calculate the probability of error as:
\begin{equation}
    Pr(error) = \sum_{i=0}^{i^*-1} \left( \prod_{j=0}^{i-1} \PrCont[j] \right) \cdot \PrErr
\end{equation}
where $\PrCont$ is the
probability of continuing to the next round of the \textbf{for} loop at round
$i$,
and $\PrErr$ is the probability of a failure of Amplitude Estimation at round $i$, both conditioned on reaching round $i$. We upper bound $\PrCont[j]$ by 1 and $\PrErr$ by $2\delta_i$, as $\delta_i$ is the probability of Amplitude Estimation failure in \cref{line:amp_Est_SC}, and we do two rounds (for $\mathscr P$ and $\mathscr P^\dagger$). This then gives us:
\begin{align}
    Pr(error) 
    &\leq \sum_{i=0}^{i^*-1} 2\delta_i \nonumber\\ 
    &= \sum_{i=0}^{i^*-1} 2\delta\cdot 2^{-T +i-1} \nonumber\\
    &\leq\delta,
\end{align}
where we have used our choice of $\delta_i$ and that $i^*\leq T$. Thus the probability of error is bounded by 
$\delta$. 

Now we analyze the average query complexity.
Let $Q(i)$ be the query complexity of the algorithm when it exits the Probing Stage at round $i$.
Then the average query complexity on input $x$ is
\begin{equation}
    \mathbb{E}[Q_x]=\sum_{i=0}^T \left( \prod_{j=0}^{i-1}\PrCont[j] \right)\left(1-\PrCont[i]\right) Q(i),
\end{equation}
where $\left( \prod_{j=0}^{i-1}\PrCont[j] \right)\left(1-\PrCont[i]\right)$ is the probability of terminating at round $i$.

At the $i$th round of the Probing Stage, we implement
Phase Checking with precision $\hat{\varepsilon}^{3/2}/\sqrt{2^i}$ and
accuracy $\hat{\varepsilon}^2$, which uses $O\left(\frac{\sqrt{2^i
}}{\hat{\varepsilon}^{3/2}}\log\left(\frac{1}{\hat{\varepsilon}}\right)\right)$ queries for a single iteration. 
By \cref{lem:ampEst}, we use
$O\left(\frac{1}{\hat{\varepsilon}}\log\left(\frac{1}{\delta_i}\log\left(\frac{1}{\hat{\varepsilon}}\right)\right)\right)$ 
applications of Phase Checking inside the Iterative Amplitude Estimation subroutine to reach success probability of at least
$1-\delta_i$ with error $\hat{\varepsilon}/4$. 
Therefore,
\begin{align}
Q(i)&=\left(\sum_{j=0}^i O\left(\frac{\sqrt{2^j
}}{\hat{\varepsilon}^{5/2}}\log\left(\frac{1}{\hat{\varepsilon}}\right) \log\left(\frac{1}{\delta_j}\log\left(\frac{1}{\hat{\varepsilon}}\right)\right)\right)
\right)+O\left(\frac{\sqrt{2^i
}}{\hat{\varepsilon}^{3/2}}\log\left(\frac{1}{\hat{\varepsilon}}\right)\right)\nonumber\\
&=\left(\sum_{j=0}^i O\left(\frac{\sqrt{2^j
}}{\hat{\varepsilon}^{5/2}}\log\left(\frac{1}{\hat{\varepsilon}}\right) \log\left(\frac{1}{\delta_j}\log\left(\frac{1}{\hat{\varepsilon}}\right)\right)\right)
\right),
\end{align}
where the second term in the first line is the query complexity of the State Conversion Stage,
so the complexity is dominated by the Probing Stage.

We divide the analysis into two parts: $i\leq i^*$ and $i^*<i\leq T$.
When $i \leq i^*$, we use the trivial bound $\left( \prod_{j=0}^{i-1} \PrCont[j] \right)\left(1-\PrCont[i]\right) \leq 1$.
Thus, the contribution to the average query complexity from rounds $i$ with $i \leq i^*$ is at most:
\begin{align}\label{eq:total_q_early}
    \sum_{i=0}^{i^*} Q(i)&\leq \sum_{i=0}^{i^*} \sum_{j=0}^iO\left(\frac{\sqrt{2^j
}}{\hat{\varepsilon}^{5/2}}\log\left(\frac{1}{\hat{\varepsilon}}\right) \log\left(\frac{1}{\delta_j}\log\left(\frac{1}{\hat{\varepsilon}}\right)\right)\right) \\\nonumber
&=\hat{\varepsilon}^{-5/2}\log\left(\frac{1}{\hat{\varepsilon}}\right)\sum_{i=0}^{i^*} \sum_{j=0}^i O\left(\sqrt{2^j}\left(T- j +\log\left(\frac{1}{\delta}\log\left(\frac{1}{\hat{\varepsilon}}\right)\right) +1\right)\right) \\\nonumber
    &=O\left(\frac{\sqrt{\wp{\mathscr P}{x}
W_-(\mathscr P)}}{\hat{\varepsilon}^{5/2}}\log\left(\frac{1}{\hat{\varepsilon}}\right)\log\left(\frac{W_+(\mathscr P)}{\wp{\mathscr P}{x}\delta}\log\left(\frac{1}{\hat{\varepsilon}}\right)\right)\right)
\end{align}
where we have used the following inequality twice:
\begin{equation}\label{eq:inequality2}
    \sum_{j=0}^i \sqrt{2^j}(\log W-j) \; \leq\;4\sqrt{2^i}(\log W-i+3).
\end{equation}

For $i $ from $i^*+1$ to $T$, as discussed below \cref{eq:breakout_cond}, Amplitude Estimation in Line 8 should
produce an estimate that triggers breaking out of the Probing Stage at Line 9. Thus the 
probability of continuing to the  next iteration depends on Amplitude Estimation failing, 
which happens with probability $\delta_i\leq \delta\leq\frac{1}{2\sqrt{2}}$, since $\delta\leq 1/3$. Using $(1-\PrCont[i])\leq 1$, we thus have
\begin{align}
\left( \prod_{j=0}^{i-1}\PrCont[j] \right)\left(1-\PrCont[i]\right)\leq \left(\frac{1}{2\sqrt{2}}\right)^{i-i^*}.
\end{align}
The contribution to the average query complexity for rounds after $i^*$ is therefore
\begin{align}\label{eq:total_Q_late}
    \sum_{i=i^*+1}^{T} \left(\frac{1}{2\sqrt{2}}\right)^{i-i^*}Q(i) 
    &= \sum_{i=i^*+1}^{T} \left(\frac{1}{2\sqrt{2}}\right)^{i-i^*} \sum_{j=0}^i 
    O\left(\frac{\sqrt{2^j}}{\hat{\varepsilon}^{5/2}}\log\left(\frac{1}{\hat{\varepsilon}}\right)
    \log\left(\frac{1}{\delta_j}\log\left(\frac{1}{\hat{\varepsilon}}\right)\right)\right)\nonumber \\
&= O\left(\frac{\sqrt{\wp{\mathscr P}{x}
W_-(\mathscr P)}}{\hat{\varepsilon}^{5/2}}\log\left(\frac{1}{\hat{\varepsilon}}\right)\log\left(\frac{W_+(\mathscr P)}{\wp{\mathscr P}{x}\delta}\log\left(\frac{1}{\hat{\varepsilon}}\right)\right)\right)
\end{align}
where we have used \cref{eq:inequality2} and \cref{eq:magic_sum2}. 

Combining \cref{eq:total_q_early,eq:total_Q_late} and replacing $\hat{\varepsilon}$ with $\varepsilon$, the average query complexity of the algorithm on input $x$ is
\begin{align}
\mathbb{E}[Q_x]=O\left(\frac{\sqrt{\wp{\mathscr P}{x}
W_-(\mathscr P)}}{\varepsilon^{5}}\log\left(\frac{1}{\varepsilon}\right)\log\left(\frac{W_+(\mathscr P)}{\wp{\mathscr P}{x}\delta}\log\left(\frac{1}{\varepsilon}\right)\right)\right).
\label{eq:exactStateConv}
\end{align}

When $w_+(\mathscr P,x)W_-(\mathscr P)\geq w_-(\mathscr P,x)W_+(\mathscr P)=w_+(\mathscr P^\dagger,x)W_-(\mathscr P^\dagger)$, using the same analysis but with $\mathscr P^\dagger$ and applying \cref{lem:SP_duals}, we find 
\begin{equation}
\mathbb{E}[Q_x]=O\left(\frac{\sqrt{\wm{\mathscr P}{x}
W_+(\mathscr P)}}{\varepsilon^{5}}\log\left(\frac{1}{\varepsilon}\right)\log\left(\frac{W_-(\mathscr P)}{\wm{\mathscr P}{x}\delta}\log\left(\frac{1}{\varepsilon}\right)\right)\right).
\end{equation}

\end{proof}

\subsection{Function Evaluation with Fast Verification}

The state conversion algorithm can be used to evaluate a discrete function $f:X\rightarrow [m]$ for $X\subseteq [q]^n$ on input $x$ by converting from $\ket{\rho_x}=\ket{0}$ to $\ket{\sigma_x}=\ket{f(x)}$ 
and then measuring in the standard basis to learn $f(x)$.
When the correctness of $f(x)$ can be verified with an additional constant number
of queries, we can modify our state conversion algorithm to remove the Probing Stage,
and instead use the correctness verification of the output state as a test of whether
the algorithm is complete. In this case, we can remove a log factor from 
the complexity:
\begin{restatable}{theorem}{verifyStateConversion}\label{thm:verify_state_conversion}
For a function $f:X\rightarrow [m]$, such that $f(x)$ can be verified without error
using at most constant additional queries to $O_x$, given a converting vector set $\mathscr P$
from $\rho=\{\ket{0}\}_{x\in X}$ to $\sigma=\{\ket{f(x)}\}_{x\in X}$ and $\delta<2^{-1/2}$, then there is a quantum algorithm
that correctly evaluates $f$ with probability at least $1-\delta$
and uses
\begin{equation}
O\left(\frac{\sqrt{\min\{\wp{\mathscr P}{x}W_-(\mathscr P),\wm{\mathscr P}{x}W_+(\mathscr P)\}}}{\delta^{3/2}}\log\left(\frac{1}{\delta}\right)\right)
\end{equation}
average queries on input $x.$
\end{restatable}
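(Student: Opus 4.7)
The plan is to remove the Probing Stage from \cref{alg:StateConv} entirely and instead exploit the constant-query verification procedure as the termination test. Specifically, the modified algorithm iterates $i=0,1,\dots,T=\lceil\log(W_+(\mathscr P)W_-(\mathscr P))\rceil$; at round $i$, for each of $\mathscr P'\in\{\mathscr P,\mathscr P^\dagger\}$, it sets $\alpha=2^i/W_-(\mathscr P')$, directly applies Phase Reflection $R(\Up{\mathscr P'}{x}{\alpha}{\hat{\varepsilon}})$ to $(\ket{0}\ket{\rho_x})_A\ket{0}_B$ (the State Conversion Stage), measures the first register in the standard basis to obtain a candidate value $b\in[m]$, and uses the $O(1)$-query verifier to test whether $f(x)=b$. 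If verification succeeds, the algorithm returns $b$; otherwise it proceeds. The choice of accuracy will be $\hat{\varepsilon}=c\delta$ for a small constant $c$, so that $\varepsilon=6\sqrt{\hat{\varepsilon}}=O(\sqrt{\delta})$ in the guarantee of \cref{lem:finalStateAnalysis} \cref{item:break_out}.

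For correctness, note that perfect verification means no incorrect output can ever be returned; the only failure mode is exiting the \textbf{for} loop without returning. Assume without loss of generality that $\wp{\mathscr P}{x}W_-(\mathscr P)\le \wm{\mathscr P}{x}W_+(\mathscr P)$, and let $i^*=\lceil\log(\wp{\mathscr P}{x}W_-(\mathscr P))\rceil$. For $i\ge i^*$, we have $\alpha\ge \wp{\mathscr P}{x}$ and $\alpha\ge 1/W_-(\mathscr P)$, so \cref{lem:finalStateAnalysis} \cref{item:large_alpha} gives the hypothesis of \cref{item:break_out}, from which the output of Phase Reflection is within $\varepsilon$ of $\ket{1}\ket{f(x)}\ket{0}$ in $l_2$ norm. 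Measurement then yields $f(x)$ with probability at least $1-\varepsilon^2\ge 1-36c\delta$. Choosing $c$ small enough (say $c\le 1/36$), this is at least $1-\delta$ at round $i^*$, so the total probability of not terminating by round $T$ is at most $\delta$.

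For the average query complexity, let $Q(i)$ denote the queries used by the State Conversion Stage at round $i$, which by \cref{lem:phase_refl} is $Q(i)=O(\sqrt{2^i}\,\hat{\varepsilon}^{-3/2}\log(1/\hat{\varepsilon}))$ (the constant-query verification is absorbed). For rounds $i\le i^*$, we upper bound the probability of reaching round $i$ by $1$ and sum the geometric series $\sum_{i=0}^{i^*}\sqrt{2^i}=O(\sqrt{2^{i^*}})=O(\sqrt{\wp{\mathscr P}{x}W_-(\mathscr P)})$. For rounds $i>i^*$, the probability of reaching round $i$ is at most $(\varepsilon^2)^{i-i^*}\le (36c\delta)^{i-i^*}$, so the expected tail contribution is
\begin{equation}
\sum_{i=i^*+1}^{T}(36c\delta)^{i-i^*}O\!\left(\frac{\sqrt{2^i}}{\hat{\varepsilon}^{3/2}}\log\tfrac{1}{\hat{\varepsilon}}\right)
= O\!\left(\frac{\sqrt{2^{i^*}}}{\hat{\varepsilon}^{3/2}}\log\tfrac{1}{\hat{\varepsilon}}\right),
\end{equation}
which converges because $\delta<2^{-1/2}$ implies $36c\delta\sqrt{2}<1$ for sufficiently small $c$. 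Substituting $\hat{\varepsilon}=\Theta(\delta)$ gives the claimed bound $O\!\left(\sqrt{\wp{\mathscr P}{x}W_-(\mathscr P)}\,\delta^{-3/2}\log(1/\delta)\right)$, and the symmetric argument with $\mathscr P^\dagger$ handles the opposite case, yielding the minimum.

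The main obstacle I anticipate is verifying that the geometric tail truly decays despite the $\sqrt{2^i}$ growth of $Q(i)$; this forces the restriction $\delta<2^{-1/2}$ in the theorem statement and requires the constant $c$ in $\hat{\varepsilon}=c\delta$ to be chosen carefully so that $36c\delta\sqrt{2}<1$. A secondary subtlety is justifying that interleaving $\mathscr P$ and $\mathscr P^\dagger$ at each round (rather than running them separately) loses only a constant factor, which follows because each round still terminates with probability at least $1-\delta$ once the relevant $i^*$ is reached, and \cref{lem:complement} ensures $\mathscr P^\dagger$ has the exchanged witness sizes needed to obtain the minimum in the final bound.
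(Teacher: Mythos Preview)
Your proposal is correct and follows essentially the same approach as the paper: both remove the Probing Stage, set $\hat{\varepsilon}=\Theta(\delta)$ (the paper uses $\hat{\varepsilon}=\delta/36$, matching your choice $c\le 1/36$), iterate over $\mathscr P'\in\{\mathscr P,\mathscr P^\dagger\}$ applying Phase Reflection followed by a standard-basis measurement and the $O(1)$-query verifier, invoke \cref{lem:finalStateAnalysis} to show that from round $i^*$ onward the correct value is produced with probability at least $1-\delta$, and bound the average query complexity via the same geometric-series argument, where $\delta<2^{-1/2}$ is precisely what ensures the tail $\sum_{i>i^*}(\sqrt{2}\,\delta)^{i-i^*}$ converges.
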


While removing a log factor might seem inconsequential, it yields
an exponential quantum advantage in the next section for some applications, as opposed to only
a superpolynomial advantage. 

\begin{proof}[Proof of \cref{thm:verify_state_conversion}]
We analyze \cref{alg:StateConvVerify}, which is similar to \cref{alg:StateConv}, 
but with the Probing Stage replaced by a post-State Conversion verification procedure. 

\begin{algorithm}[h!]
    \DontPrintSemicolon
    \SetKwInOut{Input}{Input}
    \SetKwInOut{Output}{Output}
    \Input{Oracle $O_x$ for $x\in X\subseteq[q]^n$, function $f:X\rightarrow[m]$ for $m,q\in \mathbb{N}$, a converting vector set $\mathscr P$ from $\rho$ to $\sigma$, where
    $\ket{\sigma_x}=\ket{f(x)}$, a procedure to verify the correctness of $f(x)$
    with constant additional queries to $O_x$. Probability of error $\delta<2^{-1/2}$, initial state $\ket{\rho_x}$}
    \Output{$f(x)$ with probability $1-\delta$}
    $\hat{\varepsilon}\gets \delta/36$\; \label{algline:oneA}
    $T\gets \left\lceil\log \left(W_+(\mathscr P)W_-(\mathscr P)\right)\right\rceil$\;
    \For{i=0 \KwTo $T$}
    {
      
      \For{$\mathscr P'\in\{\mathscr P, \mathscr P^\dagger\}$}{
      $\alpha\gets 2^i/W_-(\mathscr P')$\;
       $\ket{\psi}\gets$ Apply $R(\Up{\mathscr P'}{x}{\alpha}{\hat{\varepsilon}})$ (\cref{lem:phase_refl}) with precision $\hat{\varepsilon}^{3/2}/\sqrt{\alpha W_-(\mathscr P')}$ and accuracy $\hat{\varepsilon}^2$ to $(\ket{0}\ket{\rho_x})_A\ket{0}_B$  and return the result\;
       Measure $\ket{\psi}$ in the standard basis to obtain a guess $\hat{y}$ for
       $f(x)$\;
       If $\hat{y}$ passes the additional verification, return $\hat{y}$\;

      }
      }
      Return ``error''\;

    \caption{}
    \label{alg:StateConvVerify}
\end{algorithm}{}

 We first analyze the case that $w_+(\mathscr P,x)W_-(\mathscr P)\leq w_-(\mathscr P,x)W_+(\mathscr P)=w_+(\mathscr P^\dagger,x)W_-(\mathscr P^\dagger)$.
From \cref{lem:finalStateAnalysis}, we have that when $\alpha\geq \wp{\mathscr P}{x}$, then the output state $\ket{\psi}$ of Line 7 of \cref{alg:StateConvVerify} satisfies
\begin{equation}
\|\ket{\psi}-\ket{1}\ket{f(x)}\ket{0}\|\leq 6\sqrt{\hat{\varepsilon}}=\sqrt{\delta}.
\end{equation}
This gives us 
\begin{equation}
1 -  \frac{\delta}{2} \leq Re\left(\bra{\psi}(\ket{1}\ket{f(x)}\ket{0})\right)\leq\left|\bra{\psi}(\ket{1}\ket{f(x)}\ket{0})\right|.
\end{equation}
Taking the square of both sides and using that $1-\delta \leq \left(1 -  \frac{\delta}{2}\right)^2$, we have
\begin{equation}
1 -  \delta \leq \left|\bra{\psi}(\ket{1}\ket{f(x)}\ket{0})\right|^2.
\end{equation}
Since $\ket{1}\ket{f(x)}\ket{0}$ is a standard basis state, if we measure $\ket{\psi}$ in the standard basis, this implies that probability that we measure the second register to be
$\ket{f(x)}$ is at least $1-\delta.$

Once we measure $f(x)$, we can verify it with certainty using constant
additional queries. Thus, our success probability is at least $1-\delta$ at
a single round when $\alpha\geq w_+$ (which we only reach if we haven't
already correctly evaluated $f(x)$), and so our overall probability of success must be
at least $1-\delta$ (from Line 1 of \cref{alg:StateConvVerify}). This is because further rounds (if they happen) will only increase
our probability of success.

To calculate the average query complexity, we note that the $i^\textrm{th}$ round
uses
\begin{equation}
O\left(\frac{2^{i/2}}{\hat{\varepsilon}^{3/2}}\log\left(\frac{1}{\hat{\varepsilon}}\right)+1\right)
\end{equation}
queries, where the $1$ is from the verification step, which we henceforward absorb into the big-Oh notation.

We make a worst case assumption that
the probability of measuring outcome $\ket{f(x)}$ in a round when 
$\alpha< \wp{\mathscr P}{x}$, or equivalently, at a round $i$ when $i<i^*$, is 0, so these rounds contribute
\begin{equation}
\sum_{i=0}^{i^*-1}O\left(\frac{2^{i/2}}{\hat{\varepsilon}^{3/2}}\log\left(\frac{1}{\hat{\varepsilon}}\right)\right)=O\left(\frac{2^{i^*/2}}{\hat{\varepsilon}^{3/2}}\log\left(\frac{1}{\hat{\varepsilon}}\right)\right)=O\left(\frac{2^{i^*/2}}{\delta^{3/2}}\log\left(\frac{1}{\delta}\right)\right)
\end{equation}
queries to the average query complexity, where in the last equality, we've 
replaced $\hat{\varepsilon}$ with $\delta$.

At each additional round $i$ for $i\geq i^*$, we have a $1-\delta$ probability of successfully returning $f(x)$, conditioned on reaching that round, and $\delta$ probability of continuing to the next round. This gives us an average query complexity on input $x$ of
\begin{align}
\sum_{i=i^*}^{T}&\left(1-\delta\right)\delta^{(i-i^*)}O\left(\frac{2^{i/2}}{\delta^{3/2}}\log\left(\frac{1}{\delta}\right)\right)\nonumber\\
&=O\left(\left(\frac{2^{i^*/2}}{\delta^{3/2}}\log\left(\frac{1}{\delta}\right)\sum_{i=i^*}^T\left(\sqrt{2}\delta)\right)^{i-i^*}\right)\right),
\end{align}
By our assumption that $\delta<2^{1/2}$, the summation is bounded by a constant. Thus the average query complexity on input $x$ is
\begin{equation}
\mathbb{E}[Q_x]=O\left(\frac{2^{i^*/2}}{\delta^{3/2}}\log\left(\frac{1}{\delta}\right)\right)=
O\left(\frac{\sqrt{\wp{\mathscr P}{x}W_-(\mathscr P)}}{\delta^{3/2}}\log\left(\frac{1}{\delta}\right)\right).
\end{equation}

When $\wp{\mathscr P}{x}W_-(\mathscr P)> \wm{\mathscr P}{x}W_+(\mathscr P)=\wp{\mathscr P^\dagger}{x}W_-(\mathscr P^\dagger)$, we get
the same expression except with $\mathscr P$ replaced by $\mathscr{P}^\dagger,$ which by \cref{lem:complement} gives us the claimed query complexity.
\end{proof}


\subsection{Quantum Advantages for Decision Trees with Advice}
\label{sec:tree_application}


Montanaro showed that when searching for a single marked item, if there is 
a power law distribution on the location of the item, then a quantum algorithm can achieve a (super)exponential 
speed-up in average query complexity over the best classical algorithm
\cite{montanaro2010quantum}. He called this ``searching with advice,'' as
in order to achieve the best separations between quantum and classical performance,
 the algorithm had to know an ordering of the inputs such that
 the probability of finding the marked item was non-increasing, the ``advice.''

In this section, we generalize Montanaro's result to
decision tree algorithms, and use this generalization to 
prove a superpolynomial and exponential speed-up for several additional search problems.
We use a decision tree construction similar to that of
Beigi and Taghavi \cite{beigiQuantumSpeedupBased2019}. 

A classical, deterministic query algorithm that
evaluates $f:X\rightarrow [m]$ for $X\subseteq [q]^n$ is given access to an
oracle $O_x$, for $x\in X$, and uses a single query to learn $x_i$, the
$i^\textrm{th}$ bit of $x.$
We can describe the sequence of queries this classical algorithm makes as a
directed tree $\sop T$, a decision tree,
with vertex set $V(\sop T)$ and directed edge set 
$E(\sop T)$. Each non-leaf vertex $v$ of $V(\sop T)$ is
associated with an index $J(v)\in [n]$, which is the index of $x$ that is queried
when the algorithm reaches that vertex. The algorithm follows the edge labeled by $x_{J(v)}$ (the query result) from $v$ to another vertex in $V(\sop T)$.
Each leaf is labelled by an element of $[m]$, which is the value that the
algorithm outputs if it reaches that leaf. Let $path(\sop T, x)$ be the sequence
of edges in $E(\sop T)$ that are followed on input $x$ when queries
are made starting from the root of $\sop T$. We say that $\sop T$ 
decides $f:X\rightarrow [m]$ if the leaf node on $path(\sop T, x)$ is labelled by $f(x)$, for all $x\in X$.

For a non-leaf vertex $v\in \sop T$, each edge $(v,v')\in E(\sop T)$ is labeled by a subset of $[q]$, that we denote $Q(v,v')$. 
Then if a vertex $v$ is visited in $path(\sop T,x)$, the algorithm chooses to follow the edge $(v,v')$ to vertex $v'$, if $x_{J(v)}\in Q(v,v').$ We require that $\{Q(v,v')\}$ for all edges $(v,v')$ leaving vertex $v$ form a partition of $[q]$, so that there is always exactly one edge
that the algorithm can choose to follow based on the result of the 
query at vertex $v$.

To create a quantum algorithm from such a decision tree $\sop T$, we label each edge 
$e\in E(\sop T)$ with a weight $r(e)\in \mathbb{R}^+$ and a color 
$c(e)\in\{red, black\}$, such that all edges 
coming out of a vertex $v$ with the same color have the same weight. 
There must be exactly one edge leaving each non-leaf vertex that is black,
and the rest must be red.
We denote by $r(v,black)$ the weight of the black
edge leaving $v$, and $r(v,red)$ the weight of any red edge(s) leaving $v$.
If there are no red edges leaving $v$, we set $r(v,red)=\infty.$ 
(In Ref. \cite{beigiQuantumSpeedupBased2019}, the red and black weights are the same
throughout the entire tree, instead of being allowed to depend on $v$. We note a similar flexibility in assigning weights is used in Ref. \cite{taghavi2022simplified}.) 
Using these weights we design a converting vector set to decide $f$:
\begin{restatable}{lemma}{decisionTreeWitness}\label{lem:decision_tree_complexity} 
Given a decision tree $\sop T$ that decides a function $f:X\rightarrow [m]$,
for $X\in [q]^n$, with weights $r(e)\in \mathbb{R}^+$ for each edge $e\in E(\sop T)$,
then there is a converting vector set $\mathscr P$ that on input $x\in X$ converts 
$\ket{\rho_{x}}=\ket{0}$ to $\ket{\sigma_{x}}=\ket{f(x)}$ such that
\begin{align}
\wp{\mathscr P}{x}&\leq\sum_{e\in path(\sop T, x)}4r(e),\nonumber\\
\wm{\mathscr P}{x}&\leq \sum_{(v,u)\in path(\sop T,x)}\frac{4}{r(v,red)}
+\sum_{\substack{(v,u)\in path(\sop T,x)\\
c(v,u)=red}}\frac{4}{r(v,black)}.\label{eq:treeWitnesses}
\end{align}
\end{restatable}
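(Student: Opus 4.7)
The plan is to construct the converting vector set $\mathscr P$ explicitly from the tree $\sop T$, working in a Hilbert space indexed by the directed edges of $\sop T$. For each input $x$ and each index $j\in[n]$, I would set $\ket{v_{xj}}$ and $\ket{u_{xj}}$ to be supported only on edges leaving the unique vertex $v\in path(\sop T,x)$ with $J(v)=j$ (if any; otherwise both vectors are zero). The scalars would depend on the edge $e_x(v)$ taken by $x$ at $v$ and on its color, carrying factors of $\sqrt{r(e_x(v))}$ on the $\ket{v_{xj}}$ side and matching $1/\sqrt{r(\cdot)}$ factors on the $\ket{u_{xj}}$ side, in the style of the decision-tree span program of Ref.~\cite{beigiQuantumSpeedupBased2019} but with the per-vertex weights and red/black labels baked in.

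The heart of the argument is verifying \cref{def:converting_vector_set}, i.e.\ $\sum_{j:x_j\neq y_j}\braket{u_{xj}}{v_{yj}}=(\rho-\sigma)_{xy}$, which equals $1$ if $f(x)\neq f(y)$ and $0$ otherwise since $\ket{\rho_x}=\ket{0}$ and $\ket{\sigma_x}=\ket{f(x)}$. The key structural fact I would use is that if $f(x)\neq f(y)$ then $path(\sop T,x)$ and $path(\sop T,y)$ must share a common prefix and first diverge at some vertex $v^*$, where $x_{J(v^*)}$ and $y_{J(v^*)}$ lie in distinct cells of the partition $\{Q(v^*,w)\}_w$ and hence are unequal, so $J(v^*)$ is included in the sum. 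I would arrange the coefficients so that (i) the divergence vertex contributes exactly $1$; (ii) contributions from shared-prefix vertices at which $x_j\neq y_j$ cancel because both paths take the same basis edge with opposite-signed matching scalars; and (iii) contributions from the post-divergence tails vanish because $\ket{v_{xj}}$ and $\ket{u_{yj}}$ have disjoint edge-supports there. The $f(x)=f(y)$ case then follows from the same cancellation pattern, using that every common coordinate in which the two inputs differ sits on the shared prefix.

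Once the construction is pinned down the witness-size bounds are direct calculations. For each $x$, $\|\ket{v_{xj}}\|^2$ is nonzero only at the on-path vertex $v$ with $J(v)=j$ and is at most $4\,r(e_x(v))$ by construction, giving $\wp{\mathscr P}{x}\le 4\sum_{e\in path(\sop T,x)}r(e)$. The negative vector $\|\ket{u_{xj}}\|^2$ contributes a baseline $4/r(v,red)$ at each on-path vertex (the default cost of ruling out the red alternatives), and an additional $4/r(v,black)$ only at those on-path vertices where the chosen edge is itself red (the extra cost of having deviated from the default black edge); summing over $j$ reproduces \cref{eq:treeWitnesses}.

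The main obstacle will be choosing the Hilbert-space representation and the signs of the coefficients so that the converting sum telescopes correctly on the shared prefix and evaluates to exactly $1$ at the divergence vertex, while simultaneously respecting the asymmetric red/black bookkeeping that drives the non-symmetric bound on the negative witness. This is where the two color classes and the per-vertex weights, a generalization beyond the uniform-weight setting of Ref.~\cite{beigiQuantumSpeedupBased2019}, have to be exploited, and the factor-of-$4$ slack on both sides should give enough room to absorb the normalization constants that arise from treating black and red edges asymmetrically.
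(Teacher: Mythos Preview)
Your overall architecture is close to the paper's construction (which indeed follows Beigi--Taghavi with per-vertex weights), and your witness-size calculations are on target. However, there is a genuine gap in how you handle the case $f(x)=f(y)$.

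You write that ``every common coordinate in which the two inputs differ sits on the shared prefix.'' This is false: nothing prevents two distinct leaves of $\sop T$ from carrying the same output label. When $f(x)=f(y)$ but $x\neq y$, the paths $path(\sop T,x)$ and $path(\sop T,y)$ can still diverge at some vertex $v^*$, and then $J(v^*)$ is an index with $x_{J(v^*)}\neq y_{J(v^*)}$ that does \emph{not} sit on the shared prefix. By your own item~(i), your construction would assign this vertex a contribution of exactly $1$ to $\sum_{j:x_j\neq y_j}\braket{u_{xj}}{v_{yj}}$, whereas the converting-vector-set constraint requires $0$ here since $(\rho-\sigma)_{xy}=1-\delta_{f(x),f(y)}=0$. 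A concrete two-bit example is $f=\mathrm{XOR}$ with $x=00$, $y=11$: the paths diverge at the root and no shared-prefix cancellation is available.

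The paper's remedy is an extra tensor factor $\mathbb{C}^m$ indexed by output values, carrying vectors $\ket{\tilde{\mu}_{f(x),m}}$ on the $v$-side and $\ket{\tilde{\nu}_{f(y),m}}$ on the $u$-side with $\braket{\tilde{\mu}_i}{\tilde{\nu}_j}=1-\delta_{i,j}$; this kills \emph{every} inner product when $f(x)=f(y)$, independently of the path geometry. (The shared-prefix cancellation you describe in item~(ii) is handled by an analogous ``next-vertex'' register rather than by opposite signs.) Note too that it is precisely these two auxiliary registers, each of squared norm at most $2$, that produce the factor $4$ in both witness bounds; with only one of them your slack would be $2$, not $4$, which is still fine for the stated inequality but is a signal that one register is missing from your plan.
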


\begin{proof}
In this proof $\ket{u_{xj}}$ and $\ket{v_{yj}}$, with double subscripts, refer to converting
vector sets, and $u$, $v$ with single or no subscripts refer to vertices.
We use essentially the same
construction as in Ref. \cite{beigiQuantumSpeedupBased2019}, but with a slightly
different analysis because
of our generalization to weights that can change throughout the tree.

We will make use of the unit vectors 
$\{\ket{\tilde{\mu}_{i,d}}\}_{i\in [d]}$ and 
$\{\ket{\tilde{\nu}_{i,d}}\}_{i\in [d]}$, defined in 
\cite{beigiQuantumSpeedupBased2019}, which are scaled versions of the vectors in 
\cref{eq:mu_nu}, which have the properties that 
$\forall i\in [d],\left\|\ket{\tilde{\mu}_{i,d}}\right\|^2,\left\|\ket{\tilde{\nu}_{i,d}}\right\|^2\leq 2$ and $\braket{\tilde{\mu}_{i,d}}{\tilde{\nu}_{j,d}}=1-\delta_{i,j}$.

First note that we can assume that on
any input $x\in X$, for any index $j$, there is at most a single vertex on in $path(\sop T,x)$ at which $j$ is queried. Otherwise
the tree would query the same index twice, which would be a non-optimal tree.
 Then we define a converting vector set on 
 $\mathbb{C}^{|V(\sop T)|}\otimes \mathbb{C}^{2}\otimes\mathbb{C}^{|V(\sop T)|}
 \otimes\mathbb{C}^{m}$ as follows
 \begin{align}
 \label{eq:pos_wt_witness_DT}
 \ket{v_{xj}}&=
 \begin{cases}
 \sqrt{r(v,u)}\ket{v}\ket{c(v,u)}\ket{\tilde{\mu}_{u,|V(\sop T)|}}\ket{\tilde{\mu}_{f(x),m}} &\textrm{ if } (v,u)\in path(\sop T,x), \textrm{ and } J(v)=j\\
 0&\textrm{otherwise}
 \end{cases}
 \end{align}
 and
 \begin{align}
  \label{eq:neg_wt_witness_DT}
 \ket{u_{xj}}&=
 \begin{cases}
 \frac{1}{\sqrt{r(v,red)}}\ket{v}\ket{red}\ket{\tilde{\nu}_{u,|V(\sop T)|}}\ket{\tilde{\nu}_{f(x)}} &\textrm{ if } (v,u)\in path(\sop T,x), J(v)=j,c(v,u)=black\\
 \ket{v}\left(\frac{\ket{red}}{\sqrt{r(v,red)}}+\frac{\ket{black}}{\sqrt{r(v,black)}}\right)\ket{\tilde{\nu}_{u,|V(\sop T)|}}\ket{\tilde{\nu}_{f(x)}} &\textrm{ if } (v,u)\in path(\sop T,x), J(v)=j,c(v,u)=red\\
 0&\textrm{otherwise}
 \end{cases}
 \end{align}

From \cref{def:converting_vector_set}, \cref{eq:filteredNorm}, we want that 
\begin{align}
\forall x,y\in X,\quad (\rho-\sigma)_{xy}=\sum_{j\in[n]:x_j\neq y_j}\braket{u_{xj}}{v_{yj}}.
\end{align}
For evaluating a discrete function $f$, we have $\ket{\rho_x}=\ket{0}$, and 
$\ket{\sigma_x}=\ket{f(x)}$, so $(\rho-\sigma)_{xy}=1-\delta_{f(x),f(y)}.$

Now if $f(x)=f(y)$, then because $\braket{\tilde{\mu}_{f(x),m}}{\tilde{\nu}_{f(y),m}}=0$,
we have $\forall j\in [n]$ $\braket{u_{xj}}{v_{yj}}=0$, so 
\\$\sum_{j\in [n]:x_j\neq y_j}\braket{u_{xj}}{v_{yj}}=0$
, as desired.

If $f(x)\neq f(y)$, then there must be some vertex in the decision tree at which $path(\sop T,x)$
and $path(\sop T,y)$ diverge. Let's call this vertex $v^*$, and assume that $J(v^*)$, the
index of the input queried at vertex $v^*$, is $j^*$. Let $(v^*,u_1)$ be the edge on 
$path(\sop T,x)$ and $(v^*,u_2)$ be the edge on 
$path(\sop T,y)$. This means that $x_{j^*}\in Q(v^*,u_1)$, and
$y_{j^*}\in Q(v^*,u_2)$ and since $Q(v^*,u_1)$ and $Q(v^*,u_2)$ are part of a partition,
 this implies $x_{j^*}\neq y_{j^*}.$
Then if $c(v^*,u_1)=black$, we must have $c(v^*,u_2)=red$, while if $c(v^*,u_1)=red$, we can have $c(v^*,u_2)\in \{black,red\}$. In either case, we see from \cref{eq:pos_wt_witness_DT,eq:neg_wt_witness_DT} that
\begin{align}
\braket{u_{xj^*}}{v_{yj^*}}=1.
\end{align}

Now for all other $j\in [n]$ with $j\neq j^*$, we have $\braket{u_{xj}}{v_{yj}}=0$, which
we can prove by looking at the following cases:
\begin{itemize}
\item There is no vertex in $\sop T$ where
$j$ is queried for $x$ or $y$, which results in $\ket{v_{xj}}=0$ or $\ket{u_{xj}}=0$, and
so $\braket{u_{xj}}{v_{yj}}=0$.
\item The index $j$ is queried for both $x$ and $y$ before their paths in $\sop T$ diverge, at a vertex
$v$ where the paths for both $x$ and $y$ then travel to a vertex $u$,
in which case, $\braket{u_{xj}}{v_{yj}}=0$, since $\braket{\tilde{\mu}_{u,|V(\sop T)|}}{\tilde{\nu}_{u,|V(\sop T)|}}=0$.
\item The index $j$ is queried for both $x$ and $y$ after their paths in $\sop T$ diverge, at
a vertex $v_1$ for $x$ and a vertex $v_2$ for $y$, in which case $\braket{u_{xj}}{v_{yj}}=0$,
since $\braket{v_1}{v_2}=0$.
\end{itemize}

Putting this all together for $j=j^*$ and $j\neq j^*$, we have 
\begin{align}
\sum_{j\in [n]:x_j\neq y_j}\braket{u_{xj}}{v_{yj}}=1.
\end{align}

Now to calculate the positive and negative witness sizes. For the positive witness
size, we have
\begin{align}
\wp{\mathscr P}{x}&=\sum_{j\in[n]}\|\ket{v_{xj}}\|^2\nonumber\\
&=\sum_{(v,u)\in path(\sop T, x)}\left\|\sqrt{r(v,u)}\ket{v}\ket{c(v,u)}\ket{\tilde{\mu}_{u,|V(\sop T)|}}\ket{\tilde{\mu}_{f(x),m}}\right\|^2\nonumber\\
&\leq \sum_{e\in path(\sop T, x)}4 r(e),
\end{align}
where in the second line, 
we have used \cref{eq:pos_wt_witness_DT} and that $\sop T$ will query each index at most once,
according to the vertices that are encountered in $path(\sop T,x)$,
and in the final line, we have used that $\|\ket{\tilde{\mu}_{u,|V(\sop T)|}}\|^2,\|\ket{\tilde{\mu}_{f(x),m}}\|^2\leq 2.$

For the negative witness size, note that again for input $x$,
$\sop T$ will query each index of $x$ at most once, according to the 
vertices that are encountered in $path(\sop T)$. Then if index $j$ of $x$ is queried 
at vertex $v$, where $(v,u)\in path(\sop T,x)$, and $c(v,u)=black$ then 
\begin{align}
\|\ket{u_{xj}}\|^2\leq 4\frac{1}{r(v,red)}
\end{align}
while if $c(v,u)=red$, then
\begin{align}
\|\ket{u_{xj}}\|^2\leq 4\left(\frac{1}{r(v,red)}+\frac{1}{r(v,black)}\right).
\end{align}
Thus
\begin{align}
\wm{\mathscr P}{x}&=\sum_{j\in[n]}\|\ket{u_{xj}}\|^2\nonumber\\
&=\sum_{\substack{(v,u)\in path(\sop T,x)\\
c(v,u)=black}}\frac{4}{r(v,red)}
+\sum_{\substack{(v,u)\in path(\sop T,x)\\
c(v,u)=red}}\left(\frac{4}{r(v,red)}+\frac{4}{r(v,black)}\right)\nonumber\\
&=\sum_{(v,u)\in path(\sop T,x)}\frac{4}{r(v,red)}
+\sum_{\substack{(v,u)\in path(\sop T,x)\\
c(v,u)=red}}\frac{4}{r(v,black)}.
\end{align}
\end{proof}

In \cref{thm:decision_tree}, we use \cref{lem:decision_tree_complexity} 
to derive average quantum and classical query 
separations based on classical decision trees. 

\begin{restatable}{theorem}{decisionTree}\label{thm:decision_tree} If $\sop T$ is a decision tree that
 decides $f:X\rightarrow[m]$ for $X\subseteq [q]^n$, with optimal average
 classical query complexity for the distribution $\{p_x\}_{x\in X}$, and $\sop T$ has
 a coloring such that there are at most $G$ red edges on
 any path from the root to a leaf, then the average quantum query complexity
 of deciding $f(x)$ with bounded error is 
 \begin{equation}
 O\left(\sum_{x\in X}p_x\sqrt{G|path(\sop T,x)|\log^3(n)}\right).
 \end{equation}
If it is possible to verify a potential output $\hat{y}$ as correctly being $f(x)$ using constant queries, then 
the average quantum query complexity
 of deciding $f(x)$ with bounded error is 
 \begin{equation}
 O\left(\sum_{x\in X}p_x\left(\sqrt{G|path(\sop T,x)|\log(n)}\right)\right).
 \end{equation}
The average classical query complexity of deciding $f(x)$ with bounded
error is
\begin{equation}\label{eq:class_decision}
\sum_{x\in X}p_x|path(\sop T,x)|.
\end{equation}
\end{restatable}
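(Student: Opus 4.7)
The classical bound in \cref{eq:class_decision} is immediate: the deterministic tree $\sop T$ uses exactly $|path(\sop T,x)|$ queries on input $x$, so its distribution-averaged query count is $\sum_x p_x|path(\sop T,x)|$; optimality of $\sop T$ for the distribution $\{p_x\}$ pins this down as the classical average query complexity. For the quantum bounds, the plan is to instantiate \cref{lem:decision_tree_complexity} with a carefully chosen edge weighting and feed the resulting converting vector set into \cref{thm:stateConvNew} (for Part 1) or \cref{thm:verify_state_conversion} (for Part 2).

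The key observation driving the weight choice is that the negative-witness bound in \cref{eq:treeWitnesses} includes a sum $\sum_{v\in path(x)} 1/r(v,\text{red})$ over every vertex in the path, not just the red ones, so any constant choice of $r(v,\text{red})$ makes $W_-$ grow with the worst-case path length. Taking $r(v,\text{red})$ proportional to the depth $d(v)$ of $v$ controls this: the harmonic sum $\sum_{v\in path(x)} 1/d(v)$ is $O(\log n)$, while the red contribution to the positive witness $\sum_{\text{red }v\in path(x)} d(v)$ is bounded by $G\,|path(\sop T,x)|$ because each $d(v)$ along the path is at most $|path(\sop T,x)|$ (not the global depth $D$). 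With $r(v,\text{black})$ chosen as an appropriate constant this yields $w_+(x)=\tilde O(G|path(\sop T,x)|)$ and $W_-=\tilde O(G+\log n)$, so $\sqrt{w_+(x)W_-}=\tilde O(\sqrt{G|path(\sop T,x)|\log n})$. Plugging this into \cref{thm:stateConvNew} at constant $\varepsilon,\delta$ picks up the $\log(1/\varepsilon)\log(W_+/(w_+(x)\delta))$ factor from that theorem, producing the $\sqrt{G|path(\sop T,x)|\log^3 n}$ bound of Part 1. For Part 2 the same converting vector set is used with \cref{thm:verify_state_conversion}, whose constant-query verification replaces the probing step and strips two of the three log factors, leaving only $\log n$. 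In both cases, averaging the per-input bound against $\{p_x\}$ by linearity of expectation gives the claimed $\sum_x p_x(\cdot)$ form.

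The main technical obstacle will be the log-factor bookkeeping and verifying that the depth-proportional weighting is tight uniformly in $x$: in particular, the bound $\sum_{\text{red }v\in path(x)} d(v)\le G|path(\sop T,x)|$ depends crucially on red vertices in a given path having depths bounded by that path's own length rather than by the worst-case depth of the whole tree, and the harmonic sum $\sum_{v\in path(x)}1/d(v)=O(\log n)$ must be tracked alongside the $G$ term in $W_-$ without double-counting. If a single depth-proportional weight assignment fails in some regime (say, when $G$ is super-logarithmic), the planned fallback is to iterate the algorithm over $O(\log n)$ multiplicative rescalings of the weights and take the union, absorbing the extra factor into the $\log^3 n$ slack that Part 1's bound already allows.
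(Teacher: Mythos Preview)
Your approach is essentially the paper's: depth-proportional red weights to make the harmonic sum in $W_-$ collapse to $O(\log n)$, then invoke \cref{thm:stateConvNew} and \cref{thm:verify_state_conversion}. The one concrete gap is your choice of $r(v,\text{black})$. You say ``an appropriate constant'' and then write $W_-=\tilde O(G+\log n)$, which is what you get with an \emph{absolute} constant like $1$: the second sum in \cref{eq:treeWitnesses} is $\sum_{\text{red}} 1/r(v,\text{black})\le G/c$. With that $W_-$, your claimed $\sqrt{w_+(x)W_-}=\tilde O(\sqrt{G|path(\sop T,x)|\log n})$ is only valid when $G=O(\log n)$; for larger $G$ you would get $\sqrt{G^2|path(\sop T,x)|}$, which does not match the statement. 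Your proposed fallback of iterating over $O(\log n)$ rescalings is vague and unnecessary.

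The fix is to set $r(v,\text{black})=G$ rather than an absolute constant. Then the black contribution to $w_+(x)$ is $\sum_{\text{black}}G\le G|path(\sop T,x)|$, matching the red contribution, so $w_+(x)=O(G|path(\sop T,x)|)$ as before; but now the second sum in $w_-(x)$ is $\sum_{\text{red}}1/G\le G\cdot(1/G)=1$, so $W_-=O(\log n)$ uniformly in $G$, with no fallback needed. The $\log^3 n$ then comes out cleanly: $\sqrt{w_+(x)W_-}=O(\sqrt{G|path(\sop T,x)|\log n})$, and the extra factor $\log(W_+/(w_+(x)\delta))=O(\log n)$ from \cref{thm:stateConvNew} (since $W_+=O(nG)$) contributes the remaining $\log^2 n$ under the square root. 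For the verifiable case, \cref{thm:verify_state_conversion} has no such extra log, leaving only the single $\log n$ from $W_-$.
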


\begin{proof} 
The average classical query complexity comes from the fact that on input $x$,
which occurs with probability $p_x$, the algorithm uses $|path(\sop T,x)|$
queries, since each edge on the path of the decision tree corresponds
to a single additional query. By assumption, $\sop T$ is optimal for
the distribution $\{p_x\}$, giving the complexity as in \cref{eq:class_decision}.

For the quantum algorithm, we will assign weights to
each edge in $\sop T$, and then use \cref{lem:decision_tree_complexity} to create and analyze
a state conversion algorithm. Then we will then apply \cref{thm:stateConvNew} and
\cref{thm:verify_state_conversion} to achieve better complexity on easier inputs. 

For each black edge $e$ in $\sop T$ let $r(e)=G$. For each red 
edge $e$, let $r(e)=l(e)$, where $l(e)$ is the number of edges on the path in $\sop T$ from 
the root to $e$, including $e$. Let $\mathscr P$ be the 
converting vector set from \cref{lem:decision_tree_complexity} that converts 
$\ket{\rho_x}=\ket{0}$ to $\ket{\sigma_x}=\ket{f(x)}$, based on $\sop T$. 

We first analyze $\wp{\mathscr P}{x}$. By \cref{lem:decision_tree_complexity},
\begin{align}
\wp{\mathscr P}{x}
=&\sum_{\textrm{black } e\in path(\sop T,x)}4r(e)+\sum_{\textrm{red }e\in path(\sop T,x)}4r(e)\nonumber\\
=&\sum_{\textrm{black } e\in path(\sop T,x)}4G+\sum_{\textrm{red }e\in path(\sop T,x)}4l(e)\nonumber\\
\leq&4G|path(\sop T,x)|+\sum_{\textrm{red }e\in path(\sop T,x)}4|path(\sop T,x)|\nonumber\\
= &O\left(G|path(\sop T,x)|\right),
\end{align} 
where in the second-to-last line, we've used that the total number
of black or red edges in the path is $|path(\sop T,x)|$. In the last line,
we've used that the number of red edges on any path is at most $G$. 
This implies that $W_+(\mathscr P)=O(nG).$

Now to analyze $\wm{\mathscr P}{x}$. From \cref{lem:decision_tree_complexity},
\begin{align}
\wm{\mathscr P}{x}&\leq \sum_{(v,u)\in path(\sop T,x)}\frac{4}{r(v,red)}
+\sum_{\substack{(v,u)\in path(\sop T,x)\\
c(v,u)=red}}\frac{4}{r(v,black)}
\nonumber\\
&= \sum_{(v,u)\in path(\sop T,x)}\frac{4}{l(v,u)}
+\sum_{\substack{(v,u)\in path(\sop T,x)\\
c(v,u)=red}}\frac{4}{G}
\nonumber\\
&\leq \sum_{i=1}^{|path(\sop T,x)|}\frac{4}{i}
+4
\nonumber\\
&=O(\log(n)).
\end{align}

Now applying \cref{thm:stateConvNew} with $\varepsilon,\delta=\Theta(1)$ and $W_+(\mathscr P)=O(n)$ gives us a bounded
error algorithm with an average
query complexity of $O\left(\sqrt{G|path(\sop T,x)|\log^3(n)}\right)$ on input $x$. 
On average over $x\in X$, we obtain an average query complexity of 
$ O\left(\sum_{x\in X}p_x\sqrt{G|path(\sop T,x)|\log^3(n)}\right).$

When there is a way to verify $f(x)$ using a constant queries,
we can apply \cref{thm:verify_state_conversion} with $\delta=\Theta(1)$ 
to give us a bounded error algorithm with an average
query complexity of $O\left(\sqrt{G|path(\sop T,x)|\log(n)}\right)$
on input $x$. 
On average over $x\in X$, we obtain an average query complexity of 
$ O\left(\sum_{x\in X}p_x\left(\sqrt{G|path(\sop T,x)|\log(n)}\right)\right).$
\end{proof}

We now use \cref{thm:decision_tree} to show an average quantum advantage for
two problems related to searching: searching for two marked items in a list and searching for the first marked item in a list with two marked items:

\begin{restatable}{theorem}{searchSpeedup}\label{thm:search_speedup} For the problem of finding $2$
bits with value $1$ in an  $n$-bit string, there is a distribution for which there is an exponential advantage in average quantum  query complexity over average
 classical query complexity. For the problem of finding the first $1$-valued bit
 in an $n$-bit string with at most two $1$-valued bits, there is a distribution for which there is
 a superpolynomial advantage in average quantum  query complexity over average classical
 query complexity.
\end{restatable}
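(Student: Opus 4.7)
The plan is to reduce both separation claims to applications of \cref{thm:decision_tree} by designing an appropriate classical decision tree, assigning a red/black coloring with $G=O(1)$, and choosing an input distribution, then comparing the resulting quantum and classical bounds.

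\textbf{Finding two marked bits.} I would let $X$ be the set of $n$-bit strings with at least two $1$'s, let $f(x)$ output any two positions where $x_i=1$, and take $\sop T$ to be the tree that queries positions $1,2,\dots$ in order, halting as soon as two $1$'s have been seen. Coloring every $0$-edge black and every $1$-edge red makes each root-to-leaf path contain exactly two red edges, so $G=2$. Because any pair of claimed marked positions can be verified with two additional queries to $O_x$, the verification version of \cref{thm:decision_tree} applies and yields average quantum complexity $\tilde O\bigl(\sum_x p_x\sqrt{|path(\sop T,x)|}\bigr)$. I would then take a distribution supported on strings with exactly two $1$'s, one fixed at position~$1$ and the other at position $j$ with probability proportional to $1/(j\log^{1+\varepsilon} j)$, so that $|path(\sop T,x)|=j$. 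Standard power-law integrals give classical cost $\Theta(n/\log^{1+\varepsilon}n)$ against quantum cost $\tilde O(\sqrt n/\log^{1+\varepsilon}n)$, which is an exponential separation in the sense of the paper.

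\textbf{Finding the first marked bit.} For $X$ the set of $n$-bit strings with at most two $1$'s and $f(x)$ the smallest $i$ with $x_i=1$, the natural tree queries $1,2,\dots$ until seeing a $1$ and outputs that index, so each path contains at most one red edge ($G=1$) and $|path(\sop T,x)|$ equals the position $i$ of the first $1$. Verifying the first $1$ requires checking that all earlier bits are $0$, which is not constant, so only the non-verification version of \cref{thm:decision_tree} is available, and the quantum bound gains an extra factor of $\log n$. With a similar power-law distribution on $i$ (and the second $1$, when present, placed so that $\sop T$ remains optimal), the weaker quantum bound yields a superpolynomial rather than exponential separation.

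\textbf{Main obstacle.} The hard step is not the tree or coloring but verifying that the natural left-to-right tree is actually \emph{optimal} for the chosen distribution, which is required for \cref{thm:decision_tree} to deliver the classical lower bound $\sum_x p_x|path(\sop T,x)|$. I would establish this by an exchange argument: under a power-law concentrated near small indices any tree that queries out of order pays more on high-probability inputs than it saves on low-probability ones. I would also need to handle the edge cases (fewer than two $1$'s for the first problem) so they do not dominate the expectation, and tune $\varepsilon$ and the placement of the second $1$ so that the quantum and classical sums remain in the claimed separations. Once optimality of $\sop T$ is granted, the remaining estimates reduce to routine integrals over power laws.
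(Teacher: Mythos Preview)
Your high-level decomposition---left-to-right decision tree, $G=O(1)$ coloring, verification for the two-marked-item problem but not for the first-marked-item problem, and the optimality argument deferred to an exchange lemma---matches the paper exactly. The genuine gap is in your choice of distribution. With $p_j\propto 1/(j\log^{1+\varepsilon}j)$ you get $\sum_j p_j\,j=\Theta(n/\log^{1+\varepsilon}n)$ for the classical cost but also $\sum_j p_j\sqrt{j}=\Theta(\sqrt{n}/\log^{1+\varepsilon}n)$ for the quantum cost, since $\int^n dx/(\sqrt{x}\log^{1+\varepsilon}x)\sim 2\sqrt{n}/\log^{1+\varepsilon}n$ diverges. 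That is a quadratic separation, not an exponential one, so the claimed conclusion does not follow from your distribution.

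The paper instead takes $p_i\propto i^k$ with $-2<k<-3/2$. This window is chosen precisely so that $\sum_i p_i\sqrt{i}=\sum_i i^{k+1/2}$ \emph{converges} (exponent below $-1$), making the quantum cost $O(\sqrt{\log n})$, while $\sum_i p_i\,i=\sum_i i^{k+1}$ \emph{diverges} like $n^{k+2}$ (exponent above $-1$), making the classical cost polynomial. With quantum $Q=\Theta(\log^{1/2}n)$ and classical $C=\Omega(n^{k+2})$ one gets $C=\exp(\Omega(Q^2))$, which is the exponential advantage; for the first-marked-bit problem the non-verification bound gives $Q=\Theta(\log^{3/2}n)$ and hence $C=\exp(\Omega(Q^{2/3}))$, which is only superpolynomial. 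So the exponential/superpolynomial distinction you correctly attribute to the extra $\log n$ factor only kicks in once the quantum cost is already polylogarithmic---your distribution never gets there. Fixing this is a matter of replacing your $1/(j\log^{1+\varepsilon}j)$ by a pure power law with exponent in $(-2,-3/2)$; the rest of your plan, including the exchange argument for optimality (which the paper also carries out, in some detail), goes through.
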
 

The proof uses a decision tree $\sop T$ that checks the
$n$ bits of the string in order until either one or two $1$-valued bits are found. The tree
for finding two $1$-valued bits is shown in \cref{fig:tree}. Each time a $1$-valued bit is found, the edge
that the algorithm traverses is colored red. Then $G(\sop T)=1$ or $2$, depending on the problem, so \cref{thm:decision_tree} tells us the average query complexity will be small when the two marked bits occur early in the list, resulting in a short path for that input. We combine
this idea with distributions that are modified versions of
a power law distribution;
this power law distribution is tailored to allow a quantum algorithm, which has at most a
quadratic advantage on any particular input, but which only uses constant queries
on the easiest inputs, to 
achieve an exponential/superpolynomial advantage on average \cite{montanaro2010quantum}.

To obtain these results, we explicitly analyze a particular distribution of bit strings with Hamming weight $1$ or $2$, but we expect that similar techniques could be applied to additional distributions that include strings with larger Hamming weights. The bottleneck is not analyzing the quantum algorithm, but in proving properties of the optimal classical algorithm.

\begin{figure}[h]
\centering
\includegraphics[scale=1.2]{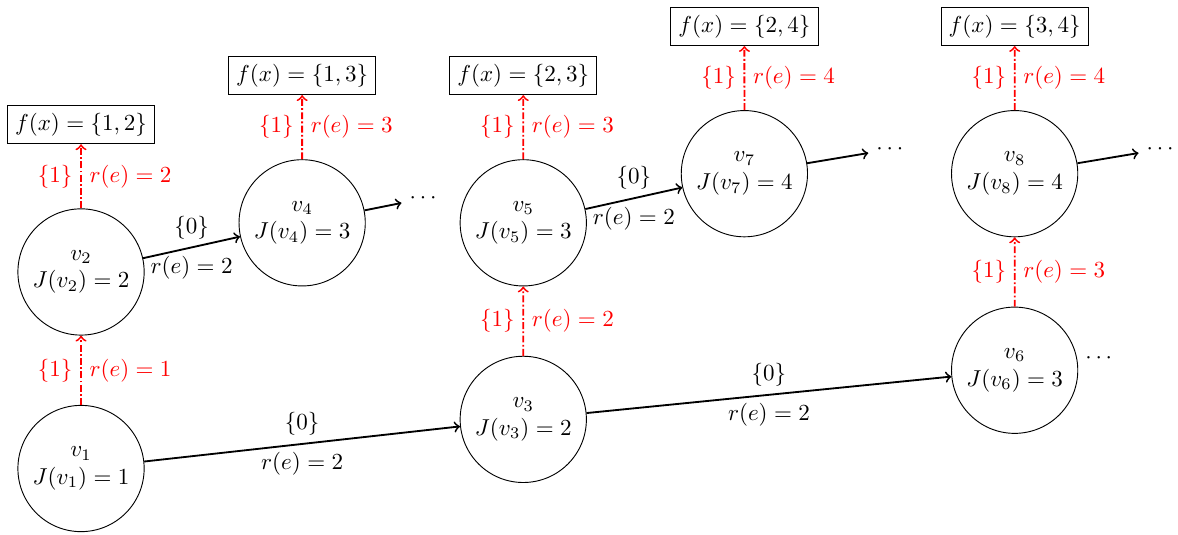}
\caption{The decision tree we use to design the quantum algorithm for
 finding two bits with value $1$. Each vertex is labelled by its name ($v_i$) for some $i$, and $J
 (v_i)$, which is the bit of the input that is queried if the algorithm
 reaches that vertex of the tree. Each edge $(v_i,v_j)$ is labelled by $Q
 (v_i,v_j)$, which is the set in curly brackets alongside each edge. The algorithm
 follows the edge $(v_i,v_j)$  from vertex $v_i$ if the value of the query
 made at vertex $v_i$ is contained in $Q(v_i,v_j)$. Each edge is also labelled
 by its weight, $r(e)$, and is also colored red or black (and red edges are
 additionally rendered with dot-dashes.) Black edges all have weight $G
 (\sop T)$, which in this case is $2$. Each red edge has a weight that is
 equal to the number of edges on the path from the root $v_1$ to that edge,
 inclusive. The vertex $v_1$ is the root, and each leaf (denoted as a rectangular vertex) is labelled by the 
 output of the algorithm on that input.}
\label{fig:tree}
\end{figure}

\begin{proof}[Proof of \cref{thm:search_speedup}]
For both problems (finding two marked items, or finding the first marked item), we consider the following distribution of $n$-bit strings. For the purposes of describing the distribution, for each string we label an index among $\{2,\dots, n\}$ as the \textit{dividing index}, although the position of this index is not known to us when we actually sample a string. Let $E_i^*$ be the event that $i$ is the dividing index of the sampled string. Then $p(E_i^*)\propto (i-1)^k$ for a constant $k$ such that $-3/2>k>-2$. For a string with dividing index $i$, 
all bits with index greater than $i$ have value $0$, and exactly one bit with index less than $i$ has value $1$, chosen uniformly at random. In the case that we are trying to find
two bits with value $1$, bit at the dividing index also has value $1$. In the case that we are trying to find the first bit with value $1$, we set the bit at the dividing index
to have value $1$ with probability $p_+$. In \cref{app:explicit} \cref{lem:class_in_order}, we prove that
given these distributions and problems, querying the bits of the string in order solves these problems with query complexity within $1$ query of the optimal classical strategy, and so analyzing the query complexity of an
algorithm that queries the bits in order is sufficient for bounding the optimal asymptotic classical query complexity.

We first analyze the case of finding both $1$-valued bits. That is, on input $x$, we want to evaluate the function $f(x)=\{i_1,i_2\}$, where $i_1$ and $i_2$ are the indices of the two bits of $x$ that have value $1$.

Consider a decision tree that queries the bits of the string in increasing order. This corresponds to a tree
$\sop T$ as shown in \cref{fig:tree}, where $|path(\sop T,x)|=i$, where $i$ is the index of the $2^
{nd}$ $1$-valued bit in the string. 
We label an edge $(v,v')$ of this tree as red whenever $1\in Q(v,v')$; that is, edges that are traversed when a $1$-valued bit is found are
colored red. Then $G(\sop T)=2.$ 

For this problem, one can verify whether an output of the quantum algorithm is correct using an additional $2$ queries; query the two output indices to ensure there is a $1$ at each position, in which case, one knows with certainty that the output is correct. Otherwise, the output is incorrect.

Thus by \cref{thm:decision_tree}, since $p(E_i^*)$ is the probability of finding the $2^\textrm{nd}$ (final) $1$-valued bit at index $i$,
the average quantum query complexity is
\begin{equation}
O\left(\sum_{i=2}^{n}p(E_i^*)\left(\sqrt{i2\log(n)}\right)\right). \label{eq:quantum_query_not_analyzed3}
\end{equation}
In \cref{app:explicit} \cref{lem:sum_calculations}, we show that
\begin{equation}
O\left(\sum_{i=2}^{n}p(E_i^*)\left(\sqrt{i}\right)\right)=O(1),
\end{equation}
so the quantum query complexity of finding both $1$-valued inputs is 
\begin{equation}
O(\log^{1/2}(n)).
\end{equation}



Since the optimal classical strategy (to within one query) is to query the bits of the string in order (see \cref{app:explicit} \cref{lem:class_in_order}), and such an algorithm will terminate after $i$ queries if $i$ is the dividing index, the asymptotic classical query complexity is $\Omega\left(\sum_{i=r}^{n}p(E_i^*)i\right)$, and in \cref{lem:sum_calculations}, we prove
\begin{align}
\sum_{i=r}^{n}p(E_i^*) i&=\Omega \left(n^{k+2}\right).
\end{align}
Thus we have an exponential quantum improvement, as the average quantum query complexity is
$O\left(\log^{1/2}(n)\right)$ compared to the classical $\Omega(n^{k+2})$,
 for $-2<k< -3/2$.

Now we consider the problem of finding the first $1$-valued bit in the input string for the distribution described at the beginning of this lemma.
Consider a decision tree that queries the bits of the string in increasing order. This corresponds to a tree 
$\sop T$ where $|path(\sop T,x)|=i$, where $i$ is the index of the first $1$-valued bit in the string. We label an edge $(v,v')$ of this tree as red whenever $1\in Q(v,v')$; that is, edges that are traversed when a $1$-valued bit is found are
colored red. Then $G(\sop T)=1.$ However, now when this algorithm returns an index, we can no longer easily verify that it is the \textit{first}
$1$-valued bit. From \cref{thm:decision_tree}, if $E_i^\dagger$ is the event that the index of the first $1$-valued bit is $i$, we have that the quantum
query complexity is
\begin{equation}\label{eq:clas_qu_first}
O\left(\sum_{i\in n}p(E_i^\dagger)\left(\sqrt{i\log(n)}\right)\right).
\end{equation}

On the other hand, because the optimal classical strategy (to within one query) is to query the bits
of the input string in order, the average classical query complexity is
$\Omega\left(\sum_{i\in n}p(E_i^\dagger)i\right),$ since the algorithm will terminate after $i$ queries if the first $1$-valued bit is at index $i$.
In \cref{lem:sum_calculations}, we show 
\begin{align}
&\sum_{i\in n}p(E_i^\dagger)i=\Omega(n^{k+2}),\nonumber\\
&\sum_{i\in n}p(E_i^\dagger)\sqrt{i}=O(1),
\end{align}
so the average classical query complexity is $\Omega(n^{k+2})$ and the average quantum
query complexity is $O\left(\polylog(n)\right),$ a superpolynomial improvement. 
\end{proof}


\paragraph{Acknowledgments}
We thank Stacey Jeffery for valuable discussions, especially for her preliminary notes on span program negation, and several past referees for insightful suggestions.

\paragraph{Funding}
All authors were supported by the U.S. Army Research Office 
under Grant Number W911NF-20-1-0327. The views and conclusions contained in
this document are those of the authors and should not be interpreted as
representing the official policies, either expressed or implied, of the Army
Research Office or the U.S. Government. The U.S. Government is authorized to
reproduce and distribute reprints for Government purposes notwithstanding any
copyright notation herein.

\bibliography{spanPrograms.bib}
\bibliographystyle{alphaurl}


\appendix
\section{Complements of Span Programs and Converting Vector Sets}\label{app:sec2}

We first prove that, given a span program that decides $f:X\rightarrow\{0,1\}$,
we can create a span program that decides $f^{\neg }$ while preserving witnesses
sizes for each $x$.

\SPduals*

\begin{proof}
We first define $H'$, starting from $H'_{j,a}$:
\begin{equation}
H'_{j,a}=\textrm{span}\{\ket{v}: \ket{v}\in H_j \textrm{ and } \ket{v}\in H_{j,a}^\perp\},
\end{equation}
where $H_{j,a}^\perp$ is the orthogonal complement of $H_{j,a}.$ We define $H'_j=\sum_{a\in [q]}H'_{j,a}$, and $H'_{true}=H_{false}$ and $H'_{false}=H_{true}$.
Then
\begin{equation}
H'=H'_1\oplus H'_2\cdots H'_n\oplus H'_{true}\oplus H'_{false}.
\end{equation}
Let $\ket{\tilde{0}}$ be a vector that is orthogonal to $H$ and $V$, and
define $V'=H\oplus\textrm{span}\{\ket{\tilde{0}}\}$ and $\tau'=\ket{\tilde{0}}.$
Finally, set
\begin{equation}
A'=\ketbra{\tilde{0}}{w_0}+\Lambda_A,
\end{equation}
where $\Lambda_A$ is the projection onto the kernel of $A$, $\Pi_H$ is the projection onto $H$, and 
\begin{equation}\label{eq:complement_min_wit}
\ket{w_0}=\argmin_{\substack{\ket{v}\in H: A\ket{v}=\tau}}\|\ket{v}\|.
\end{equation}

Let $x\in X$ be an input with $f(x)=1$, so $x$ has a positive witness
$\ket{w}$ in $\sop P$. We will show $\omega'=\bra{\tilde{0}}+\bra{w}$ is a negative
witness for $x$ in $\sop P^\dagger$. Note $\omega'\tau'=1$, and also,
\begin{align}
\omega' A'&=(\bra{\tilde{0}}+\bra{w})(\ketbra{\tilde{0}}{w_0}+\Lambda_A)\\
&=\bra{w_0}+\bra{w}\Lambda_A\\
&=\bra{w},
\end{align}
where in the second line, we have used that $\bra{\tilde{0}}\Lambda_A=0$ since
$\Lambda_A$ projects onto the $H$ subspace, 
and $\braket{w}{\tilde{0}}=0$ because $\ket{\tilde{0}}$ is orthogonal to $H$.
The final line follows from \cite[Definition
2.12]{itoApproximateSpanPrograms2019}, which showed that every positive
witness can be written as $\ket{w}=\ket{w_0}+\ket{w^\perp}$, where
$\ket{w^\perp}$ is in the kernel of $A$ and $\ket{w_0}$ is orthogonal to the
kernel of $A$. 

Then $\bra{w}\Pi_{H'(x)}=0$, because $\ket{w}\in H(x)$, and
$H'(x)$ is orthogonal to $H(x),$ so $\omega'$ is a negative witness for $x$ in
$\sop P^\dagger$.  Also, $\|\omega' A'\|^2=\|\ket{w}\|^2$, so the witness size of
this negative witness in $\sop P^\dagger$ is the same as the corresponding
positive witness in $\sop P.$ This implies $\wm{\sop P^\dagger}{x}\leq \wp{\sop P}{x}.$

Next, we show that $\wm{\sop P^\dagger}{x}\geq \wp{\sop P}{x}$. Let $\omega'$ be 
any negative witness for $x$ in $\sop P^\dagger$. Then since $\omega'\in \sop L(H\oplus\textrm{span}\{\ket{\tilde{0}}\},\mathbb{R})$, and we must have $\omega'\tau'=\omega'\ket{\tilde{0}}=1$,
$\omega'$ must have the form $\bra{\tilde{0}}+\bra{w}$, where $\ket{w}\in H$. Then
\begin{align}
\omega' A'&=(\bra{\tilde{0}}+\bra{w})(\ketbra{\tilde{0}}{w_0}+\Lambda_A)\\
&=\bra{w_0}+\bra{w}\Lambda_A\\
&=\bra{w_0}+\bra{h}
\end{align}
where $\ket{h}$ is in the kernel of $A$. By the definition of $\ket{w_0}$, 
this implies that $A(\ket{w_0}+\ket{h})=\tau$. 
Next, for $\omega'$ to be a valid negative witness 
for $x$ in $\sop P^\dagger$, we must have that $\omega' A'\Pi_{H'(x)}=0$, which implies
that $(\omega' A')^\dagger\in H(x)$, and so $\ket{w_0}+\ket{h}\in H(x)$. Thus
$(\omega' A')^\dagger$ is a positive witness for $x$ in $\sop P$, and so $\|\omega'A'\|^2\geq \wp{\sop P}{x}$, implying 
$\wm{\sop P^\dagger}{x}\geq \wp{\sop P}{x}$.

If $f(x)=0$, there is a negative witness $\omega$ for $x$ in $\sop P$. Consider
$\ket{w'}=(\omega A)^\dagger$. Then
\begin{align}
A'\ket{w'}&=(\ketbra{\tilde{0}}{w_0}+\Lambda_A)(\omega A)^\dagger\\
&=\ket{\tilde{0}}(\omega A\ket{w_0})^\dagger\\
&=\ket{\tilde{0}}(\omega\tau)^\dagger\\
&=\ket{\tilde{0}},
\end{align}
where in the second line, we have used that $(\omega A)^\dagger$ is orthogonal to
the kernel of $A.$ Also, $\Pi_{H(x)}(\omega A)^\dagger=0$, so $\ket{w'}\in
H'(x)$. This means $\ket{w'}$ is a positive witness for $x$ in $\sop P^\dagger.$
Also, $\|\ket{w'}\|^2=\|\omega A\|^2$, so the witness size of this positive
witness in $\sop P^\dagger$ is the same as the corresponding negative
witness in $\sop P.$ This implies $\wp{\sop P^\dagger}{x}\leq \wm{\sop P}{x}.$

Now to show $\wp{\sop P^\dagger}{x}\geq \wm{\sop P}{x}$. Let $\ket{w'}$ be a positive
witness for $x$ in $\sop P^\dagger.$ Then since $\tau'=A'\ket{w'}$, we have
\begin{align}\label{eq:complement1}
\ket{\tilde{0}}&=(\ketbra{\tilde{0}}{w_0}+\Lambda_A)\ket{w'}\\
&=\ket{\tilde{0}}\braket{w_0}{w'}+\Lambda_A\ket{w'}
\end{align}
Since $\Lambda_A\ket{\tilde{0}}=0$, we must have $\Lambda_A\ket{w'}=0$,
which implies $\ket{w'}=(\omega A)^\dagger$ for some $\omega\in V$. Plugging
this into \cref{eq:complement1}, and using \cref{eq:complement_min_wit}, we have
\begin{align}
\ket{\tilde{0}}&=\ket{\tilde{0}}(\omega A\ket{w_0})^\dagger\\
&=\ket{\tilde{0}}(\omega \tau)^\dagger.
\end{align}
Thus we have $\omega\tau=1$. Since $\ket{w'}$ is a positive witness 
for $\sop P^\dagger$, $\Pi_{H'(x)}\ket{w'}=\ket{w'}$, which
implies that $\Pi_{H(x)}(\omega A)^\dagger=0,$ or equivalently, 
$\omega A \Pi_{H(x)}=0$. Thus we see that $\omega$ must in fact be a 
negative witness for $x$ in $\sop P$. Therefore, $\wp{\sop P^\dagger}{x}\geq \wm{\sop P}{x}$.

Finally, since if
$x$ has a positive witness in $\sop P$, we have shown it has a negative witness 
in $\sop P^\dagger$, and if
$x$ has a negative witness in $\sop P$, we have shown it has a negative witness in $\sop P^\dagger$. Thus $\sop P^\dagger$ does indeed decide $f^{\neg }.$
\end{proof}

The next two proofs deal with manipulating converting vector sets.
\scaleSC*
\begin{proof}
Let $\mathscr P=(\ket{u_{xj}},\ket{v_{xj}})_{x\in X,j\in[n]}$.
We scale the vectors in $\mathscr P$ to create the converting
vector set $\mathscr P'$ with $\ket{v_{xj}'}=\ket{v_{xj}}\sqrt{W_-/W_+}$ and 
$\ket{u_{xj}'}=\ket{u_{xj}}\sqrt{W_+/W_-}$. The converting vector set
$\mathscr P'$  still satisfies the
constraints of \cref{eq:filteredNorm}, but now has maximum witness sizes
$W_+'=W_-'=\sqrt{W_+W_-}$, so applying \cref{thm:StateConv} gives the result.
\end{proof}

Next we prove that given a converting vector set, we can design another converting 
vectors set that converts between the same states, but with positive and negative witness sizes 
exchanged.

\complment*
\begin{proof}
Let $\mathscr P=(\ket{v_{xj}},\ket{u_{xj}})_{x\in X,j\in[n]}$.
For all $x\in X$ and $j\in[n]$,
define
\begin{equation}
\ket{u^C_{xj}}=\ket{v_{xj}}, 
\qquad \textrm{and} \qquad \ket{v^C_{xj}}=\ket{u_{xj}}.
\end{equation}
Note $\braket{u^C_{xj}}{v^C_{yj}}=(\braket{u_{yj}}{v_{xj}})^*$. Since $\mathscr P$ satisfies the constraints of \cref{eq:filteredNorm},
\begin{align}
\sum_{j\in[n]:x_j\neq y_j}\braket{u^C_{xj}}{v^C_{yj}}
&=\sum_{j\in[n]:x_j\neq y_j}(\braket{u_{yj}}{v_{xj}})^*\nonumber\\
&=\left(\sum_{j\in[n]:x_j\neq y_j}(\braket{u_{yj}}{v_{xj}})\right)^*\nonumber\\
&=(\rho-\sigma)_{yx}^*\nonumber\\
&=(\rho-\sigma)_{xy},\nonumber\\
\end{align}
where we have used the fact that $\rho$ and $\sigma$ are Hermitian.

Thus the vectors $(\ket{v_{xj}^C},\ket{u_{xj}^C})_{x\in X,j\in[n]}$ satisfy the same constraints of \cref{eq:filteredNorm}, and thus produce the same have value in
\cref{eq:filteredNormMa}. However, now $w_+(\mathscr P,x)=w_-(\mathscr P^\dagger,x)$, and
$w_-(\mathscr P,x)=w_+(\mathscr P^\dagger,x)$.
\end{proof}

\section{Proofs for the Function Decision Algorithm}\label{app:func_proof}

\phaseEstEarly*

\begin{proof}
The proof is similar to Belovs and Reichardt \cite[Section
5.2]{belovsSpanProgramsQuantum2012} and Cade et al. \cite[Section
C.2]{cadeTimeSpaceEfficient2018} and the dual adversary algorithm of Reichardt
\cite[Algorithm 1]{reichardtReflectionsQuantumQuery2011}
~\\
\textit{Part 1:} Since $f(x)=1$, there is an positive optimal witness $\ket{w}\in H(x)$
for $x$. Then set $\ket{u}\in \tilde{H}(x)$ to be
$\ket{u}=\alpha\ket{\hat{0}}-\ket{w}.$ Then $\Pi_x\ket{u}=\ket{u}$, but also,
$\ket{u}$ is in the kernel of $\Aa{\alpha}$, because
$\Aa{\alpha}\ket{u}=\ket{\tau}-\ket{\tau}=0$.
Thus $\Lal{\alpha}\ket{u}=\ket{u}$, and so $\U{\sop P}{x}{
\alpha}\ket{u}=\ket{u}$; $\ket{u}$ is a 1-valued eigenvector of 
$\U{\sop P}{x}{\alpha}$.

We perform Phase Checking on the state
$\ket{\hat{0}}$, so the probability of measuring the state $\ket{0}_B$ in the phase register is \textit{at least} (by \cref{lem:phase_det}), the
overlap of $\ket{\hat{0}}$ and (normalized) $\ket{u}$. This is
\begin{equation}
\frac{|\braket{\hat{0}}{u}|^{2}}{\| \ket{u} \|^2} = \frac{\alpha^2}{\alpha^2+\|\ket{w}\|^2}=\frac{1}{1+\frac{\wp{\sop P}{x}}{\alpha^2}}
\end{equation}
Using our assumption that $\wp{\sop P}{x}\leq \alpha^2/C$, and a Taylor
series expansion for $C\geq 2$, the probability that we measure the state $\ket{0}_B$ in the phase register is at least $1-1/C$.

~\\ \textit{Part 2:} Since $f(x)=0$, there is an optimal negative witness
$\omega$ for $x$, and we set $\ket{v}\in \tilde{H}$ to be $\ket{v}=\alpha 
(\omega\Aa{\alpha})^\dagger$. By \cref{def:negWit}, $\omega\tau=1$, so $\ket{v}=(\bra{\hat{0}}+\alpha\omega A)^\dagger.$ Again, from \cref{def:negWit},
$\omega A \PHx=0$, so we have $\Pi_x\ket{v}=\ket{\hat{0}}.$

Then when we perform Phase Checking of the unitary $\U{\sop P}{x}{\alpha}$ to some
precision $\Theta$ with error $\epsilon$ on state $\ket{\hat{0}}$, by
\cref{lem:phase_det}, we will measure $\ket{0}_B$ in the phase register with probability at most
\begin{equation}\label{eq:Pbound}
\left\|P_{\Theta}(\U{\sop P}{x}{\alpha})\ket{\hat{0}}\right\|^2+\epsilon=\left\|P_{\Theta}(\U{\sop P}{x}{\alpha})\PtHx\ket{v}\right\|^2+\epsilon.
\end{equation}

Now $\ket{v}$ is orthogonal to the kernel of $\Aa{\alpha}$. (To see this, note that if $\ket{k}$ is in the kernel of $\Aa{\alpha}$, then $\braket{v}{k}=\alpha\omega \Aa{\alpha}\ket{k}=0$.)
Applying \cref{spec_gap_lemm}, and setting
$\Theta=\sqrt{\frac{\epsilon}{\alpha^2 W_-}}$, we have
\begin{equation}\label{eq:negCase1}
\left\|P_{\Theta}(\U{\sop P}{x}{\alpha})\PtHx\ket{v}\right\|^2\leq\frac{\epsilon}{4\alpha^2 W_-}\left\|\ket{v}\right\|^2.
\end{equation}

To bound $\left\|\ket{v}\right\|^2$, we observe that 
\begin{equation}\label{eq:vbound}
\left\|\ket{v}\right\|^2=\left\|\bra{\hat{0}}+\alpha\omega A\right\|^2= 1+\alpha^2\wm{\sop P}{x}\leq 
1+\alpha^2W_-\leq2\alpha^2W_-,
\end{equation}
where we have used our assumption that $\alpha^2W_-\geq 1$.

Plugging \cref{eq:negCase1,eq:vbound} into \cref{eq:Pbound}, we find that the
probability of measuring $\ket{0}_B$ in the phase register  is at most
$\frac{3}{2}\epsilon$, as claimed.
\end{proof}

\section{Analyzing the Distributions for Search Problems}
\label{app:explicit}

In the lemmas in this section, we analyze the following distribution on $n$-bit strings with Hamming weight $2$ or $1$. Each string has a
``dividing index'' which is a bit position between $2$ and $n.$ We denote by
$E_i^*$ the event that $i$ is the dividing index of the string, and we sample strings with dividing index $i$ with probability $p(E_i^*)\propto(i-1)^k$ for $k$ a constant $-2<k<-3/2$.
With probability $p_+$ the value of the bit at the dividing index itself is $1$, and with probability $1-p_+$ it is $0$. There is one index, chosen uniformly at random from among indices less than the dividing index, such that the bit at that index has value $1$. All other bits in the string have value $0$.

Let $E_i$ be the event that the bit at index $i$ has value $1$. Then note that, given the description above, 
\begin{align}\label{eq:distribution_description}
p(E_i|E_j^*)=
\begin{cases}
1/(i-1) & \textrm{ if }i<j,\\
p_+ & \textrm{ if } j=i,\\
0 & \textrm{ if }i>j.
\end{cases}
\end{align}

Let $A_n$ be the normalizing factor of $p(E_i^*)$, so $p(E_i^*)=A_n(i-1)^k$

\begin{lemma} \label{lem:class_in_order}
For the distribution described around \cref
{eq:distribution_description}, to find all $1$-valued bits in the string, or to
find the first $1$-valued bit in the string, querying the bits of the string in order will result in an algorithm whose average query complexity is within one query of the optimal strategy.
\end{lemma}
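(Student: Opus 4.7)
The plan is to lower-bound the optimal deterministic classical cost by the in-order cost, up to an additive constant, for each of the two problems.

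For finding the first $1$-valued bit, I will show that in-order is exactly optimal (hence trivially within $1$ query). The in-order strategy uses exactly $j$ queries on any input whose first $1$ is at position $j$. I would argue via an adversary that every correct deterministic algorithm $\mathcal A$ must query all of $\{1,\ldots,j\}$ on such an input $x$. Let $S(x)$ denote the queried positions. If $j \notin S(x)$, I would choose any $j' \in \{1,\ldots,j-1\} \setminus S(x)$ and take $x'$ to be a Hamming-weight-$1$ input with its single $1$ at $j'$; such an $x'$ lies in the support because $p_+ < 1$ permits Hamming-weight-$1$ strings, and $x'$ agrees with $x$ on $S(x)$ by construction, so $\mathcal A$ is forced to output the same (now incorrect) answer. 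If $j \in S(x)$ but some $j' < j$ is not queried, I would instead take $x'$ to be a Hamming-weight-$2$ input with $1$s at $j'$ and $j$ (dividing index $j$), which lies in the support since $p_+ > 0$ and still agrees with $x$ on $S(x)$. Either subcase contradicts correctness, so $|S(x)| \geq j$.

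For finding both $1$-valued bits, the in-order strategy uses exactly $d$ queries on an input with dividing index $d$, so it suffices to prove the optimal expected cost is at least $E[d] - 1$. Since the algorithm is adaptive, I would split its run into Phase~$1$ (while all observed bits are $0$, the queries follow a fixed permutation $\pi$ of $[n]$ determined by $\mathcal A$) and Phase~$2$ (after the first $1$ is seen). A rearrangement argument, exploiting the monotone decrease of $p(E_j^*) \propto (j-1)^k$ and the conditional uniformity of the uniform $1$-position $i_1$ on $\{1,\ldots,d-1\}$, shows that $\pi$ is (up to additive $O(1)$) optimized by the identity, giving expected Phase-$1$ length $E[i_1] + O(1) = E[d]/2 + O(1)$. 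A second rearrangement argument for Phase~$2$, applied to the posterior distribution on the remaining $1$'s position after conditioning on Phase-$1$'s observations, shows that at least $E[d - i_1] + O(1) = E[d]/2 + O(1)$ further queries are needed on average. Summing the two phases yields an expected total of at least $E[d] - 1$, matching the in-order cost to within $1$.

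The main obstacle will be the Phase-$2$ analysis. The algorithm enters Phase~$2$ with partial information, namely the identity of the position $q$ where the first $1$ was seen together with the fact that every previously queried position returned $0$, and it may adaptively reorder its subsequent queries based on $q$. Establishing near-optimality of essentially sequential search in Phase~$2$ requires verifying that, on its restricted support, the posterior over the unseen $1$-position retains the monotone decreasing structure needed for the rearrangement inequality. This is the technical core of the lower bound and hinges on the specific power-law form of $p(E_j^*)$ together with the explicit conditional structure of the distribution described around \cref{eq:distribution_description}.
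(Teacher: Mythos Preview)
Your adversary argument for the first-$1$ problem has a genuine gap. The claim that any correct deterministic algorithm must query all of $\{1,\ldots,j\}$ on an input whose first $1$ sits at position $j$ is false, because the support is constrained to Hamming weight at most $2$: once an algorithm observes two $1$s it can certify every remaining position is $0$ without querying it. Concretely, take $n=4$, $p_+>0$, and $x=0011$ (dividing index $4$, uniform $1$ at position $3$, so $j=3$). An algorithm that first queries position $4$, then position $3$, sees two $1$s and correctly outputs $3$ after only two queries, with $S(x)=\{3,4\}\not\supseteq\{1,2,3\}$. Your proposed fooling input $x'$ would have to agree with $x$ on $\{3,4\}$, hence carry $1$s at both $3$ and $4$, hence (by the weight bound) equal $x$ itself. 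More generally, whenever $x$ has weight $2$ and the dividing-index bit has already been queried, any $x'$ you build by relocating a $1$ will disagree with $x$ at that queried position. So neither the set containment $\{1,\dots,j\}\subseteq S(x)$ nor the cardinality bound $|S(x)|\ge j$ holds per input; in-order is \emph{not} exactly optimal, and the ``within one query'' statement genuinely requires a distribution-level argument, not an input-by-input adversary.

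For finding both $1$s, the same shortcut bites your two-phase sketch: on $x=0011$ the back-to-front algorithm finds both $1$s in $2$ queries versus in-order's $4$, so no per-input rearrangement inequality can work, and the Phase-$2$ posterior analysis you flag as the obstacle is indeed where all the content lies. The paper's route is different and avoids phases altogether: for an arbitrary set $S$ of unqueried indices (all previously queried bits being $0$), it computes $p(E_i\mid S)$ explicitly via Bayes' rule and the power-law form of $p(E_j^*)$, and shows this conditional probability is strictly decreasing in $i$ for $i\ge s_2$ together with $p(E_{s_1}\mid S)\ge p(E_{s_2}\mid S)$. This pins the greedy choice to $s_1$ or $s_2$ at every step of the search for the first $1$, and once a $1$ is found the residual problem is governed directly by the monotone power law; the ``within one query'' conclusion then falls out because choosing $s_2$ before $s_1$ costs at most one extra query over strict in-order. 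To repair your approach you would need to carry out this posterior computation, or else devise an adversary that respects the Hamming-weight-$\le 2$ constraint on the support.
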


\begin{proof}
We first show that as long as no $1$-valued bit has yet been found in the string $x$, 
the most likely place for a $1$-valued bit to be found is at the unqueried bits with the smallest or second smallest indices.

Let $S$ be the set of un-queried indices of $x$, and as just stated, we know
 that $x_i=0$ for all $i\notin
S$. Let $s_l$ be
the $l^\textrm{th}$ smallest element of $S$.

We will show that for $i,i'\in S$, if $s_1<i<i'$ then
\begin{equation}
p(E_i|S)-p(E_{i'}|S)>0,
\end{equation}
where $p(E_i|S)$ is the probability that $x_i=1$, given that $S$ are the unqueried indices (and by assumption all other indices are queried and 
have been found to have value $0$.) This implies that the most likely
place to find a bit with value $1$ among all unqueried indices except $s_1$ is $s_2$, regardless of which prior queries have been made.

Summing over all possible locations of dividing index, we have
\begin{align}\label{eq:cond_sum_prob}
p(E_i|S)&=\sum_{j=1}^np(E_j^*|S)p(E_i|E_j^*,S)\nonumber\\
&=\sum_{j\geq i}p(E_j^*|S)p(E_i|E_j^*,S).
\end{align}
where we have used \cref{eq:distribution_description} in the second line.
Note that $p(E_i|E_j^*,S)$ takes the same value for all $i\in S$ such that 
$i<j$, because the first $1$-valued bit is uniformly distributed over all indices less than the dividing index.
Thus when we analyze $p(E_i|S)-p(E_{i'}|S)$, we get a cancellation of terms
in the summation, giving us, for $i>s_1$,
\begin{align}\label{eq:prob_diff}
p(E_i|S)-p(E_{i'}|S)=p_+\left(p(E_i^*|S)-p(E_{i'}^*|S)\right)+\sum_{j:i'\geq j>i}p(E_j^*|S)p(E_i|E_j^*,S),
\end{align}
where we have replaced $p(E_i|E_i^*,S)$ and $p(E_{i'}|E_{i'}^*,S)$ with $p_+$ by \cref{eq:distribution_description}. Using Bayes' Theorem, we can write $p(E_j^*|S)$ as
\begin{equation}\label{eq:Bayes1}
p(E_j^*|S)=\frac{p(S|E_j^*)p(E_j^*)}{p(S)}.
\end{equation}

We first analyze $p(S|E_j^*)$. 
When the dividing index is $j$, since a priori, the first $1$-valued bit 
is equally distributed among the $j-1$ prior indices, the probability that
all bits with index less than $j$, except those in $S$, have value $0$ is
\begin{equation}\label{eq:cond_prob_part}
\frac{\left|\left\{l\in S: l<j\right\}\right|}{j-1},
\end{equation}
while the probability that $x_j$ has value $0$ is $(1-p_+)$. Since the likelihood of the dividing index bit being $0$ or $1$ is independent from the probability of prior bits being $0$ or $1$, we have
\begin{align}\label{eq:cond_prob_full}
p(S|E_j^*)=
\begin{cases}
\frac{\left|\left\{l\in S: l<i\right\}\right|}{j-1} &\textrm{ if } j\in S\\
\frac{(1-p_+)\left|\left\{l\in S: l<j\right\}\right|}{j-1} &\textrm{ if } j\notin S
\end{cases}
.
\end{align}
Thus using \cref{eq:cond_prob_full} and the power law distribution of $p(E_i^*)$, we can rewrite \cref{eq:Bayes1} as
\begin{align}\label{eq:Bayes_plug_in}
p(E_j^*|S)=
\begin{cases}
\frac{A_{n}\left|\left\{l\in S: l<j\right\}\right|(j-1)^{k-1}}{p(S)}&\textrm{ if } j\in S\\
\frac{A_{n}(1-p_+)\left|\left\{l\in S: l<j\right\}\right|(j-1)^{k-1}}{p(S)} &\textrm{ if } j\notin S
\end{cases}
.
\end{align}
By a similar argument as in \cref{eq:cond_prob_part}, we have that for $i<j,$
\begin{equation}\label{eq:cond_prob_2}
p(E_i|E_j^*,S)=\frac{1}{\left|\left\{l\in S: l<j\right\}\right|}.
\end{equation}

Plugging \cref{eq:Bayes_plug_in} and \cref{eq:cond_prob_2}, into \cref{eq:prob_diff} we have
\begin{align}\label{eq:analyzed_prob_diff}
p(E_i|S)-p(E_{i'}|S)=&
\frac{A_{n}}{p(S)}\left( p_+\left(\left|\left\{l\in S: l<i\right\}\right|(i-1)^{k-1}-\left|\left\{l\in S: l<i'\right\}\right|(i'-1)^{k-1}\right)\right.\nonumber\\
&\left.+\sum_{j\in S:i'\geq j>i}(j-1)^{k-1}+\sum_{j\notin S:i'\geq j>i}(1-p_+)(j-1)^{k-1}\right)
\nonumber\\
>&
\frac{A_{n}(i'-1)^{k-1}}{p(S)}
\left(p_+\left(\left|\left\{l\in S: l<i\right\}\right|-
\left|\left\{l\in S: l<i'\right\}\right|\right)+\left(\sum_{j\in S:i'\geq j>i}1\right)\right)
\nonumber\\
\geq & 0
\end{align}
where the first inequality comes from replacing all $(j-1)^{k-1}$ and 
$(i-1)^{k-1}$ terms with the smaller term $(i'-1)^{k-1}$. Therefore, the
probability of finding a $1$-valued bit is always higher at smaller
indices (up to the second to smallest unqueried index.)

Next we show
\begin{equation}
p(E_{s_1}|S)\geq p(E_{s_2}|S).
\end{equation}
Note that $p(E_{s_1}^*|S)=0$, since all bits with index less than $s_1$ have been queried and found to value $0$, so $s_1$
can not be the location of the dividing index because there must be a $1$-valued bit with index smaller than the dividing index. Thus we
modify \cref{eq:cond_sum_prob} to get
\begin{equation}
p(E_{s_1}|S)=\sum_{j\in S:j>i}p(E_j^*|S)p(E_i|E_j^*,S).
\end{equation}
Then using a similar analysis as in \cref{eq:prob_diff}, we have
\begin{align}\label{eq:prob_diff_first_two}
p(E_{s_1}|S)-p(E_{s_2}|S)=&
\frac{A_{n}}{p(S)}\left(-p_+\left|\left\{l\in S: l<s_2\right\}\right|(s_2-1)^{k-1}\right.\nonumber\\
&\left.+\sum_{j\in S:s_2\geq j>s_1}(j-1)^{k-1}+\sum_{j\notin S:s_2\geq j>s_1}(1-p_+)(j-1)^{k-1}\right)
\nonumber\\
\geq &
\frac{A_{n}}{p(S)}\left( -p_+\left|\left\{l\in S: l<s_2\right\}\right|(s_2-1)^{k-1}+(s_2-1)^{k-1}\right)
\nonumber\\
\geq& 0,
\end{align}
where in the second line, we have used that the second summation is non-negative, and the first summation only contains one term, $j=s_2.$ 
In the final line, we used that $\left|\left\{l\in S: l<s_2\right\}\right|=1$, since $s_1$ is the only element of $S$ with value less than $s_2$. Note
that there is only equality in the last line when $p_+=1.$

Thus combining \cref{eq:analyzed_prob_diff,eq:prob_diff_first_two}, we have that for $i>s_2$,
\begin{equation}
p(E_i|S)>p(E_{s_2}|S)\geq p(E_{s_1}|S),
\end{equation}
where again the final inequality can only be tight if $p_+=1$

Therefore the best strategy for a classical algorithm to find a $1$-valued bit is to always query the first or second unqueried index until a $1$ is found.
If the initial $1$ is found at the first unqueried index, then the algorithm
has found the first non-zero bit. If the algorithm needs to find an
additional $1$-valued bit, the probability of finding a $1$ at any later
index is given by the power law distribution, so the best strategy is to
query the remaining bits in order.
If the initial $1$ is found at the second as-yet-unqueried index, then with
one additional query, the algorithm can query the first as-yet-unqueried
index, to determine if the first $1$-valued bit is there. If it is, the
algorithm has found all $1$-valued inputs, including the first $1$-valued input. If it is not, then the algorithm has found the first $1$-valued input, and 
the probability of finding the next $1$ at any later index is given by the power
series distribution, so the best strategy is to query the remaining bits in
order.

Thus, even if we find a $1$ in the second
unqueried position, an asympototically optimal strategy is then to go back to
the first unqueried position and query it, since that only adds $1$ extra query to the complexity. This modified strategy with the addition of querying this extra bit, is equivalent in query complexity, if not worse than, querying the bits of the string in order. 
Thus without loss of generality, we can
assume that the optimal classical strategy to find either the first
$1$-valued bit, or both $1$-valued bits is to query the bits of the string in
order, as the algorithm does not get an asymptotic advantage (the difference is at most one query in the worst case).
\end{proof}

\begin{lemma}\label{lem:sum_calculations}
Given the distribution of bit strings as described around \cref{eq:distribution_description}, if $E_i^\dagger$ is the event that $i$ 
is the index of the first $1$-valued bit, then
\begin{align}
&\sum_{i=2}^np(E_i^*)\sqrt{i}=O(1)\label{eq:quantum_query_bound_star}\\
&\sum_{i=2}^np(E_i^*)i=\Omega(n^{k+2}\label{eq:classical_query_bound_star})\\
&\sum_{i=1}^{n-1}p(E_i^\dagger)\sqrt{i}=O(1) \label{eq:quantum_query_bound}\\
&\sum_{i=1}^{n-1}p(E_i^\dagger)i=\Omega(n^{k+2})\label{eq:classical_query_bound}.
\end{align}
\end{lemma}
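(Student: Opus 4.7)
The plan is to first compute the normalizing constant $A_n$, then evaluate each of the four sums by comparing to standard $p$-series / integral approximations. By definition $A_n = \bigl(\sum_{i=2}^n (i-1)^k\bigr)^{-1} = \bigl(\sum_{m=1}^{n-1} m^k\bigr)^{-1}$; because $k<-1$, the series $\sum_{m\geq1} m^k$ converges to a positive constant, so $A_n = \Theta(1)$ uniformly in $n$. I would record this at the top as \cref{eq:An} and use it throughout.

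For the two $E_i^*$ sums, I would plug in $p(E_i^*)=A_n(i-1)^k$ and reduce to the $p$-series $\sum_{m=1}^{n-1} m^\beta$ for appropriate $\beta$. For \cref{eq:quantum_query_bound_star}, the bound $(i-1)^k\sqrt{i} \leq 2^{|k|}(i-1)^{k+1/2}$ for $i\geq 2$ reduces matters to $\sum_{m=1}^{n-1} m^{k+1/2}$, which converges since $k+1/2 < -1$, yielding $O(1)$. For \cref{eq:classical_query_bound_star}, the analogous comparison gives $\sum_{m=1}^{n-1} m^{k+1}$ (up to constants), and since $k+1 > -1$, a standard integral comparison (bounding $\sum m^{k+1}$ below by $\int_1^n x^{k+1} dx = \tfrac{n^{k+2}-1}{k+2}$) gives $\Omega(n^{k+2})$.

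For the $E_i^\dagger$ sums, the main preliminary step is to write $p(E_i^\dagger)$ explicitly. Since the first $1$-valued bit is at index $i$ exactly when the dividing index is some $j>i$ and the uniformly chosen $1$ among positions $\{1,\dots,j-1\}$ lands on $i$, we have
\begin{equation}
p(E_i^\dagger) \;=\; \sum_{j=i+1}^{n} p(E_j^*)\cdot\frac{1}{j-1} \;=\; A_n \sum_{m=i}^{n-1} m^{k-1}.
\end{equation}
Since $k-1 < -1$, integral comparison of $\sum_{m\geq i} m^{k-1}$ with $\int_i^\infty x^{k-1}dx = i^k/|k|$ gives $p(E_i^\dagger) = \Theta(i^k)$ for $i\geq 1$ (with a constant absorbing the $i=1$ boundary case, using $A_n=\Theta(1)$). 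Plugging this in, \cref{eq:quantum_query_bound} becomes (up to constants) $\sum_{i=1}^{n-1} i^{k+1/2}$, which converges to $O(1)$ since $k+1/2<-1$; and \cref{eq:classical_query_bound} becomes $\sum_{i=1}^{n-1} i^{k+1} = \Omega(n^{k+2})$ exactly as in the $E_i^*$ analysis.

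The only place that requires any care is the asymptotic evaluation $p(E_i^\dagger)=\Theta(i^k)$: one must verify that the tail $\sum_{m\geq n} m^{k-1}$ removed by truncating at $n-1$ does not dominate, and that the implicit constants are independent of $n$. Both follow from the standard estimate $\int_i^\infty x^{k-1}dx = i^k/|k|$ together with the monotonicity of $x^{k-1}$ on $[1,\infty)$. Everything else is a two-line $p$-series comparison, so I do not expect any real obstacle beyond keeping track of which exponents fall on which side of $-1$.
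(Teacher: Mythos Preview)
Your approach is correct and essentially identical to the paper's: pin down $A_n=\Theta(1)$, reduce the $E_i^*$ sums to $p$-series with exponents $k+1/2$ and $k+1$, and for $E_i^\dagger$ write $p(E_i^\dagger)=A_n\sum_{m=i}^{n-1}m^{k-1}$ and compare to $i^k$ via an integral.

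One small caveat: your assertion that $p(E_i^\dagger)=\Theta(i^k)$ holds \emph{uniformly} in $i$ is not quite right. The upper bound $p(E_i^\dagger)=O(i^k)$ is uniform (which is all you need for \cref{eq:quantum_query_bound}), but the lower bound fails when $i$ is within a constant factor of $n$: e.g.\ $p(E_{n-1}^\dagger)=A_n(n-1)^{k-1}=\Theta(n^{k-1})\ll n^k$. The truncated tail you flag genuinely does dominate there. This is easy to repair for \cref{eq:classical_query_bound}: either restrict the sum to $i\le n/2$, where $i^k-n^k\ge (1-2^k)i^k$ gives a uniform lower bound and $\sum_{i\le n/2} i^{k+1}=\Omega(n^{k+2})$ already; or, as the paper does, keep the exact lower bound $p(E_i^\dagger)\ge \tfrac{A_n}{|k|}(i^k-n^k)$ and sum the two pieces separately. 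Either route closes the gap in two lines.
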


\begin{proof}
We first bound $A_{n}$ where recall $A_{n}$ is the normalization factor such that $p(E_i^*)=A_{n}(i-1)^k$ and $\sum_{i=1}^{n}p(E_i^*)=1.$
From \cite{montanaro2010quantum}, we have that
\begin{align}
\frac{A_{n-1}}{k}\left(k+(n-1)^k-1\right)\geq A_{n}\geq& \frac{k+1}{(n-1)^{k+1}+k}\qquad .
\end{align}
Because $-2<k<-3/2,$ this tells us that $A_{n}=\Theta(1).$

We first prove \cref{eq:quantum_query_bound_star}. We have
\begin{align}
\sum_{i=2}^np(E_i^*)\sqrt{i}&=A_{n}\sum_{i=2}^n(i-1)^k\sqrt{i}\nonumber\\
&\leq A_{n}\left(\sqrt{2}+\int_{i=3}^{n+1}(i-2)^k\sqrt{i-1}di\right)\nonumber\\
&\leq A_{n}\left(\sqrt{2}+2\int_{i=3}^{n+1}(i-2)^{k+1/2}di\right)\nonumber\\
&\leq A_n\left(\sqrt{2}+\frac{4(n-1))^{k+3/2}}{3+2k}-\frac{4}{3+2k}\right)\nonumber\\
&=O(1).
\end{align}

For \cref{eq:classical_query_bound_star}, we have
\begin{align}
\sum_{i=2}^np(E_i^*)i&=A_{n}\sum_{i=2}^n(i-1)^ki\nonumber\\
&\geq A_{n}\int_{i=2}^{n+1}i^{k+1}di\nonumber\\
&= \frac{A_{n}}{k+2}\left((n+1)^{k+2}-2^{k+2}\right)\nonumber\\
&=\Omega\left(n^{k+2}\right)
\end{align}

To prove \cref{eq:classical_query_bound}, we first lower bound $p(E_i^\dagger)$
\begin{align}
p(E_i^\dagger)=&\sum_{j>i}p(E_j^*)p(E_i|E_j^*)\nonumber\\
&=A_{n}\sum_{j>i}(j-1)^{k-1}\nonumber\\
&>A_{n}\int_{i+1}^{n+1}(j-1)^{k-1}dj\nonumber\\
&=\frac{A_{n}}{k}\left(n^k-i^k\right),
\end{align}
where we've used \cref{eq:distribution_description} in the second line.
Thus 
\begin{align}
\sum_{i=1}^{n-1}p(E_i^\dagger)i&>A_{n}\sum_{i=1}^{n-1}\left(n^ki-i^{k+1}\right)\nonumber\\
&> \frac{A_{n}}{k}\left(\frac{n^{k+1}(n-1)}{2}-\int_1^{n}i^{k+1}di\right)\nonumber\\
&> \frac{A_{n}}{k}\left(\frac{n^{k+1}(n-1)}{2}-\frac{n^{k+2}-1}{k+2}\right)\nonumber\\
&> \frac{A_{n}}{2k(k+2)}\left(n^{2+k}k-(k+2)n^{k+1}+2\right)\nonumber\\
&=\Omega(n^{2+k}).
\end{align}

To prove \cref{eq:quantum_query_bound}, we upper bound $p(E_i^\dagger)$. We first do this for $i>1$:
\begin{align}
\label{eq:porpoise}
p(E_i^\dagger)=&\sum_{j>i}p(E_j^*)p(E_i|E_j^*)\nonumber\\
&=A_{n}\sum_{j>i}(j-1)^{k-1}\nonumber\\
&<A_{n}\left(\int_{i+1}^{n+1}(j-2)^{k-1}dj\right)\nonumber\\
&=\frac{A_{n}}{k}\left((n-1)^k-(i-1)^k\right)\nonumber\\
&\leq \frac{A_{n}}{k}\left(-(i-1)^k\right).
\end{align}
For $i=1$, we obtain
\begin{align}\label{eq:dolphin}
p(E_1^\dagger)=&\sum_{j>1}p(E_j^*)p(E_1|E_j^*)\nonumber\\
&=A_{n}\sum_{j>1}(j-1)^{k-1}\nonumber\\
&<A_{n}\left(1+\int_{3}^{n+1}(j-2)^{k-1}dj\right)\nonumber\\
&=\frac{A_{n}}{k}\left(k+(n-1)^k-1\right)\nonumber\\
&\leq \frac{A_{n}}{k}(k-1)
\end{align}

We combine \cref{eq:porpoise,eq:dolphin} to get
\begin{align}
\sum_{i=1}^np(E_i^\dagger)\sqrt{i}=&\frac{A_{n}}{k}\left(k-1-\sum_{i=2}^n(i-1)^k\sqrt{i}\right)\nonumber\\
\leq &\frac{A_{n}}{k}\left(k-1-2\sum_{i=2}^n(i-1)^k\sqrt{i-1}\right)\nonumber\\
\leq &\frac{A_{n}}{k}\left(k-1-2\sum_{i=1}^{n-1}i^{k+1/2}\right)\nonumber\\
\leq &\frac{A_{n}}{k}\left(k-1-2\int_{i=2}^n(i-1)^{k+1/2}di\right)\nonumber\\
\leq &\frac{A_{n}}{k}\left(k-1-4\left(\frac{(n-1)^{k+3/2}-1}{3+2k}\right)\right)\nonumber\\
=&O(1),
\end{align}
proving \cref{eq:quantum_query_bound}.
\end{proof}

\end{document}